\begin{document}

\newtheorem{theorem}{Theorem}
\newtheorem{lemma}{Lemma}
\newtheorem{definition}{Definition}

\title{Gathering with extremely restricted visibility}
\author{Rachid Guerraoui and Alexandre Maurer\\
EPFL}
\date{}
\maketitle

\begin{abstract}
We consider the classical problem of making mobile processes gather or converge at a same position (as performed by swarms of animals in Nature).
Existing works assume that each process can see all other processes, or all processes within a certain radius.

In this paper, we introduce a new model with an extremely restricted visibility: each process can only see \emph{one} other process (its closest neighbor). Our goal is to see if (and to what extent) the gathering and convergence problems can be solved in this setting. We first show that, surprisingly, the problem can be solved for a small number of processes (at most 5), but not beyond. This is due to indeterminacy in the case where there are several ``closest neighbors'' for a same process. By removing this indeterminacy with an additional hypothesis (choosing the closest neighbor according to an order on the positions of processes), we then show that the problem can be solved for any number of processes.
We also show that up to one crash failure can be tolerated for the convergence problem.

\end{abstract}

\section{Introduction}

An interesting natural phenomenon is the ability of swarms of simple individuals to form complex and very regular patterns: swarms of fishes \cite{x1}, birds \cite{x2}, ants \cite{x3}... They do so in a totally distributed manner, without any centralized or irreplaceable leader. Such behaviors are a great source of inspiration for distributed computing.

Problems of \emph{pattern formation} have been extensively studied by the distributed computing community \cite{u6,u7,z1,z2}. In order to prove mathematical results, the model is of course simplified: the individuals (\emph{processes}) are usually geometric points in a Euclidean space. A famous example is the circle formation algorithm by Suzuki and Yamashita \cite{u7}.

In particular, a pattern formation problem which has been extensively studied is the \emph{gathering} problem \cite{u2,u3,u4,b1,b4}: processes must gather at a same point in a finite time. This apparently simple problem can become surprisingly complex, depending on the model and hypotheses: scheduler, symmetry, computational power, memory, orientation... When gathering is impossible, a close problem is the \emph{convergence} problem \cite{u5,b2}: processes must get always closer to a same point.

One of these hypotheses is \emph{visibility}. Most pattern formation papers assume \emph{unlimited} visibility \cite{u1,u2,u3,u4,u5,u6,u7}: processes have a global view of the position of other processes. Some papers assume a \emph{limited} visibility \cite{b1,b2,b3,b4}: processes can only see other processes within a certain radius.

However, even with a limited visibility, each process is supposed to analyze the position of several neighbor processes at each computing step. This leads us to the following question: what is the simplest hypothesis we could make on visibility?

In this paper, we assume that each process can only see its closest neighbor (i.e., the closest other process), and ignores the total number of processes. To our knowledge, no paper has yet considered such a minimalist setting. We then study to what extent the gathering and convergence problems can be solved in this setting. We assume a synchronous scheduler and memoryless processes that cannot communicate with messages.

There is an indeterminacy in the case where there are several ``closest neighbors'' (i.e., two or more processes at the same distance of a given process). We first assume that, in this situation, the closest neighbor is arbitrarily chosen by an external adversary (worst-case scenario).

In this scenario, we show that, surprisingly, the problems can only be solved for a small number of processes. More precisely, if $n$ is the number of processes and $d$ is the number of dimensions of the Euclidean space, then the gathering (resp. convergence) problem can be solved if and only if $d = 1$ or $n \leq 2$ (resp. $d = 1$ or $n \leq 5$). Indeed, for larger values of $n$, there exists initial configurations from which gathering or convergence is impossible, due to symmetry.
The proof is constructive: for the small values of $n$, we provide an algorithm solving the problems. The proof is non-trivial for $n = 4$ and $n = 5$, as several families of cases need to be considered.

Therefore, to solve the problems for larger values of $n$, one additional hypothesis must necessarily be added. We remove the aforementioned indeterminacy by making the choice of the closest neighbor (when there is more than one) deterministic instead of arbitrary (according to an order on the positions of processes). Then, we show that the gathering problem is always solved in at most $n - 1$ steps by a simple ``Move to the Middle'' (MM) algorithm.

We finally consider the case of crash failures,
where at most $f$ processes lose the ability to move.
We show that the gathering (resp. convergence) problem can only be solved when $f = 0$ (resp. $f \leq 1$). When the convergence problem can be solved, the MM algorithm solves it.

Beyond this first work, we believe that this minimalist model can be the ground for many other interesting results.

The paper is organized as follows.
In Section~\ref{secmodel}, we define the model and the problems.
In Section~\ref{secalgo}, we characterize the class of algorithms allowed by our model, and define a simple algorithm to prove the positive results. 
In Section~\ref{seclb}, we prove the aforementioned lower bounds.
In Section~\ref{secpos}, we remove indeterminacy and show that the gathering problem can be solved for any $n$.
In Section~\ref{secft}, we consider the case of crash failures.
We conclude in Section~\ref{sec_conc}.

\section{Model and problems}
\label{secmodel}

\paragraph{Model.} We consider a Euclidean space $S$ of dimension $d$ ($d \geq 1$).
The position of each point of $S$ is described by $d$ coordinates $(x_1,x_2,\dots,x_d)$ in a Cartesian system.
For two points $A$ and $B$ of coordinates $(a_1,\dots,a_d)$ and $(b_1,\dots,b_d)$,
let $d(A,B) = \sqrt{\Sigma_{i=1}^{i=d} (a_i - b_i)^2}$ be the distance between $A$ and $B$.

Let $P$ be a set of $n$ processes. $\forall p \in P$, let $M_p$ be the position of $p$ in $S$. Let $\Omega$ be the set of positions occupied by the processes of $P$. As several processes can share the same position, $1 \leq |\Omega| \leq |P|$.
The time is divided in discrete steps $t \in \{0,1,2,3,\dots\}$.

If $|\Omega| = 1$, the processes are \emph{gathered} (they all have the same position).
If $|\Omega| \geq 2$, $\forall p \in P$, let $D(p) = \min_{K \in \Omega - \{M_p\}} d(M_p,K)$,
and let $N(p)$ be the set of processes $q$ such that $d(M_p,M_q) = D(p)$. At a given time $t$, the \emph{closest neighbor} of a process $p$ is a process of $N_p$ arbitrarily chosen by an external adversary. We denote it by $C(p)$.

We consider a synchronous execution model. 
At a given time $t$, a process $p$ can only see $M_p$ and $M_{C(p)}$ (without global orientation), and use these two points to compute a new position $K$. Then, the position of $p$ at time $t+1$ is $K$.

The processes are oblivious (they have no memory), mute (they cannot communicate) and anonymous (they cannot distinguish each other with identifiers). Note that this model does not assume multiplicity detection (the ability to count the processes at a same position). The processes do not know $n$. At $t = 0$, the $n$ processes can have any arbitrary positions.

\paragraph{Problems.} For a given point $G \in S$ and a given constant $\epsilon$, we say that the processes are $(G,\epsilon)$-gathered if, $\forall M \in \Omega$,
$d(G,M) \leq \epsilon$.

An algorithm solves the \emph{convergence} problem if, for any initial configuration, there exists a point $G \in S$ such that, $\forall \epsilon > 0$, there exists a time $T$ such that the processes are $(G,\epsilon)$-gathered $\forall t \geq T$.

An algorithm solves the \emph{gathering} problem if, for any initial configuration, there exists a point $G$ and a time $T$ such that the processes are $(G,0)$-gathered $\forall t \geq T$.

\section{Algorithm}
\label{secalgo}

In this section, we describe all possible algorithms that our model allows. Doing so enables us to show lower bounds further -- that is, showing that \emph{no algorithm} can solve some problems in our model.
This is not to confuse with the $MM$ algorithm (a particular case, defined below), which is only used to prove positive results.

Here, an algorithm consists in determining, for any process $p$, the position of $p$ at the next step, as a function of $M_p$ and $M_{C(p)}$.

First, let us notice that, if the processes are gathered ($|\Omega| = 1$), the processes have no interest in moving anymore.
This corresponds to the case where each process cannot see any ``closest neighbor''. Thus, we assume that any algorithm is such that, when a process $p$ cannot see any closest neighbor, $p$ does not move.

Now, consider the case where the processes are not gathered $(|\Omega| \geq 2)$. Let $p$ be the current process, let $D = D(p)$, and let $\vec{x}$ be the unit vector ($ || \vec{x} || = 1$)
directed from $M_p$ to $M_{C(p)}$. There are $2$ possible cases.

\paragraph{Case 1: $d = 1$.}

The next position of $p$ is
$M_p + f_x(D)\vec{x}$,
where $f_x$ is an arbitrary function.

\paragraph{Case 2: $d \geq 2$.}

Let $\Delta$ be the axis defined by $M_p$ and $M_{C(p)}$.
If $d \geq 2$, as there is no global orientation of processes ($M_p$ can only position itself relatively to $M_{C(p)}$), the next position of $p$ can only be determined by (1) its position on axis $\Delta$ and (2) its distance to $\Delta$. The difference here is that, for two given parameters (1) and (2), there are several possible positions ($2$ positions for $d = 2$, an infinity of positions for $d \geq 3$).
Thus, we assume that the next position (among these possible positions) is arbitrarily chosen by an external adversary.

More formally, the next position of $p$ is $M_p + f_x(D)\vec{x} + f_y(D)\vec{y}$,
where $f_x$ and $f_y$ are arbitrary functions, and where $\vec{y}$ is a vector orthogonal to $\vec{x}$ which is arbitrarily chosen by an external adversary.

\paragraph{Move to the Middle (MM) algorithm.}

We finally define one particular algorithm to show some upper bounds. The Move to the Middle (MM) algorithm consists, for each process $p$ and at each step, in moving to the middle of the segment defined by $M_p$ and $M_{C(p)}$.

More formally, if $d = 1$, the MM algorithm is defined by $f_x(D) = D/2$.
If $d \geq 2$, the MM algorithm is defined by $f_x(D) = D/2$ and $f_y(D) = 0$.

\section{Lower bounds}
\label{seclb}

In this section, we show the two following results.

\begin{theorem}
The gathering problem can be solved if and only if $d = 1$ or $n \leq 2$. When it can be solved, the MM algorithm solves it.
\label{th_lb_gath}
\end{theorem}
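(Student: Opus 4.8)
The plan is to prove the two directions simultaneously, using \emph{MM} for the positive side so that the same argument establishes both solvability and the fact that \emph{MM} solves it. Concretely: (i) if $n \le 2$, then in \emph{any} dimension \emph{MM} gathers; (ii) if $d = 1$, then for \emph{any} $n$ \emph{MM} gathers; (iii) if $d \ge 2$ and $n \ge 3$, then no algorithm gathers. Part (i) is immediate: for $n = 1$ the processes are already gathered, and for $n = 2$ with $|\Omega| = 2$ each process is the other's unique closest neighbour, so both jump to the common midpoint and $|\Omega| = 1$ after one step.

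For (ii), I would take $d = 1$ and any configuration with $m = |\Omega| \ge 2$ distinct occupied positions, sorted as $a_1 < a_2 < \dots < a_m$, and track $m$. The occupied position closest to $a_i$ on the left is $a_{i-1}$ and on the right is $a_{i+1}$, so the closest neighbour of any process located at $a_i$ sits at $a_{i-1}$ or $a_{i+1}$ (the adversary only decides which in case of a tie), and under \emph{MM} that process moves to $(a_{i-1}+a_i)/2$ or $(a_i+a_{i+1})/2$. Hence after one step every occupied position lies in the set $\{(a_1+a_2)/2, (a_2+a_3)/2, \dots, (a_{m-1}+a_m)/2\}$, which has $m-1$ elements, so $|\Omega|$ strictly decreases at each step until it reaches $1$. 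This gives gathering in at most $n-1$ steps, for every initial configuration and against every tie-breaking adversary.

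For (iii), assume $d \ge 2$ and $n \ge 3$ and place the $n$ processes at the vertices $v_0,\dots,v_{n-1}$ of a regular $n$-gon inside a fixed $2$-plane of $S$, centred at a point $O$; let $R$ be the in-plane rotation by $2\pi/n$ about $O$, so $R v_j = v_{j+1}$ (indices mod $n$). I would show by induction on the step that the adversary can force the configuration to stay forever a regular $n$-gon centred at $O$ of positive radius, which keeps $|\Omega| = n \ge 2$ and hence forbids gathering. Given such a configuration, $v_{j+1}$ is among the occupied positions closest to $v_j$, so the adversary declares the closest neighbour of the process at $v_j$ to be the one at $v_{j+1}$ for every $j$; writing $\vec x_j$ for the unit vector from $v_j$ to $v_{j+1}$ we have $R\vec x_j = \vec x_{j+1}$. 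By Section~\ref{secalgo} an algorithm then moves the process at $v_j$ to $v_j + f_x(D)\vec x_j + f_y(D)\vec y_j$, with $f_x(D),f_y(D)$ independent of $j$ and the adversary choosing each $\vec y_j$ orthogonal to $\vec x_j$; the adversary restricts to the two equivariant in-plane choices $\pm\vec y_j$ obtained by rotating $\vec x_j$ by $\pm 90^\circ$. The crucial observation is that the adversary can pick the sign so that $v_0' := v_0 + f_x(D)\vec x_0 + f_y(D)\vec y_0 \ne O$: if both signs gave $v_0' = O$, subtracting forces $f_y(D) = 0$, and then $v_0'$ lies on the line through $v_0$ and $v_1$, which, being the line of a side of a regular $n$-gon, is at positive distance from $O$; so $v_0' \ne O$ either way, a contradiction. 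Making this choice equivariantly ($\vec y_j = R^j\vec y_0$) yields $v_j' = R^j v_0'$, and since $v_0' \ne O$ while $R$ fixes only $O$, the points $v_0',\dots,v_{n-1}'$ are again $n$ distinct vertices of a regular $n$-gon centred at $O$ of positive radius. (For $d \ge 3$ the adversary simply keeps every $\vec y_j$ in the original plane.)

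The step I expect to be the main obstacle is exactly this ``crucial observation'' in (iii): a priori a cleverly designed algorithm could aim every process straight at $O$ and collapse a symmetric configuration in one move, and this possibility must be excluded. The resolution is that it is the \emph{adversary}, not the algorithm, who controls the component $\vec y_j$ orthogonal to the neighbour direction, and that no side-line of a regular polygon passes through its centre; together these leave the adversary enough freedom to keep some process away from $O$ while preserving rotational symmetry. Everything else — pinning down the closest-neighbour sets of a regular $n$-gon (the cases $n = 3$ and $n \ge 4$ differ slightly) and the counting estimate on the line — should be routine.
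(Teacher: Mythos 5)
Your proposal is correct and follows essentially the same route as the paper: the $d=1$ argument (consecutive midpoints give at most $m-1$ occupied positions, so $|\Omega|$ strictly decreases) and the $n\le 2$ cases are identical, and your impossibility argument for $d\ge 2$, $n\ge 3$ is the paper's construction with a regular $n$-gon in place of an equilateral triangle (the paper stacks the $n\ge 3$ processes on three vertices of an equilateral triangle), resting on the same two key points: the adversary assigns closest neighbours cyclically and controls the component orthogonal to the neighbour axis equivariantly, and no side line of the polygon passes through the centre, so the symmetric configuration can never collapse to a single point. I see no gap.
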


\begin{theorem}
The convergence problem can be solved if and only if $d = 1$ or $n \leq 5$. When it can be solved, the MM algorithm solves it.
\label{th_lb_conv}
\end{theorem}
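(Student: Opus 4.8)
The plan is to prove Theorem~\ref{th_lb_conv} in two directions. For the positive direction, I would first handle $d = 1$ by showing that the MM algorithm solves convergence (indeed gathering) on the line: on a line, each process moves to the midpoint with its closest neighbor, the diameter of $\Omega$ is non-increasing, and a compactness/potential argument forces convergence. For $d \geq 2$ with $n \leq 5$, I would exhibit an explicit algorithm (or argue MM works) case by case on $n \in \{1,2,3,4,5\}$. The key observation is that the impossibility phenomena come from symmetric configurations where the adversary can choose closest neighbors to maintain symmetry forever; for small $n$ one shows that although such symmetric configurations exist momentarily, an appropriately chosen $f_x, f_y$ (or MM) contracts the configuration regardless of the adversary's tie-breaking and orthogonal-direction choices. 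I would organize this by analyzing, for each $n$, the finitely many combinatorial ``shapes'' of the closest-neighbor graph $C(\cdot)$ and verifying a strict decrease of an appropriate potential (e.g. the diameter, or sum of pairwise distances) up to the adversarial choices.

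For the negative direction ($d \geq 2$ and $n \geq 6$), the plan is to construct, for any candidate algorithm (characterized by the functions $f_x$, resp. $f_x, f_y$, from Section~\ref{secalgo}), an initial configuration from which convergence fails. The natural idea is a highly symmetric configuration in the plane: place the $n$ processes as the vertices of a regular $n$-gon (or for composite $n$, at vertices of a regular polygon with multiplicities, or on several concentric regular polygons) centered at the origin. Because the configuration is invariant under the dihedral/rotational symmetry group, the adversary can choose each process's closest neighbor and each orthogonal direction $\vec y$ so that the symmetry is preserved at every step. Then every process sees exactly the same local picture (same $D$, same relative geometry), so they all apply the same $f_x(D)\vec x + f_y(D)\vec y$, and the configuration remains a scaled/rotated copy of the regular polygon. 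If $f_x(D) = D$ and $f_y(D) = 0$ the polygon is invariant; more generally one argues that $f$ cannot simultaneously shrink the polygon to a point for \emph{all} scales $D$ while the adversary is free to reflect $\vec y$: whatever $f_x$ does, the configuration stays a regular polygon of some positive radius, or the adversary's sign choices on $\vec y$ prevent the center from being approached. The upshot is that $|\Omega| \geq 2$ forever and no point $G$ is approached, contradicting convergence. The reason $n = 6$ is the threshold (rather than, say, $n = 4$) is that for $n \leq 5$ one can show every such symmetric configuration is ``unstable'' in the sense that MM still contracts it, whereas at $n = 6$ the regular hexagon (equivalently two interleaved triangles, or a triangle with doubled vertices) admits an adversarial schedule keeping all mutual distances in fixed ratios.

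I expect the main obstacle to be the positive direction for $n = 4$ and $n = 5$, which the introduction itself flags as non-trivial: there the closest-neighbor graph can take several shapes (e.g. for $n=4$: a perfect matching of two pairs, a ``star'' where three processes all pick the same central one, a directed path, a 3-cycle plus a pendant, etc.), and for each shape one must check that MM strictly decreases a chosen potential \emph{against every adversarial tie-break}, including the degenerate cases where a process has several equidistant neighbors or where moves cause new coincidences. Handling the potential function cleanly so that it decreases in all these sub-cases — and ruling out limit cycles where the potential decreases but does not tend to $0$ — is the delicate part; I would likely use the diameter of $\Omega$ together with a secondary lexicographic quantity (such as the number of distinct positions, or the second-largest pairwise distance) to break ties in the analysis. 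For the impossibility side, the subtlety is making the adversary's choice of $\vec y$ at each step consistent with a global symmetry even though $\vec y$ is chosen per-process; I would fix this by choosing, for the regular-polygon configuration, all the $\vec y$'s to point ``tangentially'' (or radially) in a rotationally-consistent way, which is always possible because the polygon's symmetry group acts transitively on its vertices.
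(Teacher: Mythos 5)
There is a genuine gap in your impossibility argument for $d \geq 2$, $n \geq 6$. A single regular $n$-gon (or two interleaved triangles, or a triangle with doubled vertices --- all the variants you mention) does \emph{not} defeat the convergence problem: under the MM algorithm, even when the adversary preserves the polygon's symmetry forever, the polygon shrinks geometrically toward its center (for the regular hexagon the circumradius is multiplied by $\cos(\pi/6)$ at each step), so the processes converge to the center and convergence is achieved from that initial configuration. Your claim that ``whatever $f_x$ does, the configuration stays a regular polygon of some positive radius'' is false for MM. Symmetric single-cluster configurations only defeat \emph{gathering}, which is exactly why the gathering threshold is $n \leq 2$ while the convergence threshold is $n \leq 5$. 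The paper's construction instead uses \emph{two} equilateral triangles whose barycenters are far apart (distance $10D$, where $D$ bounds the excursion of each triangle from its own barycenter, a bound one may assume since otherwise the points are unbounded and convergence fails anyway): each triangle is kept equilateral forever by the adversary as in Lemma~\ref{lem3pt}, the two clusters never interact because every closest neighbor stays within its own cluster, and hence the six points always span a distance of at least $8D$ and cannot converge to a single point. This two-cluster idea is the essential ingredient you are missing, and it is what explains the threshold $6 = 3 + 3$, not any property of the regular hexagon.

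On the positive side your plan is also thinner than what is needed. A case analysis of closest-neighbor graph shapes with a ``strictly decreasing potential'' does not by itself give convergence: you must (i) obtain a uniform contraction factor, not mere strict decrease, and (ii) show that the \emph{location} of the configuration stabilizes, since a shrinking diameter is compatible with the cluster drifting off. The paper does (i) not by enumerating graph shapes but by a dichotomy on whether the smallest enclosing ball radius $R(t)$ exceeds $1000\, d_{\min}(t)$: if not, $R$ contracts by the fixed factor $\alpha(1000)$ (Lemma~\ref{lemKalpha}); if so, a geometric lemma about midpoints of five points (Lemma~\ref{lem_milieux}) forces either a collision (so $|\Omega|$ drops, reducing to the easy case $|\Omega|\leq 3$) or a $0.99$ contraction of the diameter. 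It does (ii) by a Cauchy-sequence argument on the coordinates of the enclosing ball (Lemmas~\ref{lemshrink2rt} and~\ref{lem_bloc_2_2}). Your proposal would need comparable quantitative content to close, and as written the delicate $n\in\{4,5\}$ analysis you flag as the obstacle is left entirely open.
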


\subsection{Gathering problem}

Let us prove Theorem~\ref{th_lb_gath}.

\begin{lemma}
\label{lemd1}
If $d = 1$, the MM algorithm solves the gathering problem.
\end{lemma}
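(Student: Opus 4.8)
The plan is to show that on the line, the MM algorithm makes every process converge in finite time to a single point, so gathering (not just convergence) is achieved. First I would set up coordinates: since $d=1$, identify $S$ with $\mathbb{R}$ and write $x_1(t) \leq x_2(t) \leq \cdots \leq x_n(t)$ for the sorted positions of the $n$ processes at time $t$ (ties allowed). The two key geometric observations are: (i) under MM, each process moves to the midpoint between itself and its chosen closest neighbor, and that midpoint lies strictly between the two, hence in the closed interval $[x_1(t), x_n(t)]$; and (ii) therefore the \emph{diameter} $\delta(t) = x_n(t) - x_1(t)$ is non-increasing. I would then argue the extreme processes are the ones that control the diameter: the leftmost process $x_1$ has its closest neighbor somewhere to its right (at distance $D \leq \delta(t)$), so it moves right by $D/2 > 0$ unless it is already co-located with its neighbor; symmetrically for the rightmost process. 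The main work is to turn this into a \emph{quantitative} contraction.

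The key step I would carry out is a bound of the form $\delta(t+1) \leq (1 - c)\,\delta(t)$ for some constant $c>0$ depending only on $n$ (I expect something like $c = 1/(2(n-1))$ or a similar $1/\mathrm{poly}(n)$ factor). To get this: let $a = x_1(t)$ and $b = x_n(t)$. The new position of the leftmost process is $a + D_{\min}/2$ where $D_{\min}$ is its distance to its nearest neighbor; since there are $n$ processes in $[a,b]$, by pigeonhole there exist two consecutive sorted positions within distance $\delta(t)/(n-1)$ of each other, but more to the point, the \emph{leftmost} process's nearest neighbor is at distance at least\,\dots — here I need to be a bit careful, because the nearest neighbor of $x_1$ could be very close to $x_1$, giving only a tiny move. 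The honest argument is different: I would instead track that after \emph{one} step every process lies in $[a + g/2,\ b - g'/2]$ is too optimistic; rather, I would use that the new leftmost position is at least $a$ and new rightmost is at most $b$, and separately show the interval cannot stay the same length forever unless all processes are already gathered.

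So the cleaner route, and the one I would actually write: show $\delta(t)$ is non-increasing and strictly decreasing whenever $\delta(t) > 0$, then upgrade to a genuine geometric rate by looking two steps ahead or by a compactness/normalization argument. Concretely, normalize so $\delta(t) = 1$ with $a=0$, $b=1$. If no process ever moves, then in particular $x_1$ does not move, which forces its nearest neighbor to coincide with it; iterating, the whole configuration must already be a single point, contradiction. For the quantitative version, I would show that within the $n$ sorted positions on $[0,1]$, consider the leftmost \emph{gap} to the right of $0$: either $x_2(t) - x_1(t) \geq 1/(n-1)$ (some gap is this big, and if it is the first gap the leftmost process moves right by at least $1/(2(n-1))$), or all of $x_1,\dots,x_k$ are clustered near $0$ and then they all move right together by a controlled amount while nothing enters $[0, \text{something})$ — this cluster-tracking is exactly the fiddly part.

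I expect the main obstacle to be precisely this: a naive "the diameter shrinks by a fixed fraction each step" is \emph{false} in general (a process can have an extremely close neighbor and barely move), so the proof must either (a) argue over a window of $O(n)$ steps that the diameter must shrink by a fixed fraction — during which a "wavefront" of movement propagates from the dense cluster outward to the extreme process — or (b) use the Suzuki–Yamashita-style observation that the leftmost and rightmost processes, even if they move slowly, cannot both stay put, combined with a limiting argument on the normalized configuration to rule out a nonzero limiting diameter. Approach (a) gives an explicit $O(n \log(1/\epsilon))$-style bound and an explicit contraction factor per $n$ steps; approach (b) is shorter but less constructive. I would present approach (a): define the rightmost position $r$ such that $[0,r]$ contains a process but $(\,\cdot\,)$ — track how far "to the right of $0$ the configuration has emptied out" and show this frontier advances by at least a fixed amount every step until it meets the rightmost process, and symmetrically from the right, yielding $\delta(t+n) \leq (1 - \Theta(1/n))\,\delta(t)$, hence $\delta(t)\to 0$ and, since all positions stay in a shrinking nested sequence of intervals $[x_1(t), x_n(t)]$, they converge to a common point $G$; finiteness of the gathering time then follows from the fact that once $\delta(t) < $ (smallest positive pairwise distance's role) — actually gathering in finite time on the line needs the extra observation that two processes that are each other's unique closest neighbors and equidistant coincide after one step, so I would close by noting that the contraction forces a step at which two processes merge, reducing $n$, and induct on $n$.
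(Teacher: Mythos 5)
Your proposal never closes, and it misses the one observation that makes this lemma trivial. The entire plan orbits around a diameter-contraction estimate that you yourself flag as problematic (the extreme process may have an arbitrarily close neighbor and barely move), and you end up proposing a multi-step ``wavefront'' argument over $O(n)$ steps that you do not carry out. Even if you completed that contraction argument, it would only prove \emph{convergence}: a shrinking nested sequence of intervals gives a common limit point, not coincidence in finite time, which is what the gathering problem demands. Your final sentence --- ``the contraction forces a step at which two processes merge, reducing $n$, and induct'' --- is the right instinct but is asserted, not proved; nothing in a diameter bound forces two distinct occupied positions to collide.

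The missing idea is a structural fact special to $d=1$. Sort the $m = |\Omega(t)|$ \emph{distinct occupied positions} as $x_1 < x_2 < \dots < x_m$. On the line, the closest occupied position to $x_i$ is necessarily $x_{i-1}$ or $x_{i+1}$ (whichever is nearer, or either under a tie), so every process at $x_i$ moves under MM to one of the two midpoints $(x_{i-1}+x_i)/2$ or $(x_i+x_{i+1})/2$. Hence $\Omega(t+1)$ is contained in the set $\{(x_i+x_{i+1})/2 : 1 \leq i \leq m-1\}$, which has at most $m-1$ elements, so $|\Omega|$ strictly decreases at every step while $|\Omega| \geq 2$. After at most $n-1$ steps $|\Omega| = 1$ and the processes are gathered. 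This is the paper's proof; it needs no contraction rate, no normalization, and no limiting argument.
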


\begin{proof}
Let us show that, if $|\Omega| \geq 2$, then $|\Omega|$ decreases at the next step.

As $d=1$, let $x(K)$ be the coordinate of point $K$.
Let $(K_1,K_2,\dots,K_m)$ be the points of $\Omega$ ranked such that $x(K_1) < x(K_2) < \dots < x(K_m)$.
$\forall i \in \{1,\dots,m\}$, let $x_i = x(K_i)$.
Then, according to the MM algorithm, the possible positions at the next step are:
$(x_1+x_2)/2, (x_2+x_3)/2, \dots, (x_{m-1}+x_m)/2$
(at most $m - 1$ positions). Thus, $|\Omega|$ decreases at the next step.
Therefore, after at most $n-1$ steps, we have $|\Omega| = 1$, and the gathering problem is solved.\end{proof}

\begin{lemma}
\label{lem3pt}
If $d \geq 2$ and $n \geq 3$, the gathering problem is impossible to solve.
\end{lemma}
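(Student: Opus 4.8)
The plan is to exhibit a single initial configuration, symmetric enough that the adversary can force the processes to stay symmetric forever, while keeping them spread apart. The natural candidate is an equilateral triangle: place three processes at the vertices $A$, $B$, $C$ of an equilateral triangle of side $\ell > 0$ in a $2$-dimensional subspace of $S$ (which exists since $d \geq 2$), and leave the remaining $n-3$ processes — if any — at the centroid $O$, or stacked together somewhere that does not break the three-fold symmetry (one has to check that a convenient symmetric placement is available; putting them all at $O$ works for the rotational symmetry argument). I would first handle the clean case $n = 3$ and then indicate how extra processes are absorbed.

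The core of the argument is that, by the characterization of algorithms in Section~\ref{secalgo}, each process $p$ computes its next position purely from $M_p$ and $M_{C(p)}$ via the rule $M_p + f_x(D)\vec x + f_y(D)\vec y$, where $D = D(p)$, $\vec x$ is the unit vector toward $C(p)$, and $\vec y$ is an adversary-chosen unit vector orthogonal to $\vec x$. For the equilateral triangle, every vertex has the \emph{other two} vertices as equidistant closest neighbors, so $N(p)$ has two elements and the adversary picks $C(p)$. I would have the adversary pick the closest neighbors ``consistently with a $120^\circ$ rotation'': label the rotation $\rho$ of order $3$ about $O$ that cycles $A \to B \to C \to A$, and let the adversary set $C$ at each vertex and choose the orthogonal vectors $\vec y$ so that the whole local picture at $B$ is the $\rho$-image of the local picture at $A$, and likewise $C$ from $B$. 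Since the algorithm is a fixed deterministic function of the (congruence class of the) pair $(M_p, M_{C(p)})$ — it has no global orientation and no identifiers — equal local pictures produce $\rho$-related outputs. Hence the configuration after one step is again $\rho$-invariant: the three moved vertices form the orbit of a single point under $\rho$, and the extra processes at $O$ (a fixed point of $\rho$) stay put or move to another $\rho$-fixed point. By induction the configuration is $\rho$-invariant at every step $t$.

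From $\rho$-invariance I extract the impossibility. A finite set of points invariant under the rotation $\rho$ of order $3$ either is a single point (necessarily the center $O$) or contains a full $3$-element orbit $\{Q, \rho Q, \rho^2 Q\}$ with $Q \neq O$, whose diameter is $\sqrt{3}\,d(Q,O) > 0$. So to gather, the configuration would have to collapse to exactly $\{O\}$ at some step $T$. But the only way a $\rho$-invariant configuration becomes a single point in one step is if every vertex, in its local frame, maps to the center; by the symmetry of the local pictures this is a statement about one fixed value $f_x(D_0)$ (and $f_y(D_0)$), and I would argue it forces a specific relation that the adversary can defeat — more cleanly, I would instead argue that the three moving vertices always remain a $\rho$-orbit, so their pairwise distances are always either all zero or all equal to some $\delta_t > 0$; gathering requires $\delta_T = 0$, i.e. all three land exactly on $O$, and I claim the adversary's freedom in $\vec y$ (when $d \geq 2$) prevents a correct algorithm from guaranteeing this for \emph{every} starting side length $\ell$ simultaneously — if $f_y \not\equiv 0$ the image point is off the line $M_pO$ and cannot be $O$ when also $f_x(D)\ne D(p)\cdot(\text{distance ratio})$, and if $f_y \equiv 0$ one reduces to the $1$-D-like radial dynamics and picks $\ell$ so that the radius never hits $0$.

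\textbf{Main obstacle.} The delicate point is the last one: ruling out that some clever choice of $f_x, f_y$ makes \emph{every} equilateral triangle collapse to its center in finitely many steps despite the adversarial $\vec y$. I expect the cleanest route is to fix the adversary's $\vec y$-choices to be themselves $\rho$-equivariant and then track the single orbit radius $r_t = d(Q_t, O)$ and orbit ``phase''; the map $r_t \mapsto r_{t+1}$ is a fixed function (independent of $t$) determined by $f_x, f_y$, and gathering demands $r_t \to 0$ reaching $0$ exactly — I would show that either this forces $r_{t+1} < r_t$ on an interval and then exhibit a starting $r_0$ (e.g. a fixed point or a value below which the map is expanding, or simply a value the orbit skips over) from which $r_t$ never equals $0$, or else the configuration never becomes a single point because the three images are a nondegenerate orbit; in all cases gathering fails. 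Packaging this cleanly — rather than by a messy case split on $f_x, f_y$ — is the part that needs care; everything else (existence of the symmetric configuration, propagation of $\rho$-invariance, absorbing the extra $n-3$ processes at the center) is routine given the algorithm characterization.
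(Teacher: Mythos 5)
Your setup is the same as the paper's (equilateral triangle, adversary choosing closest neighbors cyclically and $\vec y$ equivariantly, propagation of the $120^\circ$ symmetry, collapse only possible onto the fixed point of the rotation), but there are two genuine problems. First, your handling of $n>3$ is wrong: if you park the extra processes at the centroid $O$, then for a vertex process $p$ the nearest \emph{occupied position} is $O$ (at distance $\ell/\sqrt{3}<\ell$), so $N(p)$ consists of the processes at $O$ and the adversary can no longer assign the next vertex as $C(p)$ --- the whole cyclic construction collapses. The correct move is to stack the extra processes \emph{on the three vertices}: co-located processes do not see one another (the closest neighbor is taken over $\Omega-\{M_p\}$), so $|\Omega|=3$ and the three-point dynamics are unaffected.

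Second, and more importantly, the step you flag as the ``main obstacle'' is exactly the step you have not proved, and your proposed route (tracking the orbit radius $r_t$, case-splitting on $f_x,f_y$, choosing a bad initial $\ell$) is both incomplete and unnecessary. The clean argument is purely local and works for \emph{every} side length $D$ in a single step. The only way the $\rho$-orbit degenerates is if each vertex lands exactly on $G$, i.e.\ $f_x(D)\vec x_i+f_y(D)\vec y_i=G-K_i$. Decompose $G-K_i$ in the frame $(\vec x_i,\vec v_i)$, where $\vec v_i$ is the unit normal to the line $K_iK_{s(i)}$ pointing into the half-plane \emph{not} containing $G$: the $\vec v_i$-component of $G-K_i$ is strictly negative. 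The adversary sets $\vec y_i=\vec v_i$ if $f_y(D)\ge 0$ and $\vec y_i=-\vec v_i$ otherwise, so the $\vec v_i$-component of the actual displacement is always $\ge 0$. Hence the new position is never $G$ (in particular, if $f_y\equiv 0$ there is nothing to do, since $G$ does not lie on the line $K_iK_{s(i)}$ at all). This rules out collapse at every step, for every $D$, with no dynamical analysis; the triangle may well shrink toward $G$, but the lemma only concerns exact gathering, so that is irrelevant. Without this sign-of-$\vec y$ argument (or an equivalent), your proof is not complete.
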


\begin{proof}
First, consider the case $d = 2$.
Consider an initial configuration where $\Omega$ contains three distinct points $K_1$, $K_2$ and $K_3$ such that $d(K_1,K_2) = d(K_2,K_3) = d(K_3,K_1) = D$.

Let $G$ be the gravity center of the triangle $K_1K_2K_3$.
Let $s(1) = 2$, $s(2) = 3$ and $s(3) = 1$.
$\forall i \in \{1,2,3\}$,
let $A_i$ and $B_i$ be the two half-planes delimited by the axis $(K_iK_{s(i)})$, such that $G$ belongs to $B_i$.
Let $\vec{v_i}$ be the unit vector orthogonal to
$(K_iK_{s(i)})$
such that the point $K_i + \vec{v_i}$
belongs to $A_i$.
Let $\vec{y_i} = \vec{v_i}$ if
$f_y(D) \geq 0$, and
$\vec{y_i} = -\vec{v_i}$ otherwise.

Let $p$ be a process, and let $i$ be such that $M_p = K_i$.
The external adversary can choose a closest neighbor $C(p)$ and a vector
$\vec{y}$ such that
$M_{C(p)} = K_{s(i)}$ and
$\vec{y} = \vec{y_i}$.

Thus, at the next step, it is always possible that $\Omega$ contains three \emph{distinct} points also forming an equilateral triangle. The choice of vectors
$\vec{y}$ prevents the particular case where all processes are gathered in point $G$.
We can repeat this reasoning endlessly. Thus, the gathering problem cannot be solved if $d = 2$.

Now, consider the case $d > 2$.
The external adversary can choose the
$\vec{y}$ vectors such that the points of
$\Omega$ always remain in the same plane, and their behavior is the same as for $d = 2$. Thus, the gathering problem cannot be solved if $d > 2$.
\end{proof}

\textbf{Theorem~\ref{th_lb_gath}.} \emph{The gathering problem can be solved if and only if $d = 1$ or $n \leq 2$. When it can be solved, the MM algorithm solves it.}

\begin{proof}
If $d = 1$, according to Lemma~\ref{lemd1}, the MM algorithm solves the gathering problem.
If $n = 1$, the gathering problem is already solved by definition.
If $n = 2$, the MM algorithm solves the gathering problem in at most one step.
Otherwise, if $d \geq 2$ and $n \geq 3$,
according to Lemma~\ref{lem3pt},
the gathering problem cannot be solved.
\end{proof}

\subsection{Convergence problem}

Let us prove Theorem~\ref{th_lb_conv}.

We first introduce some definitions. For a given set of points $X \subseteq S$, let $D_{\max}(X) =  \max_{\{A,B\} \subseteq X}$ $d(A,B)$.
Let $\Omega(t)$ be the set $\Omega$ at time $t$.

Let $d_{\max}(t) = \max_{\{A,B\} \subseteq \Omega(t)} d(A,B)$ and
$d_{\min}(t) = \min_{\{A,B\} \subseteq \Omega(t)} d(A,B)$.
Let $m(A,B)$ be the middle of segment [AB].
Let $\alpha(K) =$ $\sqrt{1 - 1/(4K^2)}$.
Let $R(t) = \arg \min_{G \in S} \max_{M \in \Omega(t)} d(G,M)$ (the radius of the smallest enclosing ball of all processes' positions).
Let $X_i(t)$ be the smallest $i^{th}$ coordinate of a point of $\Omega(t)$.
We say that a proposition $P(t)$ is true
\emph{infinitely often} if, for any time $t$, there exists a time $t' \geq t$ such that $P(t)$ is true.

\begin{lemma}
\label{lem_of_3}
If there exists a time $t$ such that $|\Omega(t)| \leq 3$, the MM algorithm solves the convergence problem.
\end{lemma}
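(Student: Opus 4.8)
The plan is to first dispose of the trivial cases and then prove that, once the configuration occupies at most $3$ positions, the MM dynamics can never again occupy more than $3$ positions, while the diameter of the configuration contracts by a factor at least $2$ at every step.

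First I would record the key structural fact. Let $t_0$ be a time with $|\Omega(t_0)| \le 3$, and write $\Omega(t_0) = \{A,B,C\}$ (with coincidences allowed if $|\Omega(t_0)| < 3$). Under MM, a process located at $A$ moves to the middle of $A$ and one of the \emph{other} occupied positions (namely $C(p)$, whether it is forced or chosen by the adversary among ties), hence it lands on $m(A,B)$ or $m(A,C)$; likewise for processes at $B$ and at $C$. Therefore $\Omega(t_0+1) \subseteq \{m(A,B), m(B,C), m(C,A)\}$, so in particular $|\Omega(t_0+1)| \le 3$, and by induction $|\Omega(s)| \le 3$ for every $s \ge t_0$. (If $|\Omega(s)| = 2$, say $\Omega(s)=\{A,B\}$, every process moves to $m(A,B)$, so $|\Omega(s+1)| = 1$; if $|\Omega(s)| = 1$, no process moves.)

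Next I would control the diameter. From $m(A,B) - m(B,C) = (A-C)/2$ and the two analogous identities, the three pairwise distances among $\{m(A,B), m(B,C), m(C,A)\}$ are exactly $\tfrac12 d(A,B)$, $\tfrac12 d(B,C)$, $\tfrac12 d(C,A)$. Combined with $\Omega(s+1) \subseteq \{m(A,B), m(B,C), m(C,A)\}$ this gives $d_{\max}(s+1) \le \tfrac12 d_{\max}(s)$ for every $s \ge t_0$ (the cases $|\Omega(s)| \in \{1,2\}$ being trivial). Iterating, $d_{\max}(s) \le 2^{-(s-t_0)} d_{\max}(t_0) \to 0$.

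Finally I would extract the limit point. Since every point of $\Omega(s+1)$ is a midpoint of two points of $\Omega(s)$, we have $\mathrm{conv}(\Omega(s+1)) \subseteq \mathrm{conv}(\Omega(s))$; for $s \ge t_0$ these are nonempty compact convex sets, nested, with diameters $d_{\max}(s) \to 0$, so $\bigcap_{s \ge t_0} \mathrm{conv}(\Omega(s))$ is a single point $G$. For any $s \ge t_0$ and any $M \in \Omega(s)$ we have $G, M \in \mathrm{conv}(\Omega(s))$, hence $d(G,M) \le d_{\max}(s)$; so for every $\epsilon > 0$ any $T \ge t_0$ with $d_{\max}(T) \le \epsilon$ makes the processes $(G,\epsilon)$-gathered for all $t \ge T$, which is exactly convergence. (Equivalently, one checks directly that the trajectory of each process is Cauchy, since $d(M_p(s), M_p(s+1)) \le \tfrac12 d_{\max}(s)$ is summable, and that all limits coincide because $d_{\max}(s)\to 0$.) There is no serious obstacle in this argument; the only point requiring care is that $|\Omega|$ cannot grow back above $3$ and that the diameter bound holds for \emph{every} adversarial resolution of ties, both of which are delivered uniformly by the inclusion $\Omega(s+1) \subseteq \{m(A,B), m(B,C), m(C,A)\}$.
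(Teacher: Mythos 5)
Your proof is correct and rests on the same core observation as the paper's: once $\Omega(t)=\{A,B,C\}$, every process moves to one of the three midpoints $m(A,B)$, $m(B,C)$, $m(C,A)$, whose pairwise distances are exactly half those of $A,B,C$, so $|\Omega|$ never exceeds $3$ again and $d_{\max}$ halves at each step. The only place you diverge is in identifying the limit point $G$: the paper splits into cases (if $|\Omega|$ ever drops to $2$ or below, gathering follows in at most one more step; otherwise all three midpoints remain occupied forever and the centroid of the three occupied positions is invariant, so it serves as $G$), whereas you avoid the case split entirely by taking $G$ to be the unique point in $\bigcap_{s\ge t_0}\mathrm{conv}(\Omega(s))$, a nested family of compact convex sets with vanishing diameter. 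Your route is marginally more robust — it does not need the multiplicity of occupied midpoints to stay at exactly three for the invariant to make sense — at the cost of invoking a small amount of convexity/compactness machinery where the paper gets by with an explicit invariant. Both are valid; nothing is missing from your argument.
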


\begin{proof}

If $|\Omega(t)| = 1$, the processes are and remain gathered. If $|\Omega(t)| = 2$, then $|\Omega(t+1)| = 1$.

If $|\Omega(t)| = 3$, consider the following proposition $P$: there exists $t' > t$ such that $|\Omega(t')| \leq 2$. If $P$ is true, the gathering (and thus, convergence) problem is solved. Now, consider the case where $P$ is false.

Let $\Omega(t) = \{A,B,C\}$.
Then, as $|\Omega(t+1)| = 3$, $\Omega(t+1) = \{m(A,B),m(B,C),m(C,A)\}$.
The center of gravity $G$ of the triangle formed by the three points of $\Omega$ always remains the same, and $d_{\max}(t)$ is divided by two at each step.
Thus, $\forall \epsilon > 0$, there exists a time $T$ such that the processes are $(G,\epsilon)$-gathered $\forall t \geq T$.\end{proof}

\begin{lemma}
\label{lemKalpha}
Let $K \geq 1$. If $R(t) \leq K d_{\min}(t)$,
then $R(t+1) \leq \alpha(K) R(t)$.
\end{lemma}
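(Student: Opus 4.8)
The plan is to prove the stronger statement that every new position produced by one MM step lies within distance $\alpha(K) R(t)$ of the center $G$ of the smallest enclosing ball of $\Omega(t)$; since $R(t+1)$ is by definition the radius of the smallest ball containing $\Omega(t+1)$, the bound $R(t+1) \leq \alpha(K) R(t)$ follows at once.

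First I would fix $G$ with $\max_{M \in \Omega(t)} d(G,M) = R(t)$, so that $d(G,M) \leq R(t)$ for all $M \in \Omega(t)$ (the statement is only meaningful when $|\Omega(t)| \geq 2$, so that $d_{\min}(t)$ is well-defined and positive). Let $p$ be any process. Under the MM algorithm its position at time $t+1$ is $m(M_p, M_{C(p)})$, the midpoint of $M_p$ and $M_{C(p)}$; both of these belong to $\Omega(t)$, hence each is within $R(t)$ of $G$, and their mutual distance equals $D(p)$, which is at least $d_{\min}(t)$ since $d_{\min}(t)$ is the minimum over all pairs of distinct points of $\Omega(t)$.

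The heart of the argument is then an elementary identity: translating so that $G$ is the origin and using the parallelogram law, for any points $A$, $B$ one has $d(m(A,B),G)^2 = \tfrac{1}{2} d(A,G)^2 + \tfrac{1}{2} d(B,G)^2 - \tfrac{1}{4} d(A,B)^2$. Applying this with $A = M_p$ and $B = M_{C(p)}$ gives $d(m(M_p,M_{C(p)}),G)^2 \leq R(t)^2 - \tfrac{1}{4} D(p)^2 \leq R(t)^2 - \tfrac{1}{4} d_{\min}(t)^2$. The hypothesis $R(t) \leq K d_{\min}(t)$ means $d_{\min}(t)^2 \geq R(t)^2 / K^2$, so $d(m(M_p,M_{C(p)}),G)^2 \leq R(t)^2 \left(1 - 1/(4K^2)\right) = (\alpha(K) R(t))^2$. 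Since $p$ was arbitrary, all of $\Omega(t+1)$ lies in the closed ball of radius $\alpha(K) R(t)$ centered at $G$, which is exactly the claim.

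I do not anticipate a substantive obstacle here: the only points to watch are that $D(p) \geq d_{\min}(t)$ for every process (immediate from the definitions) and that one should restrict to $|\Omega(t)| \geq 2$ so that $d_{\min}(t)$ is meaningful; the case $|\Omega(t)| = 1$ is degenerate ($R(t) = 0$, processes already gathered) and is in any event covered separately by Lemma~\ref{lem_of_3}.
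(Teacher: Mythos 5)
Your proof is correct, and it follows the same high-level strategy as the paper: fix the center $G$ of the smallest enclosing ball of $\Omega(t)$ and show every midpoint produced by an MM step lies within $\alpha(K)R(t)$ of $G$. The execution differs in the key step, though. The paper argues by an extremal configuration: it picks two points $A,B$ on the boundary sphere with $d(A,B)=d_{\min}(t)$, applies the Pythagorean theorem to the right triangle $G$--$m(A,B)$--$A$, and then simply \emph{asserts} that every point of $\Omega(t+1)$ is at least as close to $G$ as $m(A,B)$ is --- a true but unjustified monotonicity claim (one must check that moving the endpoints inward or lengthening the segment can only bring the midpoint closer to $G$). Your use of the exact median identity $d(m(A,B),G)^2 = \tfrac12 d(A,G)^2 + \tfrac12 d(B,G)^2 - \tfrac14 d(A,B)^2$, applied directly to each actual pair $(M_p, M_{C(p)})$ with $d(A,G), d(B,G) \leq R(t)$ and $d(A,B) = D(p) \geq d_{\min}(t)$, proves that claim rather than assuming it, so your version is the more rigorous of the two; it also cleanly handles the observation (implicit in the paper) that $D(p) \geq d_{\min}(t)$ for every $p$. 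The paper's route buys a picture that is perhaps easier to visualize; yours buys a complete argument with no case to wave away.
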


\begin{proof}
If the processes move according to the MM algorithm,
then
$\Omega(t+1) \subseteq \bigcup_{\{A,B\} \subseteq \Omega(t)}$ $\{m(A,B)\}$.
Let $G$ be such that, $\forall M \in \Omega(t)$,
$d(G,M) \leq R(t)$.
Let $A$ and $B$ be two points of $S$ such that $d(G,A) = d(G,B) = R(t)$ and $d(A,B) = d_{\min}(t)$ (two such points $A$ and $B$ exist, as $d_{\min}(t) \leq 2R(t)$).
Let $C = m(A,B)$.
Then, $\forall M \in \Omega(t+1)$,
$d(G,M) \leq d(G,C)$.
Thus, $R(t+1) \leq d(G,C)$.

Let $x = d(G,C)$, $y = d_{\min}(t)/2$ and $z = R(t)$.
Then, $z^2 = x^2 + y^2$ and
$x/z = \sqrt{1 - (y/z)^2}$.
As $R(t) \leq K d_{\min}(t)$, $y/z \geq 1/(2K)$
and $x/z \leq \sqrt{1 - 1/(4K^2)} = \alpha(K)$.
Thus, $R(t+1) \leq d(G,C) \leq \alpha(K) R(t)$.
\end{proof}

\begin{lemma}
\label{lem_milieux}
Let $A$, $B$, $C$, $D$ and $E$ be five points (some of them may be identical). 
Let $x = d(A,D)/100$.
Assume $d(A,B) \leq x$, $d(A,C) \leq x$,
$d(A,E) \leq 100 x$ and $d(D,E) \geq 40 x$.
Let $S = \{A,B,C,D,E\}$ and $S' = \bigcup_{\{A,B\} \subseteq S} \{m(A,B)\}$.
Then, $D_{\max}(S') \leq 0.99 D_{\max}(S)$.
\end{lemma}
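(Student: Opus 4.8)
The plan is to fix $A$ as a reference point, to use that $D_{\max}(S)\ge d(A,D)=100x$ (so it suffices to bound every pairwise distance occurring in $S'$ by $0.99\,D_{\max}(S)$, and in most cases the cruder bound ``$\le 99x$'' already does it), and then to estimate those distances by splitting $S'$ into tight pieces. The (at most) ten midpoints of $S'$ group as follows: $\mathcal A:=\{m(A,B),m(A,C),m(B,C)\}$ lies in the closed ball of radius $x$ about $A$ (indeed $B,C$, and hence their midpoint, lie there, while $m(A,B),m(A,C)$ are within $x/2$ of $A$); $\mathcal D:=\{m(A,D),m(B,D),m(C,D)\}$ lies within distance $x/2$ of $m(A,D)$, because the segment joining the midpoints of two sides of a triangle has half the length of the third side, so $d(m(A,D),m(B,D))=\tfrac12 d(A,B)\le\tfrac12 x$ and likewise for $m(C,D)$; similarly $\mathcal E:=\{m(A,E),m(B,E),m(C,E)\}$ lies within $x/2$ of $m(A,E)$; and there remains the single point $F:=m(D,E)$. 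If some of $A,\dots,E$ coincide, $S'$ is merely a sub-collection of these and nothing is lost; if $x=0$ all five points coincide and both sides of the inequality are $0$.

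I would then dispatch the ``bulk'' distances. Within a piece the diameter is at most $2x$ (for $\mathcal A$) or $x$ (for $\mathcal D,\mathcal E$). For distances between distinct pieces, the midline fact gives $d(A,m(A,D))=50x$, $d(A,m(A,E))=\tfrac12 d(A,E)\le 50x$, $d(m(A,D),F)=\tfrac12 d(A,E)\le 50x$ and $d(m(A,E),F)=\tfrac12 d(A,D)=50x$; adding the relevant cluster radii, each such distance is $\le 51.5x<99x$, with two exceptions. The first exception is a $\mathcal D$--$\mathcal E$ pair, of length at most $x+d(m(A,D),m(A,E))=x+\tfrac12 d(D,E)$; writing $g:=d(D,E)$, this is $<51x\le 0.99\,D_{\max}(S)$ when $g\le100x$, and when $g>100x$ it is $<\tfrac{1}{100}g+\tfrac12 g<0.99\,g\le 0.99\,D_{\max}(S)$. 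Either way the required bound holds.

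The second exception, the pair $(\mathcal A,F)$, is the heart of the argument and the only delicate estimate, and the one where the precise values of $100$, $40$, $0.99$ are used. For $m\in\mathcal A$ one has $d(m,F)\le d(m,A)+d(A,m(D,E))\le x+d(A,m(D,E))$, so everything reduces to bounding $d(A,m(D,E))$. The median-length identity (Apollonius' theorem) in the triangle $ADE$, with $m(D,E)$ the midpoint of side $DE$, gives
\[
 d(A,m(D,E))^2=\tfrac12\,d(A,D)^2+\tfrac12\,d(A,E)^2-\tfrac14\,d(D,E)^2\le\tfrac12(100x)^2+\tfrac12(100x)^2-\tfrac14(40x)^2=9600\,x^2 ,
\]
hence $d(A,m(D,E))\le\sqrt{9600}\,x<98x$ (as $98^2=9604>9600$), so $d(m,F)<99x\le 0.99\,D_{\max}(S)$. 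Taking the maximum over all pairs of $S'$ then yields $D_{\max}(S')\le 0.99\,D_{\max}(S)$. I expect this last estimate to be the main obstacle: the hypotheses are calibrated so that $\tfrac12(100)^2+\tfrac12(100)^2-\tfrac14(40)^2=9600$ sits just below $98^2=9604$, leaving essentially no slack, whereas every other midpoint distance in $S'$ comes out comfortably small.
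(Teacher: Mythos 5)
Your proof is correct, and its crux is the same as the paper's: the calibrated estimate $d(A,m(D,E))^2\le\tfrac12(100x)^2+\tfrac12(100x)^2-\tfrac14(40x)^2=9600x^2<(98x)^2$, which you obtain cleanly via the median-length (Apollonius) identity where the paper asserts the extremal configuration and applies Pythagoras. Where you genuinely diverge is in the bookkeeping for the other midpoints: the paper only discusses $m(A,D)$, $m(A,E)$, $m(D,E)$, $m(A,B)$, $m(A,C)$ and dismisses the rest with a short contradiction argument, whereas your cluster decomposition ($\mathcal A$, $\mathcal D$, $\mathcal E$, $F$) explicitly covers all ten midpoints. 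This extra care actually matters: the paper's intermediate claim $D_{\max}(S')\le 99x$ is false in general --- take $A=B=C$ at the origin, $D=(100x,0)$, $E=(-100x,0)$; all hypotheses hold, yet $d(m(A,D),m(A,E))=100x$ --- and your case split on $d(D,E)\lessgtr 100x$ in the $\mathcal D$--$\mathcal E$ estimate (bounding by $x+\tfrac12 d(D,E)\le 0.51\,d(D,E)\le 0.99\,D_{\max}(S)$ in the large-$d(D,E)$ regime) is exactly what is needed to repair this and still reach the stated conclusion $D_{\max}(S')\le 0.99\,D_{\max}(S)$. So your argument is not just a valid alternative; it closes a gap in the published proof at the cost of a slightly longer, but entirely elementary, enumeration.
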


\begin{proof}

As $d(A,D) = 100x$, $D_{\max}(S) \geq 100x$.

Let $M_1 = m(A,D)$, $M_2 = m(A,E)$ and $M_3 = m(D,E)$.
We have $d(A,M_1) = 50x$ and $d(A,M_2) \leq 50x$.
The maximal value of $y = d(A,M_3)$ is reached when
$d(A,D) = d(A,E) = 100x$ and $d(D,E) = 40x$.
In this case, with the Pythagorean theorem, we have
$(100x)^2 = y^2 + (20x)^2$,
and thus $y \leq 98x$.

Thus, $\max_{i \in \{1,2,3\}} d(A,M_i) \leq 98x$.
Now, suppose that $D_{\max}(S') > 99x$.
Let $M_4 = m(A,B)$
and $M_5 = m(A,C)$.
This would imply that there exists $i \in \{1,2,3\}$ such that either $d(M_i,M_4) > 99x$ or
$d(M_i,M_5) > 99x$, and thus,
that either $d(A,B) > x$ or $d(A,C) > x$, which is not the case.
Thus, 
$D_{\max}(S') \leq 99x \leq 0.99 D_{\max}(S)$.\end{proof}

\begin{lemma}
\label{lem3A}
Let $t$ be a given time.
If $n = 5$ and $|\Omega(t)| = 5$,
then one of the following propositions is true:
\\(1) $|\Omega(t+1)| \leq 4$
\\(2) $R(t+1) \leq \alpha(1000) R(t)$
\\(3) $d_{\max}(t+1) \leq 0.99 d_{\max}(t)$
\end{lemma}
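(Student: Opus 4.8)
The plan is to analyze the configuration of the five points of $\Omega(t)$ in terms of the ratio between $d_{\max}(t)$ and $d_{\min}(t)$, and show that if neither (1) nor (3) holds, then the points must be "clustered" enough to force (2) via Lemma~\ref{lemKalpha}. Set $K = 1000$. The natural dichotomy is: either $R(t) \le K\, d_{\min}(t)$, in which case Lemma~\ref{lemKalpha} immediately gives (2); or $R(t) > K\, d_{\min}(t)$, and then I must show that (1) or (3) holds. So the whole content is in the second case, where $d_{\min}(t)$ is very small compared to the overall spread.

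In that case, since $d_{\min}(t)$ is small, there are at least two points very close together; more carefully, I would look at the "closest pair" and argue that several points are bunched near a single point $A$ while at least one point $D$ is far away (distance comparable to $R(t)$, hence comparable to $d_{\max}(t)$). The goal is to set up the hypotheses of Lemma~\ref{lem_milieux}: I want a point $A$, two points $B,C$ with $d(A,B), d(A,C) \le x$ where $x = d(A,D)/100$, a far point $D$, and a fifth point $E$ with $d(A,E) \le 100x$ and $d(D,E) \ge 40x$. The constants $1000$, $100$, $40$, $0.99$ are chosen precisely so that this bookkeeping closes: with $R(t) > 1000\, d_{\min}(t)$ the closest pair has distance less than $d_{\max}(t)/1000 \le 2R(t)/1000$, which is small enough relative to the diameter. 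If the fifth point $E$ happens to be close to $A$ as well, or if $D,E$ are close to each other, one should instead be able to extract $|\Omega(t+1)| \le 4$ (two of the midpoints coincide or all five midpoints collapse onto at most four points), giving (1); this is the case analysis on where $E$ sits. Once the hypotheses of Lemma~\ref{lem_milieux} are met, it yields $D_{\max}(S') \le 0.99\, D_{\max}(S)$ where $S = \Omega(t)$ and $S'$ contains $\Omega(t+1)$ (by the same midpoint inclusion used in Lemma~\ref{lemKalpha}), and since $D_{\max}(S) = d_{\max}(t)$ and $D_{\max}(S') \ge d_{\max}(t+1)$, this is exactly (3).

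The main obstacle I expect is the case analysis needed to place the points into the template of Lemma~\ref{lem_milieux}: with five points and only the single inequality $R(t) > 1000\, d_{\min}(t)$ to work from, one must argue carefully about which points are "near $A$" and which are "far", handle the sub-case where the far cluster has two points (then maybe swap roles of $A$ and $D$, or get a coincidence of midpoints), and make sure the numerical slack in the constants is never exceeded. In particular, verifying that when $E$ is neither close to $A$ nor forcing a midpoint collision we genuinely have $d(A,E) \le 100x$ and $d(D,E) \ge 40x$ requires using $R(t)$ as an intermediary: every point is within $2R(t)$ of every other, and $x = d(A,D)/100$ is itself of order $R(t)/100$, so $100x$ of order $2R(t)$ comfortably bounds all pairwise distances, while the lower bound $d(D,E) \ge 40x$ is where one has to be a little careful and possibly invoke (1) instead. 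I would organize the second case as a short enumeration: (a) some other point is within $x$ of $D$ — handle separately, possibly swapping $A \leftrightarrow D$; (b) otherwise all points except $D$ lie within $x$ of... — and so on, until every branch lands in (1), (2), or (3).
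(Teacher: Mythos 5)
Your overall skeleton matches the paper's: dichotomize via Lemma~\ref{lemKalpha} (if $R(t)\le 1000\,d_{\min}(t)$ then (2) holds), and otherwise feed $\Omega(t)$ into Lemma~\ref{lem_milieux} to get (3), using midpoint collisions to fall back on (1). But there is a genuine gap at the heart of the second case: Lemma~\ref{lem_milieux} requires \emph{three} points clustered near $A$ (namely $A$, $B$, $C$ with $d(A,B)\le x$ and $d(A,C)\le x$), and the hypothesis $R(t)>1000\,d_{\min}(t)$ only hands you \emph{two} such points (the closest pair). You write that you would ``argue that several points are bunched near a single point $A$,'' but you never say where the third one comes from, and in general it need not exist. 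The paper extracts it from the assumption that (1) fails: if the closest pair $A_0,B_0$ each had the other as \emph{unique} closest neighbor, both would move to $m(A_0,B_0)$ and $|\Omega(t+1)|\le 4$; since that is excluded, one of $A_0,B_0$ has a second neighbor $C$ at the same distance $d_{\min}(t)$, and relabelling so that this tie occurs at $A$ gives $d(A,C)=d_{\min}(t)\le x$. This tie-breaking observation is the one step specific to $n=5$ (for $n=4$ the paper simply sets $C=A$), and your plan does not contain it; your case analysis is organized around where $E$ sits, which does not produce $C$.

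A secondary problem: your justification of $d(A,E)\le 100x$ is wrong as stated. You argue that $x=d(A,D)/100$ is ``of order $R(t)/100$'' so $100x$ bounds all pairwise distances via $2R(t)$; but a point $D$ chosen merely to satisfy $d(A,D)\ge 100\,d_{\min}(t)$ can be far closer to $A$ than $R(t)$, and even the farthest point from $A$ only guarantees $d(A,D)\ge R(t)$, so $2R(t)\le 200x$, not $100x$. The correct mechanism is purely combinatorial: among the two points not in $\{A,B,C\}$, let $D$ be the one farther from $A$ and $E$ the other, so $d(A,E)\le d(A,D)=100x$ by construction. With that choice, your remaining branch ($d(D,E)<40x$ forces $D$ and $E$ to merge, giving (1); otherwise Lemma~\ref{lem_milieux} applies) does close the argument, but both the source of $C$ and the choice of $D$ need to be made explicit for the proof to stand.
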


\begin{proof}
Suppose that (1) and (2) are false.
According to Lemma~\ref{lemKalpha},
(2) being false implies that $R(t) > 1000 d_{\min}(t)$.
Let $A_0$ and $B_0$ be two points of $\Omega(t)$
such that $d(A_0,B_0)$ $= d_{\min}(t)$.
As $|\Omega(t+1)| = 5$, it implies that the processes at $A_0$ and $B_0$ did not both move to $m(A_0,B_0)$.
Therefore, there is a point $C$ of $\Omega(t)$ such that $d(A_0,C)$ $= d_{\min}(t)$ or $d(B_0,C) = d_{\min}(t)$.
If $d(A_0,C) = d_{\min}(t)$, let $A = A_0$ and $B = B_0$. Otherwise, let $A = B_0$ and $B = A_0$.

As $R(t) > 1000 d_{\min}(t)$,
there exists a point $D_0$ of $\Omega(t)$ such that $d(A,D_0) \geq 100 d_{\min}(t)$.
Let $E_0$ be the fifth point of $\Omega(t)$.
If $d(A,D_0) \geq d(A_0,E_0)$,
let $D = D_0$ and $E = E_0$. Otherwise, let $D = E_0$ and $E = D_0$.

Finally, let $x = d(A,D)/100$.
Thus, we have $d(A,B)$ $\leq x$,
$d(A,C) \leq x$ and $d(A,E) \leq 100x$.
If $d(D,E) < 40x$, then the processes at positions $D$ and $E$ both move to $m(D,E)$,
and $|\Omega(t+1)| = 4$: contradiction.
Thus, $d(D,E) \geq 40x$.
Let $S = \Omega(t)$, and let
$S' = \bigcup_{\{A,B\} \subseteq S} \{m(A,B)\}$.
Then, according to Lemma~\ref{lem_milieux},
\\$D_{\max}(S') \leq 0.99 D_{\max}(S)$.

As the processes move according to the MM algorithm,
$\Omega(t+1) \subseteq S'$, and 
$d_{\max}(t+1) \leq D_{\max}(S') \leq 0.99 D_{\max}(S) = 0.99 d_{\max}(t)$.
Thus, (3) is true.

Therefore, either (1) or (2) are true, or (3) is true.\end{proof}

\begin{lemma}
\label{lem3B}
Let $t$ be a given time.
If $|\Omega(t)| = 4$, then one of the following propositions is true:
\\(1) $|\Omega(t+1)| \leq 3$
\\(2) $R(t+1) \leq \alpha(1000) R(t)$ 
\\(3) $d_{\max}(t+1) \leq 0.99 d_{\max}(t)$
\end{lemma}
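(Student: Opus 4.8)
The plan is to follow the proof of Lemma~\ref{lem3A}, the difference being that $\Omega(t)$ now has only four points, so that there is no spare ``fifth point'' and a short additional case is required. Suppose propositions (1) and (2) are both false; I will show that (3) holds. Since (2) is false, $R(t+1) > \alpha(1000)\,R(t)$, so Lemma~\ref{lemKalpha} applied with $K = 1000$ gives $R(t) > 1000\,d_{\min}(t)$. Let $A_0, B_0$ be two points of $\Omega(t)$ with $d(A_0,B_0) = d_{\min}(t)$, and let $C_0, D_0$ be the two others. Because $R(t) > 1000\,d_{\min}(t)$, some point of $\Omega(t)$ lies at distance $> 1000\,d_{\min}(t)$ from $A_0$; by the triangle inequality it is also at distance $> 100\,d_{\min}(t)$ from $B_0$, and since $d(A_0,B_0) = d_{\min}(t)$ it is $C_0$ or $D_0$ --- call it the \emph{far point}. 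Finally, under the MM algorithm every process moves to the midpoint of two points of $\Omega(t)$, so $\Omega(t+1) \subseteq S'$, where $S' = \bigcup_{\{X,Y\}\subseteq\Omega(t)} \{m(X,Y)\}$.

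First I would treat the case in which some point $C \notin \{A_0,B_0\}$ of $\Omega(t)$ lies at distance $d_{\min}(t)$ from $A_0$ or from $B_0$. Relabel $\{A,B\} = \{A_0,B_0\}$ so that $d(A,C) = d_{\min}(t)$, and let $D$ be the remaining point. The triangle inequality shows $C$ is within $2\,d_{\min}(t)$ of $A_0$, so $C$ is not the far point; hence $D$ is, and $d(A,D) > 100\,d_{\min}(t)$. Put $x = d(A,D)/100$; then $d(A,B) = d(A,C) = d_{\min}(t) \le x$ and $d(C,D) \ge d(A,D) - d(A,C) \ge 99x \ge 40x$. Apply Lemma~\ref{lem_milieux} to $S = \Omega(t) = \{A,B,C,D\}$ with its fifth point $E$ taken equal to $C$ --- allowed since that lemma permits coinciding points, its remaining hypotheses $d(A,E) \le 100x$ and $d(D,E) \ge 40x$ being exactly the two inequalities just obtained. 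This gives $D_{\max}(S') \le 0.99\,D_{\max}(S) = 0.99\,d_{\max}(t)$, so $d_{\max}(t+1) \le D_{\max}(S') \le 0.99\,d_{\max}(t)$, which is proposition (3).

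It remains to treat the case where no such $C$ exists, i.e.\ $A_0$'s only point at distance $d_{\min}(t)$ is $B_0$ and conversely; then every process at $A_0$ (resp.\ $B_0$) is forced to move to $m(A_0,B_0)$. If moreover $d(C_0,D_0) = d_{\min}(t)$, then likewise every process at $C_0$ or $D_0$ moves to $m(C_0,D_0)$, so $|\Omega(t+1)| \le 2$, contradicting that (1) is false. So $d(C_0,D_0) > d_{\min}(t)$, i.e.\ $\{A_0,B_0\}$ is the unique closest pair. If $|\Omega(t+1)| \le 3$, (1) holds; otherwise $|\Omega(t+1)| \ge 4$, and since $A_0, B_0$ contribute only the midpoint $m(A_0,B_0)$ while a point with a unique nearest neighbour contributes only one midpoint, one of $C_0, D_0$ --- say $C_0$ --- must have two points of $\Omega(t)$ at distance $D(C_0)$, whence $d(C_0,D_0) \ge D(C_0)$ and $C_0$ is equidistant from two of $A_0, B_0, D_0$. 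I would then bound the (at most six) pairwise distances within $S'$: two midpoints sharing a point of $\Omega(t)$ are at most $\frac12 d_{\max}(t)$ apart; of the three ``disjoint'' pairs of midpoints, two are at most $\frac12\big(d_{\min}(t) + d_{\max}(t)\big)$ apart; and the last one, $m(A_0,B_0)$ versus $m(C_0,D_0)$, is controlled via the identity $\|X+Y\|^2 = 2\|X\|^2 + 2\|Y\|^2 - \|X-Y\|^2$ (with suitable $X, Y$) together with the equidistance at $C_0$ and $d(C_0,D_0) \ge D(C_0)$, giving a bound of roughly $\frac{\sqrt3}{2}\,d_{\max}(t)$. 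Since $d_{\min}(t) < d_{\max}(t)/1000$, each of these is below $0.99\,d_{\max}(t)$, so $D_{\max}(S') \le 0.99\,d_{\max}(t)$ and (3) holds.

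I expect this last case to be the main obstacle. Unlike in Lemma~\ref{lem3A}, where merging the closest pair immediately contradicts $|\Omega(t+1)| = 5$, here merging still leaves up to three positions, so $|\Omega(t+1)| \ge 4$ does not by itself force a second non-merging point and Lemma~\ref{lem_milieux} cannot be invoked directly; one must instead argue geometrically that a tight pair together with two well-separated points cannot carry the diameter of the set of midpoints back above $0.99$ of its former value, all the while tracking positions occupied by several processes, at which the adversary may route different processes to different midpoints.
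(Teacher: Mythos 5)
Your proof is essentially correct, and your Case~1 is in the spirit of the paper's argument, but your Case~2 takes a genuinely different and much more laborious route than the paper's proof --- one that the paper shows is unnecessary. The paper never splits on whether a third point sits at distance $d_{\min}(t)$ from the closest pair: it simply takes $A,B$ with $d(A,B)=d_{\min}(t)$, lets $D$ be the farther and $E$ the nearer of the two remaining points (so $d(A,E)\leq d(A,D)=100x$), and sets $C=A$, exploiting exactly the ``coinciding points'' allowance of Lemma~\ref{lem_milieux} that you yourself invoke in Case~1 (there with $E=C$). Then either $d(D,E)<40x$, in which case $D,E$ are mutual unique nearest neighbours and so are $A,B$, forcing $|\Omega(t+1)|\leq 2$ and contradicting the falsity of~(1); or $d(D,E)\geq 40x$ and Lemma~\ref{lem_milieux} gives~(3) directly, with no analysis of which processes merge. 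Your entire Case~2 collapses into this one application. That said, the direct route you sketch does go through: the decisive missing step is to show $d(C_0,D_0)\geq\bigl(d_{\max}(t)-d_{\min}(t)\bigr)/2$, which follows from the equidistance at $C_0$ (or $D_0$) plus the triangle inequality through $C_0$ in each subcase; the median identity then bounds $d\bigl(m(A_0,B_0),m(C_0,D_0)\bigr)$ by about $\tfrac{\sqrt{15}}{4}\,d_{\max}(t)\approx 0.97\,d_{\max}(t)$ rather than your stated $\tfrac{\sqrt 3}{2}\,d_{\max}(t)$ --- still below $0.99\,d_{\max}(t)$, so the conclusion survives. Two minor slips worth fixing: $R(t)>1000\,d_{\min}(t)$ only yields a point at distance at least $500\,d_{\min}(t)$ from $A_0$ (via $d_{\max}(t)\geq R(t)$ and the triangle inequality), not $1000\,d_{\min}(t)$, which is why the paper asks only for $d(A,D_0)\geq 100\,d_{\min}(t)$; and your Case~2 as written is a plan (``I would then bound\dots'') rather than a completed argument, so the subcase analysis above would need to be written out.
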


\begin{proof}
Suppose that (1) and (2) are false.
According to Lemma~\ref{lemKalpha},
(2) being false implies that $R(t) > 1000 d_{\min}(t)$.
Let $A$ and $B$ be two points of $\Omega(t)$ such that $d(A,B) = d_{\min}(t)$.

As $R(t) > 1000 d_{\min}(t)$,
there exists a point $D_0$ of $\Omega(t)$ such that $d(A,D_0) \geq 100 d_{\min}(t)$.
Let $E_0$ be the fourth point of $\Omega(t)$.
If $d(A,D_0) \geq d(A_0,E_0)$,
let $D = D_0$ and $E = E_0$.
Otherwise, let $D = E_0$ and $E = D_0$.

Let $C = A$ and $x = d(A,D)/100$.
Thus, we have $d(A,B) \leq x$, $d(A,C) \leq x$ and $d(A,E) \leq 100x$.
If $d(D,E) < 40x$, then the processes at $D$ and $E$ (resp. $A$ and $B$) both move to $m(D,E)$ (resp. $m(A,B)$),
and $|\Omega(t+1)| = 2$: contradiction.
Thus, $d(D,E) \geq 40x$.

Let $S = \Omega(t)$, and let
$S' = \bigcup_{\{A,B\} \subseteq S} \{m(A,B)\}$.
Then, according to Lemma~\ref{lem_milieux},
$D_{\max}(S') \leq 0.99 D_{\max}(S)$.

As the processes move according to the MM algorithm,
$|\Omega(t+1)| \subseteq S'$, and 
$d_{\max}(t+1) \leq D_{\max}(S') \leq 0.99 D_{\max}(S) = 0.99 d_{\max}(t)$.
Thus, (3) is true.

Therefore, either (1) or (2) are true, or (3) is true.\end{proof}

\begin{lemma}
\label{lem_r_is_dec}
At any time $t$, $R(t+1) \leq R(t)$.
\end{lemma}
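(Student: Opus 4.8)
The plan is to show that the smallest enclosing ball at time $t$ already contains every possible position at time $t+1$, so its radius cannot increase. Recall that under the MM algorithm each process moves to the midpoint of the segment joining its position to that of its closest neighbor; hence every point of $\Omega(t+1)$ is of the form $m(A,B)$ for some $A,B \in \Omega(t)$ (not necessarily distinct). So it suffices to prove that for any such midpoint $M$ and any point $G$ realizing the smallest enclosing ball of $\Omega(t)$, we have $d(G,M)\le R(t)$.

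First I would fix $G = \arg\min_{H\in S}\max_{M\in\Omega(t)} d(H,M)$, so that by definition $d(G,K)\le R(t)$ for every $K\in\Omega(t)$. Then, given $A,B\in\Omega(t)$ and $C=m(A,B)$, I would invoke convexity: the function $H\mapsto d(H,\cdot)$ has the property that a midpoint lies no farther from any fixed point than the farther of the two endpoints, i.e. $d(G,m(A,B)) \le \tfrac{1}{2}\bigl(d(G,A)+d(G,B)\bigr)$ by the triangle inequality applied to $G$, $A$, $B$ (using $d(G,C)\le d(G,A)/2 + d(A,C) = d(G,A)/2 + d(A,B)/2$ and similarly, then averaging, or simply the fact that the closed ball of radius $R(t)$ around $G$ is convex and contains both $A$ and $B$). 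Since $d(G,A)\le R(t)$ and $d(G,B)\le R(t)$, this gives $d(G,C)\le R(t)$.

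Therefore every point of $\Omega(t+1)$ lies in the closed ball $\bar B(G,R(t))$, which means $\max_{M\in\Omega(t+1)} d(G,M)\le R(t)$; since $R(t+1)$ is the minimum over all centers of this maximum, we conclude $R(t+1)\le R(t)$. The argument is entirely routine — the only thing to be slightly careful about is the degenerate case $|\Omega(t)|=1$, where $R(t)=0$ and no process moves, so $R(t+1)=0$ trivially. No real obstacle is expected here; the lemma is a convexity observation that will be used as a monotonicity backbone for the later convergence arguments.
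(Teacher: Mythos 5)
Your proof is correct and rests on exactly the same idea as the paper's: every point of $\Omega(t+1)$ is a midpoint of two points of $\Omega(t)$, and the closed ball of radius $R(t)$ around the optimal center $G$ is convex, so it contains all such midpoints. The paper merely phrases this as a proof by contradiction (assuming some midpoint escapes the ball and deducing that one of its two parents must already lie outside), so the two arguments are the same up to contraposition; just note that your parenthetical bound $d(G,C)\le d(G,A)/2+d(A,C)$ is not the triangle inequality as stated, but your convexity justification already suffices.
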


\begin{proof}
Suppose the opposite: $R(t+1) > R(t)$.
Let $G$ be a point such that, $\forall M \in \Omega(t)$, $d(G,M) \leq R(t)$.
If, $\forall M \in \Omega(t+1)$,
$d(G,M) \leq R(t)$, then we do not have $R(t+1) > R(t)$.
Thus, there exists a point $A$ of $\Omega(t+1)$ such that $d(G,A) > R(t)$.
Let $B$ be the previous position of processes at position $A$.
As the processes at position $B$ moved to $A$, according to the MM algorithm,
there exists a point $C$ of $\Omega(t)$ such that $A = m(B,C)$.
As $d(G,B) \leq R(t)$ and $d(G,A) > R(t)$, we have $d(G,C) > R(t)$.
Thus, there exists a point $C$ of $\Omega(t)$ such that $d(G,C) > R(t)$: contradiction.
\end{proof}

\begin{lemma}
\label{lem_dmax_dim}
At any time $t$, $d_{\max}(t+1) \leq d_{\max}(t)$.
\end{lemma}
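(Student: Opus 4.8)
The plan is to argue by contradiction, mirroring the proof of Lemma~\ref{lem_r_is_dec}. Suppose $d_{\max}(t+1) > d_{\max}(t)$. If $|\Omega(t)| = 1$, the processes are gathered and do not move, so $\Omega(t+1) = \Omega(t)$ and $d_{\max}(t+1) = d_{\max}(t)$, a contradiction; hence we may assume $|\Omega(t)| \geq 2$. In that case every process sees a closest neighbor at positive distance, so under the MM algorithm every process moves to the middle of the segment joining its position to the position of some process of $\Omega(t)$. Consequently every point of $\Omega(t+1)$ is of the form $m(A,B)$ for some $A, B \in \Omega(t)$.

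By the assumption, there exist two points $A'$ and $B'$ of $\Omega(t+1)$ with $d(A',B') > d_{\max}(t)$. Write $A' = m(A_1,A_2)$ and $B' = m(B_1,B_2)$, where $A_1$ (resp. $B_1$) is the previous position of a process that moved to $A'$ (resp. $B'$) and $A_2$ (resp. $B_2$) the position of its closest neighbor, so that $A_1,A_2,B_1,B_2 \in \Omega(t)$. The key step is then the identity $m(A_1,A_2) - m(B_1,B_2) = \frac{1}{2}(A_1 - B_1) + \frac{1}{2}(A_2 - B_2)$, which, combined with the triangle inequality, gives $d(A',B') \leq \frac{1}{2} d(A_1,B_1) + \frac{1}{2} d(A_2,B_2) \leq d_{\max}(t)$, contradicting $d(A',B') > d_{\max}(t)$.

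I expect no real obstacle here: this lemma is the diameter analogue of Lemma~\ref{lem_r_is_dec}, and the only substantive ingredient is the elementary observation that a midpoint of two midpoints is a convex combination of the four original points, so that distances between such midpoints are averages of distances between the original points and hence bounded by $d_{\max}(t)$. The only points deserving a line of care are the degenerate case $|\Omega(t)| = 1$, handled separately above, and the verification that every element of $\Omega(t+1)$ genuinely is a midpoint $m(A,B)$ with $A, B \in \Omega(t)$ --- which is exactly what the MM algorithm guarantees once $|\Omega(t)| \geq 2$. Equivalently, one could avoid the contradiction and argue directly from $\Omega(t+1) \subseteq \bigcup_{\{A,B\} \subseteq \Omega(t)} \{m(A,B)\}$, bounding $d(m(A_1,A_2), m(B_1,B_2))$ as above for arbitrary $A_1,A_2,B_1,B_2 \in \Omega(t)$.
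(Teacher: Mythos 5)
Your proof is correct, but it takes a genuinely different route from the paper's. You bound $d(m(A_1,A_2),m(B_1,B_2))$ via the vector identity $m(A_1,A_2)-m(B_1,B_2)=\tfrac{1}{2}(A_1-B_1)+\tfrac{1}{2}(A_2-B_2)$ and the triangle inequality, obtaining $d(A',B')\leq\tfrac{1}{2}d(A_1,B_1)+\tfrac{1}{2}d(A_2,B_2)\leq d_{\max}(t)$. The paper instead projects the four preimage points $A_1,A_2,B_1,B_2$ onto the line $L$ through $A$ and $B$; since orthogonal projection preserves midpoints, $A$ and $B$ are the midpoints of the projected pairs, so some pair of projections $A'_i,B'_j$ is at least $d(A,B)$ apart along $L$, whence $d(A_i,B_j)\geq d(A,B)>d_{\max}(t)$, a contradiction. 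Both arguments are sound. Yours is the more standard and slightly more economical one: it shows directly that the midpoint map is a contraction on the diameter, needs no case analysis on which projected pair is extremal, and generalizes immediately to any rule that moves each process to a convex combination of points of $\Omega(t)$. The paper's projection argument proves the marginally stronger fact that some \emph{single} original pair already realizes a distance at least $d(A,B)$, rather than only an average bound, though that extra strength is not needed here. Your handling of the degenerate case $|\Omega(t)|=1$ and your verification that every point of $\Omega(t+1)$ is a midpoint of two points of $\Omega(t)$ are both appropriate (the paper leaves these implicit).
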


\begin{proof}

Suppose the opposite: $d_{\max}(t+1) > d_{\max}(t)$.
Let $A$ and $B$ be two points of $\Omega(t+1)$ such that $d(A,B) = d_{\max}(t+1)$.
According to the MM algorithm, there exists four points $A_1$, $A_2$, $B_1$ and $B_2$ of $\Omega(t)$ such that
$A = m(A_1,A_2)$ and $B = m(B_1,B_2)$.

Let $L$ be the line containing $A$ and $B$.
Let $A'_1$ (resp. $A'_2$, $B'_1$ and $B'_2$) be the projection of $A_1$ (resp. $A_2$, $B_1$ and $B_2$) on $L$.
Then, there exists $i \in \{1,2\}$ and $j \in \{1,2\}$ such that
$d(A'_i,B'_j) \geq d(A,B)$.
Thus, $d(A_i,B_j) \geq d(A,B) = d_{\max}(t)$: contradiction.

\end{proof}

\begin{lemma}
\label{lemp1p2}
Let $n \leq 5$.
Let $P_1(t)$ (resp. $P_2(t)$) be the following proposition:
$R(t+1) \leq \alpha(1000) R(t)$
(resp. $d_{\max}(t+1) \leq 0.99 d_{\max}(t)$).
Let $P(t) = P_1(t) \vee P_2(t)$.
If, for any time $t$, $|\Omega(t)| \geq 4$,
then $P(t)$ is true infinitely often.
\end{lemma}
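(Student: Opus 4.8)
The statement is a "liveness" claim: assuming the configuration never collapses to $3$ or fewer occupied positions, some genuine progress ($R$ shrinks by a constant factor, or $d_{\max}$ shrinks by a constant factor) happens arbitrarily late. The plan is to argue by contradiction: suppose $P(t)$ fails from some time $t_0$ on, i.e. for all $t \geq t_0$ both $P_1(t)$ and $P_2(t)$ are false. I would first invoke Lemmas~\ref{lem3A} and~\ref{lem3B}: since $|\Omega(t)| \geq 4$ always, and $|\Omega(t)| \leq n \leq 5$, each time $t \geq t_0$ has $|\Omega(t)| \in \{4,5\}$; applying the appropriate lemma, the failure of alternatives (2) and (3) forces alternative (1) — namely $|\Omega(t+1)| \leq |\Omega(t)| - 1$.

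The crucial observation is then that alternative (1) cannot hold forever: if $|\Omega(t)|$ strictly decreases at every step from $t_0$ onward, then after at most two steps we reach $|\Omega(t)| \leq 3$, contradicting the hypothesis that $|\Omega(t)| \geq 4$ for \emph{all} $t$. More carefully: $|\Omega(t_0)| \leq 5$, so the chain $|\Omega(t_0)| > |\Omega(t_0+1)| > |\Omega(t_0+2)| > \cdots$ forces $|\Omega(t_0+2)| \leq 3$ — contradiction. Hence there must exist some $t \geq t_0$ at which alternative (1) fails, and for that $t$ the relevant lemma (Lemma~\ref{lem3A} if $|\Omega(t)| = 5$, Lemma~\ref{lem3B} if $|\Omega(t)| = 4$) yields that alternative (2) or (3) holds, i.e. $P_1(t) \vee P_2(t) = P(t)$ is true. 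Since $t_0$ was arbitrary, $P(t)$ holds infinitely often, which is exactly the claim.

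I expect the only subtlety — not really an obstacle — is bookkeeping the case split on $|\Omega(t)| \in \{4,5\}$ and making sure Lemmas~\ref{lem3A}/\ref{lem3B} apply with the right constants ($\alpha(1000)$ appears in both, and the $d_{\max}$ bound is $0.99$ in both, so $P_1,P_2$ are stated to match). One should also note that Lemma~\ref{lem3A} additionally requires $n = 5$ in its hypothesis, which is fine since if $|\Omega(t)| = 5$ then necessarily $n \geq 5$, hence $n = 5$ given $n \leq 5$. The argument is essentially a pigeonhole/monovariant argument on $|\Omega|$, using the earlier structural lemmas as black boxes, so no heavy computation is needed here.
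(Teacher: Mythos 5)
Your proof is correct and takes essentially the same route as the paper: both rely on Lemmas~\ref{lem3A} and~\ref{lem3B} as black boxes and observe that their alternative (1) cannot hold repeatedly without driving $|\Omega|$ below $4$. The paper organizes this as a case split on whether $|\Omega(t)|=4$ occurs infinitely often, while you phrase it as a direct contradiction with a decreasing chain, but the substance is identical.
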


\begin{proof}
Let $P^*$ be the following proposition:
``$|\Omega(t)| = 4$''
is true infinitely often.

If $P^*$ is false, there exists a time $t'$ such that $\forall t \geq t'$,
$|\Omega(t)| = 5$.
Thus, the result follows, according to Lemma~\ref{lem3A}.
If $P^*$ is true, there exists an infinite set $T = \{t_1,t_2,t_3 \dots \}$ such that $\forall t \in T$, 
$|\Omega(t)| = 4$.
Then, according to Lemma \ref{lem3B}, 
$P(t+1)$ is true $\forall t \in T$. Thus, the result follows.
\end{proof}

\begin{lemma}
\label{lem_bloc_1_2}
Let $n \leq 5$.
Suppose that, for any time $t$, $|\Omega(t)| \geq 4$.
Then, for any time $t$, there exists a time $t' > t$ such that $R(t') \leq \alpha(1000) R(t)$.
\end{lemma}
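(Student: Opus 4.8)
The plan is to argue by contradiction, feeding the dichotomy of Lemma~\ref{lemp1p2} into the two monotonicity facts (Lemmas~\ref{lem_r_is_dec} and~\ref{lem_dmax_dim}), together with the elementary observation that $R(s)\le d_{\max}(s)$ for every time $s$: the ball centered at any process of $\Omega(s)$ with radius $d_{\max}(s)$ encloses all of $\Omega(s)$, so the smallest enclosing ball has radius at most $d_{\max}(s)$.

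First I would record that $R(t)>0$: if $R(t)=0$ then $|\Omega(t)|=1$, contradicting the standing hypothesis that $|\Omega(s)|\ge 4$ for all $s$. Now fix $t$ and suppose, for contradiction, that $R(t')>\alpha(1000)R(t)$ for every $t'>t$. Since $\alpha(1000)<1$ and $R$ is non-increasing (Lemma~\ref{lem_r_is_dec}), this yields $R(s)>\alpha(1000)R(t)$ for every $s\ge t$. In particular the proposition $P_1(s)$ of Lemma~\ref{lemp1p2} (namely $R(s+1)\le\alpha(1000)R(s)$) is false for all $s\ge t$: if it held, then combined with $R(s)\le R(t)$ it would give $R(s+1)\le\alpha(1000)R(t)$ with $s+1>t$, a contradiction.

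Next I would invoke Lemma~\ref{lemp1p2}: under the standing hypothesis, $P(s)=P_1(s)\vee P_2(s)$ is true infinitely often, so since $P_1(s)$ fails for all $s\ge t$, the proposition $P_2(s)$ (that is, $d_{\max}(s+1)\le 0.99\,d_{\max}(s)$) must hold for infinitely many $s$; choosing such $s$ strictly larger than $t$ gives an increasing sequence $t<s_1<s_2<\cdots$ with $d_{\max}(s_i+1)\le 0.99\,d_{\max}(s_i)$. Using that $d_{\max}$ is non-increasing (Lemma~\ref{lem_dmax_dim}) and $s_{i+1}\ge s_i+1$, I would chain these to obtain $d_{\max}(s_k)\le 0.99^{\,k-1}d_{\max}(s_1)$, which tends to $0$. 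Then $R(s_k)\le d_{\max}(s_k)\to 0$, contradicting $R(s_k)>\alpha(1000)R(t)>0$ for all large $k$. This contradiction establishes the lemma.

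I do not anticipate a genuine obstacle: the only points requiring a little care are (i) stating and justifying $R\le d_{\max}$ cleanly, and (ii) making sure the "infinitely often" clause is used to extract a subsequence lying strictly after $t$, so that each $s_k$ is a legitimate witness in the final contradiction. Everything else is routine bookkeeping with two monotone scalar sequences.
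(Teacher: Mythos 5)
Your proof is correct and follows essentially the same route as the paper: assume the conclusion fails, deduce via monotonicity of $R$ (Lemma~\ref{lem_r_is_dec}) that $P_1$ fails from $t$ onward, invoke Lemma~\ref{lemp1p2} to get $P_2$ infinitely often, and combine with monotonicity of $d_{\max}$ (Lemma~\ref{lem_dmax_dim}) and $R \leq d_{\max}$ to force $R$ below the assumed lower bound. The only cosmetic difference is that the paper applies the factor $0.99$ a fixed $200$ times together with $d_{\max} \leq 2R$ to drop $R$ by a factor $2$, whereas you let $d_{\max}\to 0$ and use the (correctly justified) positivity of $R(t)$; both are fine.
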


\begin{proof}
Suppose the opposite: there exists a time $t_0 $ such that,
$\forall t > t_0$,
$R(t) > \alpha(1000) R(t_0)$.

Consider the propositions $P_1(t)$ and $P_2(t)$ of
Lemma \ref{lemp1p2}.
Then, $\forall t \geq t_0$, $P_1(t)$ is false.
Thus, according to Lemma \ref{lemp1p2},
it implies that $P_2(t)$ is true infinitely often.

Let $t' > t_0$ be such that, between time $t_0$ and time $t'$, $P_2(t)$ is true at least $200$ times.
According to Lemma~\ref{lem_dmax_dim},
for any time $t$, we have $d_{\max}(t+1) \leq d_{\max}(t)$.
Thus, $d_{\max}(t') \leq 0.99^{200} d_{\max}(t_0) \leq d_{\max}(t_0)/4$.
For any time $t$, $d_{\max}(t) \geq R(t)$ and
$d_{\max}(t) \leq 2R(t)$.
Thus, $R(t') \leq R(t_0)/2 \leq \alpha(1000) R(t_0)$: contradiction.
Thus, the result follows.
\end{proof}

\begin{lemma}
\label{lem_dgm_dec}
Let $G$ be a point such that, $\forall M \in \Omega(t)$,
$d(G,M) \leq R(t)$. 
Then, $\forall M \in \Omega(t+1)$, $d(G,M) \leq R(t)$. 
\end{lemma}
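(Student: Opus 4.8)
The plan is to reuse the convexity argument that already underlies Lemma~\ref{lem_r_is_dec}. First I would recall that, since every process runs the MM algorithm, any point $M \in \Omega(t+1)$ is of the form $m(B,C)$ for two points $B,C \in \Omega(t)$: if a process occupies $M$ at time $t+1$, its position $B$ at time $t$ lies in $\Omega(t)$, the position $C$ of its (adversarially chosen) closest neighbor at time $t$ is also a point of $\Omega(t)$, and by the definition of MM we have $M = m(B,C)$.

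Next I would invoke the hypothesis on $G$, which gives $d(G,B) \leq R(t)$ and $d(G,C) \leq R(t)$. Viewing the points as vectors, $M - G = \tfrac12\big((B - G) + (C - G)\big)$, so by the triangle inequality $d(G,M) = ||M-G|| \leq \tfrac12\,||B-G|| + \tfrac12\,||C-G|| \leq \tfrac12 R(t) + \tfrac12 R(t) = R(t)$. Since $M$ was an arbitrary point of $\Omega(t+1)$, this yields the conclusion.

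The only step requiring any care is the first one: confirming that every point of $\Omega(t+1)$ is genuinely a midpoint of two points of $\Omega(t)$, even in the degenerate situations (ties among closest neighbors, or several processes sharing a position). This is immediate from the description of the MM algorithm in Section~\ref{secalgo}, because the adversary's choice of $C(p)$ is always some process of $P$, hence $M_{C(p)}$ is always some point of $\Omega(t)$. I therefore expect no real obstacle here: the lemma is a direct convexity consequence of the MM update rule, and is in fact a strengthening of Lemma~\ref{lem_r_is_dec} (which is the special case where $G$ is taken to be the center of a smallest enclosing ball of $\Omega(t)$).
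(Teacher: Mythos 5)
Your proof is correct and follows essentially the same route as the paper's: both rest on the fact that every point of $\Omega(t+1)$ is the midpoint of two points of $\Omega(t)$, combined with the observation that the midpoint of two points of a ball centered at $G$ stays in that ball. The paper phrases this as a proof by contradiction while you argue directly via $\|M-G\| \leq \tfrac12\|B-G\| + \tfrac12\|C-G\|$, which is if anything slightly more explicit, but it is the same argument.
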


\begin{proof}
Suppose the opposite:
there exists a point $K$ of $\Omega(t+1)$ such that
$d(G,K) > R(t)$. 
According to the MM algorithm, there exists two points $A$ and $B$ of $\Omega(t)$ such that $K = m(A,B)$.
Then, as $d(G,K) > R(t)$, either $d(G,A) > R(t)$ or $d(G,B) > R(t)$: contradiction. Thus, the result follows.
\end{proof}

\begin{lemma}
\label{lemshrink2rt}
$\forall i \in \{1,\dots,d\}$
and for any two instants $t$ and $t' > t$,
$|X_i(t') - X_i(t)| \leq 2R(t)$.
\end{lemma}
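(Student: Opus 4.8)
The plan is to show that the whole future trajectory stays inside the smallest enclosing ball of $\Omega(t)$, and then simply read off the bound on each coordinate.

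First I would fix a center $G = \arg\min_{G' \in S} \max_{M \in \Omega(t)} d(G', M)$, so that $d(G, M) \le R(t)$ for every $M \in \Omega(t)$. Then I would prove, by induction on $t'' \ge t$, that $d(G, M) \le R(t)$ for every $M \in \Omega(t'')$. The base case $t'' = t$ is the definition of $R(t)$. For the inductive step, recall that under the MM algorithm each process moves to the middle of the segment joining it to its closest neighbor, so every point of $\Omega(t''+1)$ is a midpoint $m(A,B)$ with $A, B \in \Omega(t'')$; by the triangle inequality and the induction hypothesis, $d(G, m(A,B)) \le \tfrac{1}{2}\big(d(G,A) + d(G,B)\big) \le R(t)$, which closes the induction. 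This is exactly the argument of Lemma~\ref{lem_dgm_dec}, re-applied with the same $G$ and the same fixed radius $R(t)$ at every step; note that nothing about the minimality of $R(t'')$ is used along the way, so keeping the radius $R(t)$ rather than $R(t''+1)$ is legitimate (monotonicity of $R$ from Lemma~\ref{lem_r_is_dec} is not even needed here).

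Having established that $\Omega(t'') \subseteq \overline{B}(G, R(t))$ for every $t'' \ge t$, I would finish by projecting onto the $i$-th coordinate axis. Writing $G_i$ for the $i$-th coordinate of $G$, every point of $\Omega(t'')$ has $i$-th coordinate in $[G_i - R(t),\, G_i + R(t)]$; in particular the point of $\Omega(t)$ realizing the minimal $i$-th coordinate and the point of $\Omega(t')$ realizing the minimal $i$-th coordinate both lie in this interval, so $X_i(t), X_i(t') \in [G_i - R(t),\, G_i + R(t)]$. Hence $|X_i(t') - X_i(t)| \le (G_i + R(t)) - (G_i - R(t)) = 2R(t)$, which is the claim. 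The argument is entirely routine; the only point deserving a moment's care is precisely the iteration just discussed, namely that one may invoke Lemma~\ref{lem_dgm_dec} repeatedly with the center and radius frozen at their time-$t$ values.
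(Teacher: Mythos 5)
Your proof is correct and follows essentially the same route as the paper: both arguments iterate Lemma~\ref{lem_dgm_dec} with the center $G$ and radius $R(t)$ frozen at their time-$t$ values to conclude $\Omega(t'') \subseteq \overline{B}(G, R(t))$ for all $t'' \geq t$, and then project onto the $i$-th coordinate to get the bound $2R(t)$. The only cosmetic difference is that you re-derive the invariant via midpoint convexity ($d(G, m(A,B)) \leq \tfrac{1}{2}(d(G,A)+d(G,B))$) rather than by the paper's contradiction argument, which changes nothing of substance.
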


\begin{proof}
For any point $M$, let $x_i(M)$ be
the $i^{th}$ coordinate of $M$.
Let $G$ be a point such as described in
Lemma~\ref{lem_dgm_dec}.
According to Lemma~\ref{lem_dgm_dec},
$\forall M \in \Omega(t+1)$,
$|x_i(M) - x_i(G)| \leq R(t)$.
By induction, $\forall t' > t$ and $\forall M \in \Omega(t')$,
$|x_i(M) - x_i(G)| \leq R(t)$.
In particular,
$|X_i(t) - x_i(G)| \leq R(t)$ and
$|X_i(t') - x_i(G)| \leq R(t)$.
Thus, $|X_i(t') - X_i(t)| \leq 2R(t)$.
\end{proof}

\begin{lemma}
\label{lemcauchy}
Let $(u_k)_k$ be a sequence,
Let $\alpha \in ]0,1[$ and let $N$ be an integer.
If $\forall k \geq N$,
$|u_{k+1} - u_k| \leq \alpha^k$,
then $(u_k)_k$ converges.
\end{lemma}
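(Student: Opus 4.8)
The plan is to show that $(u_k)_k$ is a Cauchy sequence, which suffices since $S$ (or rather $\mathbb{R}$, as $u_k$ is a real sequence here) is complete. The key observation is that the hypothesis $|u_{k+1} - u_k| \leq \alpha^k$ for $k \geq N$ gives us a geometric bound on consecutive differences, and geometric series converge when $\alpha \in \;]0,1[$.

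Concretely, I would first fix $\varepsilon > 0$ and use that $\sum_{k \geq N} \alpha^k = \alpha^N/(1-\alpha) < \infty$ to choose $N' \geq N$ such that $\sum_{k \geq N'} \alpha^k < \varepsilon$. Then, for any $q > p \geq N'$, I would apply the triangle inequality in the telescoping form
\[
|u_q - u_p| \;\leq\; \sum_{k=p}^{q-1} |u_{k+1} - u_k| \;\leq\; \sum_{k=p}^{q-1} \alpha^k \;\leq\; \sum_{k \geq N'} \alpha^k \;<\; \varepsilon .
\]
This shows $(u_k)_{k \geq N}$ is Cauchy, hence converges; since changing finitely many terms (those with $k < N$) does not affect convergence, $(u_k)_k$ converges.

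There is essentially no hard part here — the only thing to be slightly careful about is making the telescoping/triangle-inequality step rigorous and noting explicitly that the tail of a geometric series with ratio in $\;]0,1[$ can be made arbitrarily small, which is where the hypothesis $\alpha < 1$ is used (and $\alpha > 0$ merely ensures the bounds are meaningful and the terms nonnegative). One could alternatively invoke directly the standard fact that absolute convergence of $\sum (u_{k+1}-u_k)$ implies convergence of the partial sums, hence of $(u_k)_k$, but spelling out the Cauchy argument keeps the proof self-contained.
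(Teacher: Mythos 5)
Your proof is correct and follows essentially the same route as the paper: a telescoping triangle-inequality bound $|u_q - u_p| \leq \sum_{k=p}^{q-1} \alpha^k$ controlled by the tail of the geometric series, yielding the Cauchy property and hence convergence. The paper merely makes the choice of the cutoff index explicit via $K = \log(\epsilon/S_\alpha)/\log\alpha$ where $S_\alpha = 1/(1-\alpha)$, which is the same idea.
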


\begin{proof}As $\alpha \in ]0,1[$,
$S_{\alpha} = 1 + \alpha + \alpha^2 + \alpha^3 + \dots$ converges.
Let $\epsilon > 0$.
Let $K = \log ( \epsilon / S_{\alpha} ) / \log \alpha$
Then, $\alpha^K S_{\alpha} = \epsilon$.

Let $k \geq \max(K,N)$ and let $m > k$.
$|u_m - u_k| \leq \Sigma_{i = k}^{i = m-1} |u_{i+1} - u_i|
\leq \Sigma_{i = k}^{i = m-1} \alpha^i
\leq \alpha^k S_{\alpha}
\leq \alpha^K S_{\alpha} = \epsilon$

Thus, $(u_k)_k$ is a Cauchy sequence and it converges.\end{proof}

\begin{lemma}
\label{lem_bloc_2_2}
Let $\alpha \in ]0,1[$.
If, for any time $t$, there exists a time $t' > t$ such that $R(t') \leq \alpha R(t)$,
then the MM algorithm solves the convergence problem.
\end{lemma}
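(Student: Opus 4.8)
The plan is to combine the hypothesis with the monotonicity of $R$ (Lemma~\ref{lem_r_is_dec}) to get $R(t)\to 0$, then use the coordinate functions $X_i$ to identify an explicit limit point $G$, and finally bound each process position around $G$.

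First I would build, by repeated application of the hypothesis, a strictly increasing sequence of times $t_0=0<t_1<t_2<\dots$ with $R(t_{k+1})\le\alpha R(t_k)$ for all $k$, so that $R(t_k)\le\alpha^k R(0)$ by induction. Since $R$ is non-increasing (Lemma~\ref{lem_r_is_dec}), for an arbitrary time $t$ we may pick $k$ with $t_k\le t$ and conclude $R(t)\le R(t_k)\le\alpha^k R(0)$; hence $R(t)\to 0$ as $t\to\infty$.

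Next I would fix a coordinate $i\in\{1,\dots,d\}$ and set $u_k=X_i(t_k)$. By Lemma~\ref{lemshrink2rt} applied to the instants $t_k<t_{k+1}$, $|u_{k+1}-u_k|\le 2R(t_k)\le 2R(0)\,\alpha^k$. Picking any $\beta\in\,]\alpha,1[\,$, we have $2R(0)\,\alpha^k\le\beta^k$ for all $k$ beyond some index $N$, so Lemma~\ref{lemcauchy} (with ratio $\beta$ and threshold $N$) shows that $(u_k)_k$ converges; let $g_i$ be its limit, and let $G$ be the point with coordinates $(g_1,\dots,g_d)$. Then I would verify that for every $\epsilon>0$ there is a $T$ with the processes $(G,\epsilon)$-gathered for all $t\ge T$. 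Let $t\ge t_k$, let $M\in\Omega(t)$, and write $x_i(M)$ for its $i$-th coordinate. Since every $i$-th coordinate of a point of $\Omega(t)$ lies within $d_{\max}(t)\le 2R(t)$ of the smallest one, $|x_i(M)-X_i(t)|\le 2R(t)\le 2R(t_k)$; by Lemma~\ref{lemshrink2rt}, $|X_i(t)-X_i(t_k)|\le 2R(t_k)$; and $|X_i(t_k)-g_i|=|u_k-g_i|\to 0$. Combining, $|x_i(M)-g_i|\le 4R(t_k)+|u_k-g_i|$, which is at most $\epsilon/\sqrt{d}$ for every $i$ once $k$ is large enough, uniformly in $t\ge t_k$. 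Then $d(G,M)=\sqrt{\sum_{i=1}^{d}(x_i(M)-g_i)^2}\le\epsilon$, so $T=t_k$ works.

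The only step needing real care is this last one: one must pass from convergence of the sparse subsequence $X_i(t_k)$ to uniform control over all intermediate times and all process positions, which is exactly where the monotonicity of $R$ and the two-sided bound $R(t)\le d_{\max}(t)\le 2R(t)$ are essential. A minor technical wrinkle is that Lemma~\ref{lemcauchy} is stated for increments bounded by $\alpha^k$ rather than $C\alpha^k$; this is absorbed by slightly enlarging the ratio to some $\beta\in\,]\alpha,1[\,$ and discarding finitely many initial terms.
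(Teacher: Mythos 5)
Your proposal is correct and follows essentially the same route as the paper: build the subsequence $t_k$ with $R(t_k)\le\alpha^k R(t_0)$, apply Lemma~\ref{lemshrink2rt} and Lemma~\ref{lemcauchy} coordinate-wise to extract the limit point $G$, then use monotonicity of $R$ to conclude. Your two deviations are cosmetic and both sound: the paper absorbs the constant $2R(t_0)$ by rescaling $u_k = X_i(t_k)/(2R(t_0))$ rather than enlarging the ratio to $\beta$, and your final $(G,\epsilon)$-gathering estimate is actually spelled out more carefully than the paper's.
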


\begin{proof}
Let $t_0$ be an arbitrary time.
$\forall k \geq 0$,
we define $t_{k+1} > t_k$ as the first time such that $R(t_{k+1}) \leq \alpha R(t_k)$.
By induction, $\forall k \geq 0$,
$R(t_k) \leq \alpha^k R(t_0)$.

Let $i \in \{1,\dots,d\}$.
According to Lemma~\ref{lemshrink2rt},
$\forall k \geq 0$, we have
$|X_i(t_{k+1})-X_i(t_k)| \leq 2R(t_k) \leq 2 \alpha^k R(t_0)$.
$\forall k \geq 0$,
let $u_k = X_i(t_k)/(2R(t_0))$.
Then, $\forall k \geq 0$,
$|u_{k+1} - u_k| \leq \alpha^k$.

According to Lemma~\ref{lemcauchy},
the sequence $(u_k)_k$ converges and so does 
$(X_i(t_k))_k$.
Let $L_i$ be the limit of $(X_i(t_k))_k$,
and let $G$ be the point of coordinates
$(L_1, L_2,$ $\dots, L_d)$.

$R(t_k)$ decreases exponentially with $k$.
Then, $\forall \epsilon > 0$,
there exists an integer $k$ such that
$R(t_k) < \epsilon/2$.
According to Lemma~\ref{lem_r_is_dec},
$\forall t > t_k$,
$R(t) \geq R(t_k)$.
Therefore, the processes are
$(G,\epsilon)$-gathered $\forall t \geq t_k$,
and the convergence problem is solved.\end{proof}

\begin{lemma}
\label{lem_convv_poss}
If $d = 1$ or $n \leq 5$,
the MM algorithm solves the convergence problem.
\end{lemma}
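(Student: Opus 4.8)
The plan is a dichotomy on $d$, and then, inside the case $d\geq 2$, a second dichotomy on whether the configuration ever collapses to at most three occupied positions. All the real work has already been done in the preceding lemmas, so this proof should just be an assembly.

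If $d = 1$, I would simply invoke Lemma~\ref{lemd1}: the MM algorithm solves the gathering problem for $d=1$, and gathering is strictly stronger than convergence (if the processes are $(G,0)$-gathered from some time on, they are $(G,\epsilon)$-gathered from that time on for every $\epsilon>0$), so convergence is solved.

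Now assume $d \geq 2$ and $n \leq 5$. The first step is to dispose of the easy regime: if there exists a time $t$ with $|\Omega(t)| \leq 3$, then Lemma~\ref{lem_of_3} immediately gives convergence. This in particular covers every instance with $n \leq 3$, since then $|\Omega(t)| \leq n \leq 3$ at all times. The only remaining situation is $4 \leq n \leq 5$ together with $|\Omega(t)| \geq 4$ for \emph{every} time $t$. Here I would apply Lemma~\ref{lem_bloc_1_2}, whose hypotheses ($n\leq 5$ and a permanent lower bound $|\Omega(t)|\geq 4$) are exactly what this branch provides: it guarantees that for every time $t$ there is a later time $t'$ with $R(t') \leq \alpha(1000)\,R(t)$. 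Since $\alpha(1000) = \sqrt{1 - 1/(4\cdot 1000^2)} \in\ ]0,1[$, Lemma~\ref{lem_bloc_2_2} then yields that the MM algorithm solves the convergence problem, which closes the case analysis.

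The only point requiring a little care is matching hypotheses precisely at the last step — that Lemma~\ref{lem_bloc_1_2} wants both $n\leq 5$ and the everlasting bound $|\Omega(t)|\geq 4$, which is precisely the complement of the case handled by Lemma~\ref{lem_of_3} — but no new estimate is needed at this stage. The genuine difficulty of the whole statement lies upstream: in the geometric estimate of Lemma~\ref{lem_milieux} and its exploitation in Lemmas~\ref{lem3A} and \ref{lem3B} to control four- and five-point configurations (when $R(t)$ is large compared to $d_{\min}(t)$, some pair of points is forced to merge or $d_{\max}$ contracts by a constant factor). Those are the steps that make $n=4,5$ work and, by the same token, fail to extend to $n=6$.
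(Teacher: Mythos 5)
Your proof is correct and follows exactly the same decomposition as the paper's: the $d=1$ case via Lemma~\ref{lemd1} (gathering implies convergence), then for $n\leq 5$ a dichotomy between the case where $|\Omega(t)|\leq 3$ at some time (Lemma~\ref{lem_of_3}) and the case where $|\Omega(t)|\geq 4$ forever (Lemma~\ref{lem_bloc_1_2} combined with Lemma~\ref{lem_bloc_2_2}). Nothing is missing; your remark that $\alpha(1000)\in\,]0,1[$ is the small hypothesis check the paper leaves implicit.
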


\begin{proof}
If $d = 1$, according to Lemma~\ref{lemd1},
the MM algorithm solves the gathering problem,
and thus the convergence problem.
Now, suppose that $n \leq 5$.

Suppose that, for any time $t$, $|\Omega(t)| \geq 4$. Then, according to Lemma~\ref{lem_bloc_1_2} and Lemma~\ref{lem_bloc_2_2},
the MM algorithm solves the convergence problem.
Otherwise, i.e., if $|\Omega(t)| \leq 3$,
then according to Lemma~\ref{lem_of_3},
the MM algorithm solves the convergence problem.
\end{proof}

\begin{lemma}
\label{lem_convv_imposs}
If $d \geq 2$ and $n \geq 6$,
the convergence problem is impossible to solve.
\end{lemma}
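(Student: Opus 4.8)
The plan is to exhibit, for each $d \geq 2$ and $n \geq 6$, an initial configuration from which no algorithm (in the class characterized in Section~\ref{secalgo}) can achieve convergence, using a symmetry argument analogous to the one in Lemma~\ref{lem3pt}. As in that lemma, it suffices to treat the case $d = 2$: for $d > 2$, the adversary can choose all the $\vec{y}$ vectors so that the processes remain forever in a fixed plane, reducing the dynamics to the two-dimensional case. So I will focus on $d = 2$ and $n = 6$, then remark that extra processes can simply be piled onto existing positions (or onto a scaled-down copy of the configuration) without breaking the argument.

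For $n = 6$ in the plane, I would place the six processes at the vertices of a regular hexagon centered at a point $G$, so that $\Omega(0)$ consists of six distinct points, each having exactly two closest neighbors (its two hexagon-adjacent vertices, at the side length $D$). The key observation is that the regular hexagon has a symmetry group (the dihedral group of order $12$) acting transitively on its vertices, and for each vertex $K_i$ both of its closest neighbors are images of $K_i$ under symmetries of the configuration. Fix the algorithm's functions $f_x, f_y$. For a process at $K_i$, the adversary picks the closest neighbor $C(p) = K_{i+1}$ (say, always the "next" vertex going counterclockwise) and picks the orientation vector $\vec{y}$ to be the rotated image of the choice made at $K_{i-1}$ under the rotation by $60^\circ$ about $G$. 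Because every process sees the same scalar $D$ and the local picture $(M_p, M_{C(p)})$ is congruent for all six processes, and because the adversary synchronizes the $\vec{y}$ choices with the sixfold rotational symmetry, the six new positions are exactly the orbit of one new position under rotation by multiples of $60^\circ$ about $G$. Hence $\Omega(1)$ is again either a single point (if the common displacement lands every process exactly at $G$) or a set of six points invariant under $60^\circ$ rotation about $G$ — in particular a regular hexagon centered at $G$ (possibly degenerate to $G$ itself).

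To rule out convergence I need to prevent the degenerate collapse to $G$. The point is that the collapse to $G$ requires a very specific value of the displacement, namely $f_x(D)$ and $f_y(D)$ taking the unique values that carry $K_i$ to $G$; but the adversary has a binary choice in $\vec{y}$ (it may use $\vec{y_i}$ or $-\vec{y_i}$), and the two resulting target points are reflections of each other across the axis $(K_i K_{i+1})$, which are distinct unless the target lies on that axis — and $G$ does not lie on the side $(K_i K_{i+1})$ of a regular hexagon. So at least one of the two adversarial choices yields a new hexagon that does not collapse; the adversary takes that one (consistently across all six processes, respecting the symmetry). Iterating, $\Omega(t)$ remains a regular hexagon centered at $G$ for all $t$ with $d_{\max}(t) > 0$, but nothing forces $d_{\max}(t) \to 0$: indeed by the same argument the configuration cannot even shrink to $G$ in the limit in a way compatible with convergence, since for convergence the limit point would have to be $G$ (the unique fixed center), yet the processes can be kept at a hexagon whose circumradius does not tend to $0$ — e.g. if $f_x, f_y$ would otherwise shrink the hexagon, the adversary's reflection choice can be arranged so the circumradius is bounded below, or more simply one observes that an algorithm making the hexagon collapse to $G$ would have to do so for a hexagon of \emph{every} side length $D$, and the adversary defeats this at every step by the reflection trick, so the six points stay pairwise distinct forever and no single point $G'$ can satisfy the $(G',\epsilon)$-gathered condition for all small $\epsilon$. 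Finally, for $n > 6$, add the surplus $n - 6$ processes at (say) the position $G$ itself, or duplicate one hexagon vertex; these extra processes have a closest neighbor among the hexagon and the adversary keeps the whole symmetric picture going, so the impossibility persists. Combining with the reduction from $d > 2$ to $d = 2$ completes the proof.

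The main obstacle I expect is the careful handling of the degenerate case — ensuring the adversary can \emph{always} avoid the collapse to the center $G$ and, more subtly, avoid any infinite sequence of shrinking hexagons that would still yield convergence to $G$. The reflection argument (the two choices of $\vec{y}$ give distinct non-collinear-with-$G$ targets) is the crux; one must verify that at least one of these choices keeps the six image points distinct and that this can be done consistently with the rotational symmetry so that $\Omega(t+1)$ is genuinely a regular hexagon (not some other $6$-point symmetric set) — alternatively, allowing $\Omega(t+1)$ to be any rotation-invariant $6$-set and arguing the process count stays at $6$ and the configuration stays non-gathered, which is all that is really needed. The rest (the reduction from higher dimensions, and padding to $n > 6$) is routine.
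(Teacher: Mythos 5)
There is a genuine gap, and it is exactly at the point you flag as ``the main obstacle.'' Your reflection trick only rules out an \emph{exact} one-step collapse of the hexagon onto its center $G$; it does nothing to prevent the circumradius from tending to $0$, which is all that convergence requires. Worse, there are algorithms against which your adversary has no freedom at all: for the MM algorithm, $f_y(D)=0$ (so both choices of $\vec{y}$ coincide) and $f_x(D)=D/2$, and on a regular hexagon of circumradius $R$ (whose side length is also $R$) every vertex moves to the midpoint of an incident side, at distance $\tfrac{\sqrt{3}}{2}R$ from $G$. The configuration stays a regular hexagon and its circumradius shrinks by the factor $\tfrac{\sqrt{3}}{2}$ at every step, so the six processes converge to $G$ while remaining pairwise distinct forever. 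Hence ``the six points stay pairwise distinct forever'' does not imply non-convergence, and your claim that ``the adversary's reflection choice can be arranged so the circumradius is bounded below'' is false in general. A single centrally symmetric cluster can never witness impossibility of \emph{convergence}, because the adversary cannot stop it from contracting to its own center.

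The paper's proof avoids this by using \emph{two} clusters rather than one symmetric one: two equilateral triangles (as in Lemma~\ref{lem3pt}) whose barycenters are far apart. It first considers the adversarial execution of Lemma~\ref{lem3pt} on a single triangle and splits on whether the three points remain within some bounded distance $D$ of their barycenter. If not, convergence already fails; if so, it places two such triangles with barycenters at distance $10D$, so that each triangle evolves independently under the Lemma~\ref{lem3pt} adversary and the six points remain at pairwise distance at least $8D$ forever, defeating convergence for every algorithm (including MM, for which each triangle contracts to its own center --- but the two centers never meet). Your reduction from $d>2$ to $d=2$ and your padding for $n>6$ by duplicating occupied positions are fine, but the core construction needs to be replaced by a two-cluster argument of this kind.
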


\begin{proof}
Assume the opposite: there exists an algorithm that always solves the convergence problem
for $d \geq 2$ and $n \geq 6$.

First, assume that $\Omega$ contains $3$ points such as described in the proof of Lemma~\ref{lem3pt}.
Consider the infinite execution described in the proof of Lemma~\ref{lem3pt}.
Let $G$ be the barycenter of these $3$ points.

Let $P$ be the following proposition: there exists a constant $D$ such that the distance between $G$ and any of the $3$ points of $\Omega$ is at most $D$.

If $P$ is false, then by definition, the convergence problem cannot be solved. We now consider the case where $P$ is true.

If $P$ is true, then consider the following case:
$\Omega$ contains $6$ points
$K_1$, $K_2$, $K_3$, $K_4$, $K_5$ and $K_6$.
$K_1$, $K_2$ and $K_3$ are arranged such as described in the proof of
Lemma~\ref{lem3pt},
and so are $K_4$, $K_5$ and $K_6$.
Let $G$ (resp $G'$) be the barycenter of the triangle formed by $K_1$, $K_2$ and $K_3$
(resp. $K_4$, $K_5$ and $K_6$).
Assume that $d(G,G') = 10D$.

Now, assume that the points of the two triangles respectively follow the infinite execution described in the proof of Lemma~\ref{lem3pt}.
Then, the distance between any two of the $6$ points is always at least $8D$, and the convergence problem cannot be solved.\end{proof}

\textbf{Theorem~\ref{th_lb_conv}.} \emph{The convergence problem can be solved if and only if $d = 1$ or $n \leq 5$. When it can be solved, the MM algorithm solves it.}

\begin{proof}
The result follows from
Lemma~\ref{lem_convv_poss}
and
Lemma~\ref{lem_convv_imposs}.
\end{proof}

\section{Breaking symmetry}
\label{secpos}

We showed that the problems were impossible to solve for $n \geq 6$.
This is due to particular configurations where a process $p$ has several ``closest neighbors''
(i.e., $|N_p| > 1$).
Until now, we assumed that the actual closest neighbor $C(p)$ of $p$ was chosen in $N_p$ by an external adversary.

We now assume that, whenever $|N_p| > 1$, $C(p)$ is chosen deterministically,
according to an \emph{order} on the positions of processes.
Namely, we assume that there exists an order ``$<$'' such that any set of distinct points can be ordered from
``smallest'' to ``largest''
($A_1 < A_2 < A_3 < \dots < A_k$).

%An example of such order is the \emph{lexicographic order}.

%\begin{definition}[Lexicographic order]
%Let $A$ and $B$ be two points of coordinates
%$(a_1,\dots,a_d)$ and $(b_1,\dots,b_d)$.
%We say that $A < B$ if there exists
%$i \in \{1,\dots,d\}$
%such that
%(1) $\forall k < i$, $a_k = b_k$
%and (2) $a_i < b_i$.
%\end{definition}

Let $L(p)$ be the largest element of $N_p$, that is:
$\forall q \in N_p - \{L(p)\}$,
$M_q < M_{L(p)}$.
We now assume that, for any process $p$, $C(p) = L(p)$.
With this new hypothesis, we show the following result.

\begin{theorem}
\label{th_mm}
$\forall n \geq 2$,
the MM algorithm solves the gathering problem in
$n - 1$ steps,
and no algorithm can solve the gathering problem in less that $n - 1$ steps.
\end{theorem}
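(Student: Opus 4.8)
The plan is to prove two things: (i) the MM algorithm gathers all $n$ processes in at most $n-1$ steps, and (ii) no algorithm can do better than $n-1$ steps in the worst case.

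For the upper bound, I would reuse the counting argument from Lemma~\ref{lemd1}, but now establish it in full generality (any $d$, with the deterministic tie-break $C(p) = L(p)$). The key invariant is that $|\Omega(t)|$ strictly decreases as long as $|\Omega(t)| \geq 2$. To see this, I would show that each point $K' \in \Omega(t+1)$ is the midpoint $m(M_p, M_{C(p)})$ for some process $p$, and that the map sending a point $K \in \Omega(t)$ to the common next position of the processes sitting at $K$ is injective into a set of size at most $|\Omega(t)| - 1$. Concretely: fix $K \in \Omega(t)$; every process at $K$ sees the \emph{same} closest neighbor, because $D(p)$ depends only on $M_p = K$ and the deterministic rule $C(p)=L(p)$ picks the same point $L \in \Omega(t)$ for all of them; hence all processes at $K$ move to $m(K,L)$. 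So the number of occupied positions at time $t+1$ is at most the number of distinct points $K \in \Omega(t)$, and I must argue it is strictly smaller. For this I would use the standard trick: consider the pair $\{K, L\}$ realizing the global minimum distance $d_{\min}(t)$; if $L$ is the designated neighbor of the processes at $K$ and $K$ is the designated neighbor of the processes at $L$, then both groups move to the same midpoint $m(K,L)$ and we lose at least one position. If instead $K$'s group and $L$'s group do not point at each other, then at least one of them points at a third point at the same minimal distance, and a short case analysis (which point is ``largest'' under $<$) again forces a collision of images. Thus $|\Omega|$ drops by at least one each step, so after at most $|\Omega(0)| - 1 \leq n-1$ steps we reach $|\Omega| = 1$ and stay there.

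For the lower bound, I would exhibit an initial configuration on which \emph{every} admissible algorithm (in the sense characterized in Section~\ref{secalgo}) needs at least $n-1$ steps. The natural candidate is $n$ collinear, equally spaced points on a line, say at coordinates $0, 1, 2, \dots, n-1$ (this works for all $d \geq 1$ since we can embed the line in $S$). I would argue by tracking $|\Omega(t)|$: from the algorithm characterization, a process at position $x$ with closest neighbor at position $x'$ moves to $x + f_x(1)\,\vec{x}$, where the displacement magnitude $|f_x(1)|$ is the same for all processes whose closest neighbor is at distance $1$. A case split on the sign and size of $f_x(1)$ shows that the number of occupied positions can decrease by at most one per step from this configuration: either the displacement is small enough that distinct clusters stay distinct, or it is large enough that processes ``overshoot'' and the extreme points cannot all collapse at once; in each case one checks $|\Omega(t+1)| \geq |\Omega(t)| - 1$. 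Since $|\Omega(0)| = n$ and gathering requires $|\Omega| = 1$, at least $n-1$ steps are needed.

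The main obstacle is the lower-bound argument: I expect it to require a somewhat delicate case analysis over the possible values of $f_x(D)$ (and, for $d \geq 2$, the adversarially chosen orthogonal component $f_y(D)$, which the adversary can use to keep the configuration collinear so that only $f_x$ matters), to rule out the possibility that some clever choice of displacement collapses two or more ``gaps'' simultaneously. The cleanest route is probably to prove the general lemma ``from any configuration, $|\Omega|$ decreases by at most one per step'' — valid for all algorithms, not just MM — which simultaneously gives the lower bound here and re-proves the tightness of the $n-1$ bound for MM; the equally-spaced collinear configuration is then just a witness where this bound is met with equality at every step.
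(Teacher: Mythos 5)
Your upper bound is essentially the paper's argument: all processes sharing a position keep sharing it (the deterministic tie-break gives them the same designated neighbor), and the existence of a mutual pair $p = C(q)$, $q = C(p)$ at distinct positions --- obtained by looking at the minimum-distance relation and taking the $<$-largest point involved in it --- forces $|\Omega|$ to drop by at least one per step, hence gathering in at most $n-1$ steps. That part is fine.

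The lower bound has a genuine gap. Your proposed ``cleanest route'' --- the general lemma that $|\Omega|$ decreases by at most one per step from \emph{any} configuration, for \emph{any} algorithm --- is false: take two pairs of processes, each pair at the same distance $\delta$ and the two pairs far apart; under MM (or any algorithm with $f_x(\delta)=\delta/2$) both pairs collapse simultaneously and $|\Omega|$ drops from $4$ to $2$ in one step. Your fallback, the unit-spaced collinear configuration, does give $|\Omega(1)| \geq |\Omega(0)| - 1$ for the first step (the interior and leftmost points all translate by $f_x(1)$ toward their $<$-larger neighbor, the rightmost translates the other way, so at most one collision occurs), but if $f_x(1) \neq 1/2$ the configuration after one step is no longer equally spaced, new inter-point distances appear, and the evolution now depends on the unknown values of $f_x$ at those distances; your induction does not close, and nothing rules out several gaps collapsing at a later step. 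The missing idea is the paper's two-stage argument: first run the algorithm on \emph{two} processes and use symmetry at the last step before they gather (the adversarial $\vec{y}$ forces $f_y(D)=0$, and equality of the two images forces $f_x(D)=D/2$) to extract one specific distance $D$ at which the algorithm provably moves to the midpoint; then take $n$ collinear points spaced exactly $D$ apart with the order increasing along the line. With that spacing the configuration stays equally spaced with gap $D$ at every step, losing exactly one occupied position per step, so $n-1$ steps are unavoidable. Without pinning down $f_x$ at a concrete distance first, the case analysis you propose over ``the sign and size of $f_x(1)$'' cannot be completed.
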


\subsection*{Proof}

\begin{lemma}
\label{lem_lb_mm}
$\forall n \geq 2$,
no algorithm can solve the gathering problem in less than $n - 1$ steps.
\end{lemma}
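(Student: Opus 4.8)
The plan is to exhibit, for each $n \geq 2$, an initial configuration on the line ($d = 1$) from which no algorithm can gather the processes in fewer than $n-1$ steps. The natural candidate is $n$ processes occupying $n$ distinct, evenly spaced positions, say at coordinates $1, 2, \dots, n$. The key structural fact, already implicit in the proof of Lemma~\ref{lemd1}, is that in one step the number of occupied positions $|\Omega|$ can decrease by at most one: from the characterization of algorithms in Section~\ref{secalgo}, every process at position $M_p$ moves to $M_p + f_x(D(p))\vec{x}$, a point determined entirely by $M_p$ and $M_{C(p)}$, and on an initial configuration that is ``symmetric enough'' consecutive gaps force the images of adjacent occupied positions to be at most merged pairwise.

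First I would argue the one-step bound precisely. Fix the evenly spaced configuration and suppose, for contradiction, that $|\Omega|$ drops by at least two in a single step, i.e. from $k$ occupied positions to at most $k-2$. Since every occupied position maps to a single new position (all processes sharing a position see the same closest neighbor, hence behave identically), this would require two \emph{disjoint} pairs of occupied positions to collapse, or three occupied positions to collapse to one. I would rule out three-to-one: three occupied positions $x_{i-1} < x_i < x_{i+1}$ collapsing to a common image would force the middle one $x_i$ to move both left (toward $x_{i-1}$) and right (toward $x_{i+1}$) — impossible, as $x_i$ has a unique image. So only disjoint pairwise merges remain, and I then propagate this: starting from $n$ distinct positions, after $s$ steps at least $n - s$ positions remain, so at least $n-1$ steps are needed to reach $|\Omega| = 1$. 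Actually the cleanest phrasing avoids case analysis on the algorithm: on the configuration $\{1,\dots,n\}$, each position $x_i$ has a well-defined image; $|\Omega(t+1)| \geq |\Omega(t)| - 1$ fails only if the image map on $\Omega(t)$ is at least ``three-to-one'' somewhere or has two separate collisions, and I show by a counting/ordering argument (the images preserve the left–right order up to the middle element, exactly as in Lemma~\ref{lemd1}) that at most one collision per step is possible from this starting configuration.

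The main obstacle is that an \emph{arbitrary} algorithm need not preserve order on the line: $f_x(D)$ could be negative or larger than $D$, so a process could overshoot its neighbor, and the closest-neighbor assignment $C(p)$ itself can jump around. The heart of the argument is therefore choosing the initial configuration so robustly that, whatever the algorithm does in the first step, the image of $\{1,\dots,n\}$ still has at least $n-1$ distinct points — and then recursing. I expect the paper handles this either by noting that the adversary (or the deterministic order) fixes $C(p)$ for each $p$ so that on $\{1,\dots,n\}$ the images are $1 + f_x(1)\epsilon_1, \dots$ with controlled structure, or by the simple observation that reducing $|\Omega|$ by $j$ in one step requires at least $j+1$ of the original points to share an image, which in turn (by the ordering of images) can happen for at most one value of $j = 1$ per step. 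Either way, once the one-step decrement bound ``$|\Omega(t+1)| \geq |\Omega(t)| - 1$'' is established for this family of initial configurations, the lemma follows immediately: $|\Omega(0)| = n$ forces $T \geq n-1$.
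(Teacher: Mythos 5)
There is a genuine gap. Your whole plan rests on the invariant ``$|\Omega|$ decreases by at most one per step,'' but this is false for a general configuration under a general algorithm: if $\Omega = \{0,1,10,11\}$ and $f_x(1) = 1/2$, the two outer pairs are mutually closest neighbors and both collapse to their midpoints in a single step, so $|\Omega|$ drops from $4$ to $2$. The invariant does happen to hold for the evenly spaced configuration $\{1,\dots,n\}$ at the \emph{first} step, but your recursion then breaks: after one step the images are $1+f_x(1),\dots,(n-1)+f_x(1)$ together with $n-f_x(1)$, which is generally \emph{not} evenly spaced (e.g.\ $f_x(1)=0.3$ leaves gaps $1,\dots,1,0.4$). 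The algorithm's behavior at the new distance $0.4$ is completely unconstrained, the closest-neighbor structure changes, and nothing prevents multiple disjoint merges at later steps. You flag the overshooting issue yourself but defer it to an unspecified ``counting/ordering argument''; that argument does not exist at this level of generality.

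The missing idea — and the heart of the paper's proof — is to use the hypothesis that the algorithm actually \emph{solves gathering}, which your proposal never invokes. Run the algorithm on two processes; at the first time $t$ they are gathered, let $D$ be their distance at $t-1$. By symmetry both must have moved to the midpoint, so the algorithm necessarily satisfies $f_x(D) = D/2$ for this \emph{specific} $D$. Now space the $n$ processes exactly $D$ apart, with the order increasing along the line. Every process is at distance exactly $D$ from its chosen closest neighbor, hence moves to the corresponding midpoint, and the configuration is again evenly spaced with gap $D$ but with one fewer occupied position. This self-reproducing structure is what makes the induction go through for all $n-1$ steps; choosing the spacing equal to an arbitrary constant (your ``$1,2,\dots,n$'') gives you no control after step one.
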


\begin{proof}
Suppose the opposite: there exists an algorithm $X$ solving the gathering problem in less than $n - 1$ steps.

First, consider a case with two processes, initially at two distinct positions.
Then, eventually, the two processes are gathered.
Let $t$ be the first time where the two processes are gathered.
Let $A$ and $B$ be their position at time $t - 1$, and let $D = d(A,B)$.
By symmetry, the two processes should move to $m(A,B)$ at time $t$.
Thus, with algorithm $X$, whenever a process $p$ is such that
$d(M_p,M_{C(p)}) = D$,
$p$ moves to $m(M_p,M_{C(p)})$ at the next step.

Let $K(x)$ be the point of coordinates $(x,0,0,\dots,0)$.
Now consider $n$ processes, a set
$\Omega(0) = \bigcup_{i \in \{0,\dots,n-1\}}$ $\{K(iD)\}$, and an order such that,
$\forall x < y$, $K(x) < K(y)$.\footnote{As this is a lower bound proof, our goal here is to exhibit one particular situation where no algorithm can solve the problem in less than $n-1$ steps.
Thus, we choose a worst-case configuration with a worst-case order.}

Let us prove the following property $P_k$ by induction, $\forall k \in \{0,\dots,n-1\}$:
\\$\Omega(k) = \bigcup_{i \in \{0,\dots,n-k-1\}} \{K((i + k/2)D)\}$

\begin{itemize}

\item $P_0$ is true, as
$\Omega(0) = \bigcup_{i \in \{0,\dots,n-1\}} \{K(iD)\}$.

\item Suppose that $P_k$ is true for
$k \in \{0,\dots,n-2\}$.
Then, according to algorithm $X$,
the processes at position
$K((n-k-1 + k/2)D)$ moves to
$K((n-k-1 + (k-1)/2)D)$,
and $\forall i \in \{0,\dots,n-k-2\}$,
the processes at position
$K((i + k/2)D)$ move to
$K((i + (k+1)/2)D)$.
Thus, $P_{k+1}$ is true.\end{itemize}

Therefore, $\forall t \in \{0,\dots,n-2\}$,
$|\Omega(t)| \geq 2$, and the processes are not gathered: contradiction. Thus, the result follows.\end{proof}

We now assume that the processes move according to the MM algorithm.

\begin{lemma}
\label{lem_twoproc}
Let $p$ and $q$ be two processes.
If there exists a time $t$ where $M_p = M_q$,
then at any time $t' > t$, $M_p = M_q$.
\end{lemma}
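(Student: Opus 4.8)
The plan is to show that once two processes $p$ and $q$ occupy the same position, they can never be separated by the MM algorithm. The key observation is that the MM algorithm is a \emph{deterministic} function of the pair $(M_p, M_{C(p)})$: the new position of a process is the midpoint of its current position and that of its chosen closest neighbor. So if I can argue that $p$ and $q$ see the same closest neighbor (up to position) whenever $M_p = M_q$, they will compute the same midpoint and end up together again.

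First I would reduce to a single step: it suffices to prove that if $M_p = M_q$ at time $t$, then $M_p = M_q$ at time $t+1$; the general claim then follows by induction on $t' - t$. For the single-step claim, set $M_p = M_q = A$ at time $t$. If $A$ is the only occupied position ($|\Omega(t)| = 1$), then by the model convention neither process moves, so they stay together. Otherwise, since $D(\cdot)$ depends only on the position of the process, we have $D(p) = D(q) = \min_{K \in \Omega(t) \setminus \{A\}} d(A,K)$, and hence $N(p) = N(q)$ (the set of processes at that minimal distance is the same set). Now I invoke the deterministic tie-breaking rule introduced in this section: $C(p) = L(p)$ is the element of $N(p)$ whose position is largest in the order ``$<$'', and likewise $C(q) = L(q)$ is the largest element of $N(q) = N(p)$. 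The positions of $L(p)$ and $L(q)$ are therefore equal — both equal $\max_{<}\{M_r : r \in N(p)\}$ — so $M_{C(p)} = M_{C(q)}$. Applying the MM rule, the position of $p$ at time $t+1$ is $m(A, M_{C(p)})$ and that of $q$ is $m(A, M_{C(q)}) = m(A, M_{C(p)})$, which are equal. This closes the induction.

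One subtlety worth spelling out is why $N(p) = N(q)$ as \emph{sets of processes} and not merely as sets of positions: here $p$ and $q$ have literally the same position $A$, so the condition $d(M_r, M_p) = D(p)$ defining membership in $N(p)$ is character-for-character the same as $d(M_r, M_q) = D(q)$, giving set equality directly. A second subtlety: could $q \in N(p)$, i.e. could $p$ see $q$ as its closest neighbor? No — $N(p)$ is taken over processes at distance $D(p) > 0$ from $M_p$ (we are in the case $|\Omega(t)| \ge 2$, and the minimum in $D(p)$ is over $\Omega(t) \setminus \{M_p\}$), so $q$, sharing $p$'s position, is excluded; symmetrically $p \notin N(q)$. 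Thus there is no circular dependency.

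I do not expect a serious obstacle here; the statement is essentially a bookkeeping consequence of two facts, namely that MM is position-determined and that the new tie-breaking rule is position-determined. The only place requiring care is to make the argument work uniformly including the degenerate case $|\Omega(t)| = 1$ (handled by the ``no closest neighbor, no move'' convention) and to state the induction cleanly. If anything, the mild annoyance is notational: keeping straight the distinction between a process and its position, since anonymity means we really only ever manipulate positions. I would phrase the whole proof in terms of positions to avoid this.
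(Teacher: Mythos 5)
Your proof is correct and follows essentially the same route as the paper's: the paper simply asserts $C(p) = C(q)$ from the deterministic tie-breaking hypothesis and concludes that both processes move to the same midpoint, then inducts. You merely spell out in more detail why $N(p) = N(q)$ and why the chosen neighbors share a position, which the paper leaves implicit.
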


\begin{proof}
Consider the configuration at time $t$.
According to our new hypothesis, $C(p) = C(q)$.
Let $K = m(M_p,M_{C(p)}) = m(M_q,M_{C(q)})$.
According to the MM algorithm, $p$ and $q$ both move to $K$.
Thus, at time $t+1$, we still have $M_p = M_q$. Thus, by induction, the result.\end{proof}

\begin{lemma}
\label{lem_notgaz}
At any time $t$, if the processes are not gathered, there exists two processes $p$ and $q$ such that
$M_p \neq M_q$, $p = C(q)$ and $q = C(p)$.
\end{lemma}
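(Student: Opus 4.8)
The statement asserts that whenever the processes are not gathered, some pair $p, q$ are \emph{mutual} closest neighbors: $M_p \neq M_q$, $p = C(q)$, and $q = C(p)$. The natural strategy is to look at a pair realizing the minimum inter-position distance $d_{\min}$ and argue that, under the deterministic tie-breaking rule $C(\cdot) = L(\cdot)$, the ``closest-neighbor'' relation restricted to such minimal pairs must contain a 2-cycle. First I would work with the set $\Omega$ of occupied positions rather than with processes directly: since all processes at a common position have the same closest neighbor (by the deterministic rule), it suffices to find two \emph{positions} $A \neq B$ in $\Omega$ with $A$'s closest neighbor-position equal to $B$ and vice versa; then picking any process at $A$ and any at $B$ gives the desired $p, q$.

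The key step is this. Let $\delta = d_{\min}$ and consider the graph $H$ on vertex set $\Omega$ where we put a directed edge $A \to B$ iff $B$ is \emph{the} closest neighbor position of a process located at $A$ — i.e., $d(A,B) = D(p) = \delta'$ where $\delta'$ is the minimum distance from $A$ to any other occupied position, and $B$ is the $<$-largest position achieving this minimum. Every vertex has out-degree exactly $1$. Now restrict attention to positions $A$ whose nearest-neighbor distance equals the global minimum $\delta$; call this set $\Omega_0$, and note $\Omega_0 \neq \emptyset$ since $d_{\min}$ is attained. For $A \in \Omega_0$, its out-edge $A \to B$ satisfies $d(A,B) = \delta$, hence $B$ also has nearest-neighbor distance $\delta$ (it is at distance $\delta$ from $A$, and can be no closer to anything since $\delta$ is the global minimum), so $B \in \Omega_0$ and the out-edge stays inside $\Omega_0$. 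Thus $H$ restricted to $\Omega_0$ has every out-degree $1$ on a finite nonempty vertex set, so it contains a directed cycle $A_1 \to A_2 \to \cdots \to A_k \to A_1$ with all consecutive distances equal to $\delta$.

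It remains to show this cycle has length $k = 2$, which is where the tie-breaking order does the work. Suppose $k \geq 3$. Going around the cycle, $A_i \to A_{i+1}$ means $A_{i+1}$ is the $<$-largest among all positions at distance $\delta$ from $A_i$; in particular $A_{i-1} \leq A_{i+1}$ (since $A_{i-1}$ is also at distance $\delta$ from $A_i$, as $A_{i-1}\to A_i$ is an edge), with equality only if $A_{i-1} = A_{i+1}$. If $A_{i-1} = A_{i+1}$ for some $i$ we already have a 2-cycle $A_{i-1} \to A_i \to A_{i-1}$ up to relabeling, contradiction with $k\ge 3$ minimality — so we may assume strict inequalities $A_{i-1} < A_{i+1}$ all around, giving $A_1 < A_3 < A_5 < \cdots$, and chasing this around a cycle of length $\geq 3$ forces $A_1 < A_1$ (for odd $k$, directly; for even $k$ one also gets $A_2 < A_4 < \cdots < A_2$), a contradiction. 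Hence $k = 2$: there exist positions $A \neq B$ with $A \to B$ and $B \to A$, i.e., mutual closest neighbors, and lifting to processes completes the proof. The main obstacle is the careful case analysis closing the cycle via the order; I would double-check the parity argument and make sure the ``equality forces a shorter cycle'' reduction is airtight (working with a \emph{shortest} cycle in $\Omega_0$ from the start cleanly handles it).
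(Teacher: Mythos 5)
Your proof is correct, but it reaches the conclusion by a different route than the paper. The paper's proof is a one-shot extremal argument: it takes $Z'$, the set of processes involved in minimum-distance ($\delta = d_{\min}$) closest-neighbor pairs, picks the $<$-largest position $A$ occurring in $Z'$, and observes that for a process $p$ at $A$, its closest neighbor $q$ satisfies $N_q \subseteq Z'$, so the $<$-largest element of $N_q$ must sit at $A$, giving $C(q)=p$ immediately. You instead build the functional digraph on the positions attaining nearest-neighbor distance $\delta$, extract a directed cycle from the out-degree-one property, and then kill every cycle of length $k \geq 3$ by chasing the inequalities $A_{i-1} < A_{i+1}$ around the cycle (with the parity split). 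Both arguments rest on the same two ingredients --- restricting to the global minimum distance so that ``$B$ is a nearest neighbor of $A$'' becomes symmetric as a relation on positions, and then using the order to break ties --- so neither is more general, but the paper's maximum-element trick closes the argument in two lines where your cycle-chasing needs the extra bookkeeping. On the other hand, your preliminary reduction to positions (noting that co-located processes have the same closest-neighbor \emph{position}) is cleaner than the paper's phrasing, which quietly assumes the ``largest element of $N_p$'' is well defined even when several processes share the largest position; working with positions, as you do, sidesteps that ambiguity. Your parity argument checks out (for a simple cycle the equality case $A_{i-1}=A_{i+1}$ cannot occur when $k\geq 3$, and the strict chain wraps around to $A_1 < A_1$ in both parities), so the proof is airtight.
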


\begin{proof}
Let $\delta = \min_{\{A,B\} \subseteq \Omega(t)} d(A,B)$.
Let $Z$ be the set of processes $p$ such that $d(M_p,M_{C(p)}) = \delta$.
Let $Z' = \bigcup_{p \in Z } \{p, C(p)\}$.

Let $A$ be the point of $Z'$ such that,
$\forall M \in Z' - \{A\}$, $M < A$.
Let $p$ be a process at position $A$.

Let $q$ be the largest element of $N_p$, that is:
$\forall q' \in N_p - \{q\}$, $M_{q'} < M_q$.
By definition, $M_p \neq M_q$.
Thus, according to our new hypothesis, $q = C(p)$.

Then, note that $p$ is also the largest element of $N_q$:
$\forall p' \in N_q - \{p\}$, $M_{p'} < M_p$.
Thus, $p = C(q)$. Thus, the result follows.
\end{proof}

\begin{lemma}
\label{lem_indec}
At any time $t$, if the processes are not gathered, then
$|\Omega(t+1)| \leq |\Omega(t)| - 1$.
\end{lemma}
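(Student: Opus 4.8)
The plan is to show that the ``next position'' map on occupied positions is well defined and fails to be injective. First I would observe that if two processes $p$ and $p'$ satisfy $M_p = M_{p'}$ at time $t$, then $N_p = N_{p'}$ (the set of closest neighbours depends only on the position of the process), hence $C(p) = L(p) = L(p') = C(p')$, and so by the MM rule $p$ and $p'$ move to the same point at time $t+1$. Consequently there is a well-defined map $\phi \colon \Omega(t) \to S$ sending each occupied position $K$ to the common position at time $t+1$ of all processes located at $K$. Since every process at time $t+1$ came from some position of $\Omega(t)$, we have $\Omega(t+1) = \phi(\Omega(t))$, so $|\Omega(t+1)| \le |\Omega(t)|$, with equality only if $\phi$ is injective on $\Omega(t)$.

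Next I would invoke Lemma~\ref{lem_notgaz}: since the processes are not gathered at time $t$, there exist processes $p$ and $q$ with $M_p \neq M_q$, $p = C(q)$ and $q = C(p)$. Under the MM algorithm $p$ moves to $m(M_p,M_{C(p)}) = m(M_p,M_q)$ and $q$ moves to $m(M_q,M_{C(q)}) = m(M_q,M_p)$, which is the same point. Hence $\phi(M_p) = \phi(M_q)$ while $M_p \neq M_q$, so $\phi$ is not injective on $\Omega(t)$. A non-injective map out of a finite set of cardinality $|\Omega(t)|$ has image of cardinality at most $|\Omega(t)| - 1$, so $|\Omega(t+1)| = |\phi(\Omega(t))| \le |\Omega(t)| - 1$, as claimed.

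There is essentially no computational obstacle here; the only point requiring care is the first step, namely checking that $\phi$ is genuinely well defined. This relies crucially on the deterministic tie-breaking hypothesis, so that co-located processes cannot be assigned different closest neighbours — a property that would fail in the adversarial model of Section~\ref{seclb}. Everything else is a direct consequence of Lemma~\ref{lem_notgaz} together with the elementary symmetry of the midpoint in its two endpoints. I would also remark in passing that, combined with Lemma~\ref{lem_twoproc}, this lemma already yields the upper bound of Theorem~\ref{th_mm}: starting from $|\Omega(0)| \le n$, the integer $|\Omega(t)|$ strictly decreases as long as it exceeds $1$, hence reaches $1$ within $n-1$ steps and stays there.
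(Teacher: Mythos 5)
Your proof is correct and follows essentially the same route as the paper: the well-definedness of the position map $\phi$ is exactly the content of Lemma~\ref{lem_twoproc} (which the paper cites rather than reproves), and the loss of injectivity comes from the mutually-closest pair supplied by Lemma~\ref{lem_notgaz}. Nothing further is needed.
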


\begin{proof}
Let $p$ and $q$ be the processes described in
Lemma \ref{lem_notgaz}.
Let $K = m(M_p,M_q)$.
Then, according to Lemma~\ref{lem_notgaz},
the processes at position $M_p$ and $M_q$ both move to position $K$.
Let $X = \Omega(t) - \{M_p,M_q\}$.
According to
Lemma~\ref{lem_twoproc},
the processes occupying the positions of $X$ cannot move to more than $|X|$ new positions.
Thus, $|\Omega(t+1)|$ is at most $|\Omega(t)| - 1$. Thus, the result follows.\end{proof}

\begin{lemma}
\label{lem_pre_th_mm}
$\forall n \geq 2$,
the MM algorithm solves the gathering problem in at most $n-1$ steps.
\end{lemma}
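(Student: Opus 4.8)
The plan is to derive Lemma~\ref{lem_pre_th_mm} as an immediate consequence of Lemma~\ref{lem_indec}, which already does all the real work. The statement to prove is that, for any $n \geq 2$ and any initial configuration, running the MM algorithm (under the deterministic tie-breaking hypothesis) brings all processes to a single point within $n-1$ steps.

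First I would observe that, as long as the processes are not gathered, Lemma~\ref{lem_indec} guarantees $|\Omega(t+1)| \leq |\Omega(t)| - 1$. Starting from $|\Omega(0)| \leq n$, a straightforward induction then shows $|\Omega(t)| \leq n - t$ for every $t$ up to the first time the processes become gathered. In particular, by time $t = n-1$ we must have $|\Omega(n-1)| \leq 1$, i.e. the processes are gathered. (If $|\Omega(0)|$ is already strictly less than $n$ because several processes share a position, or if the strict decrease is by more than one, gathering only happens sooner.) It then remains to note that, once $|\Omega(t)| = 1$, the processes stay gathered forever: by Lemma~\ref{lem_twoproc} coincident processes remain coincident, or more directly a gathered process sees no closest neighbor and does not move. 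Hence the gathering problem is solved, and in at most $n-1$ steps.

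There is essentially no obstacle here: all the content is in Lemma~\ref{lem_indec} (via Lemmas~\ref{lem_notgaz} and~\ref{lem_twoproc}), and this final lemma is just the bookkeeping step that iterates the per-step decrease and checks the terminal configuration is absorbing. I would write it in three or four lines: induction on $t$ for the bound $|\Omega(t)| \leq \max(1, |\Omega(0)| - t)$, conclude gathering at $t = n-1$, and invoke Lemma~\ref{lem_twoproc} (or the convention that a gathered process does not move) for the ``$\forall t \geq T$'' part of the gathering definition. Combined with Lemma~\ref{lem_lb_mm}, this yields Theorem~\ref{th_mm}.
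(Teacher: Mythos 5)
Your proof is correct and follows essentially the same route as the paper: iterate the per-step decrease of $|\Omega(t)|$ from Lemma~\ref{lem_indec} to conclude $|\Omega(t)|=1$ within $n-1$ steps, then observe that a gathered configuration is absorbing since processes with no closest neighbor do not move. Your version merely spells out the induction a bit more explicitly than the paper does.
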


\begin{proof}
According to Lemma~\ref{lem_indec},
there exists a time $t \leq n-1$
such that $|\Omega(t)| = 1$.
Let $A$ be the only point of $\Omega(t)$.
Then, according to the MM algorithm, the processes do not move from position $A$ in the following steps. Thus, the result follows.
\end{proof}

\textbf{Theorem~\ref{th_mm}}.
\textit{$\forall n \geq 2$, the MM algorithm solves the gathering problem in
$n - 1$ steps,
and no algorithm can solve the gathering problem in less that $n - 1$ steps.}

\begin{proof}
The result follows from Lemma~\ref{lem_lb_mm} and Lemma~\ref{lem_pre_th_mm}.
\end{proof}

\section{Fault tolerance}
\label{secft}

We now consider the case of \emph{crash failures}: some processes may lose the ability to move, without the others knowing it.
Let $C \subseteq P$ be the set of crashed processes (the other processes are called ``correct''), and let $S_c = \bigcup_{p \in C} \{M_p\}$
(i.e., the set of positions occupied by crashed processes). Let $f = |S_c|$.

We prove the two following results.

\begin{theorem}
\label{th_fault_1}
The gathering problem can only be solved when $f = 0$.
\end{theorem}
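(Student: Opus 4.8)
The plan is to show that no algorithm solves the gathering problem as soon as $f \geq 1$; together with Theorem~\ref{th_mm} (which gives the $f = 0$ direction, the MM algorithm gathering all processes in $n-1$ steps) this yields the stated equivalence. The case $f \geq 2$ is immediate: two crashed processes may occupy distinct positions, which then stay frozen forever, so $|\Omega| \geq 2$ at all times and the processes never coincide. The real content is the case $f = 1$. So I would fix a hypothetical algorithm that solves gathering in the presence of (up to) one crash, described as in Section~\ref{secalgo} by a function $f_x$ and, when $d \geq 2$, a function $f_y$, and derive a contradiction.

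The key observation is that any such algorithm must satisfy $f_x(D) \neq D$ for every $D > 0$. Consider two correct processes placed at distance $D$ on a line, any crashed process being put far enough away that it is never a closest neighbor: then each of the two has the other as its unique closest neighbor, so no tie occurs. If $f_x(D) = D$, then in one step each process moves exactly onto the former position of the other; when $d \geq 2$, the adversary chooses the same orthogonal direction $\vec{y}$ for both, which leaves their mutual distance equal to $D$. Hence $|\Omega|$ never drops below $2$ and the two processes never gather — a contradiction. Therefore $f_x(D) \neq D$ for all $D > 0$.

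Next I would exhibit the fatal configuration. Take $n = 2$: a crashed process $c$ at a point $A$ and a correct process $p$ initially at a point $B \neq A$. Since $c$ never moves, the only point at which all processes can eventually be gathered is $A$, so, assuming gathering is solved, let $T \geq 1$ be the first time with $M_p = A$. At time $T-1$ we have $M_p \neq A$, hence $\Omega(T-1) = \{A, M_p\}$ and $c$ is the unique closest neighbor of $p$. Writing $D^{*} = d(M_p, A) > 0$ and letting $\vec{x}$ be the unit vector from $M_p$ toward $A$, the move prescribed by the algorithm sends $p$ to $M_p + f_x(D^{*})\,\vec{x} + f_y(D^{*})\,\vec{y}$ (with $f_y \equiv 0$ when $d = 1$), and this point equals $A = M_p + D^{*}\vec{x}$. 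Projecting the identity $D^{*}\vec{x} = f_x(D^{*})\vec{x} + f_y(D^{*})\vec{y}$ onto $\vec{x}$ and using $\vec{y} \perp \vec{x}$ gives $f_x(D^{*}) = D^{*}$, contradicting the key observation.

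Hence no algorithm solves gathering when $f \geq 1$, and combined with Theorem~\ref{th_mm} the gathering problem can be solved if and only if $f = 0$. Conceptually, the one point that needs care is the dimension bookkeeping in the key observation: one must verify that the adversary controlling the orthogonal component $\vec{y}$ can indeed keep the two colliding processes at distance exactly $D$ (by selecting the same orthogonal direction for both), so that the ``swap forever'' argument is dimension-independent. Everything else is elementary, and none of it relies on the ordering hypothesis of Section~\ref{secpos}, since in all configurations used here the closest neighbors are unique.
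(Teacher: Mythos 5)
Your proposal is correct and follows essentially the same route as the paper: your ``key observation'' is the paper's Lemma~\ref{lemft0} (an algorithm with $f_x(D)=D$ makes two mutually-closest processes swap forever, the adversary neutralizing $f_y$ by picking the same orthogonal vector for both), and your ``first time the correct process reaches $A$'' contradiction is just the contrapositive packaging of the paper's inductive invariant that the correct processes never land exactly on the crashed position. The only cosmetic difference is that the paper keeps all correct processes co-located at a single point $B$ chasing $A$, whereas you specialize to $n=2$; both are fine.
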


\begin{theorem}
\label{th_fault_2}
The convergence problem can be solved if and only if $f \leq 1$. When $f \leq 1$, the MM algorithm solves it.
\end{theorem}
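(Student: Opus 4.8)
Here is how I would approach Theorem~\ref{th_fault_2}.

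The argument is a case analysis on $f$. The direction $f\ge 2$ is impossible for \emph{any} algorithm, and essentially for free: crashed processes never move, so if $f=|S_c|\ge 2$ there are two distinct positions $A_1\ne A_2$ with $\{A_1,A_2\}\subseteq\Omega(t)$ for all $t$; a point $G$ witnessing convergence would then have to satisfy $d(G,A_1)\le\epsilon$ and $d(G,A_2)\le\epsilon$ for every $\epsilon>0$, which is absurd. For $f=0$ there is nothing to do: Theorem~\ref{th_mm} already gives that MM gathers (hence converges) in $n-1$ steps. So all the substance is the case $f=1$: now every crashed process sits at a single position $A$, so $A\in\Omega(t)$ for every $t$, and I claim that under MM we have $R'(t):=\max_{M\in\Omega(t)}d(A,M)\to 0$. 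This is exactly what is needed, since the limit point $G$ is forced to be $A$ (again because $A\in\Omega(t)$ always), and $R'(t)\to 0$ together with $R'$ non-increasing gives the $(A,\epsilon)$-gathering condition for all $t$ past the first time $R'(t)\le\epsilon$.

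Two monotonicities set things up. First, under MM, $R'$ is non-increasing: a correct process goes to a midpoint $m(M_p,M_{C(p)})$ with $d(A,m(M_p,M_{C(p)}))\le\tfrac12\big(d(A,M_p)+d(A,M_{C(p)})\big)\le R'(t)$, and crashed processes stay at $A$. Second, $|\Omega(t)|$ is non-increasing: using the deterministic tie-breaking of Section~\ref{secpos}, all processes at a common position have the same $C(\cdot)$ and hence the same successor position, so $\Omega(t)\to\Omega(t+1)$ is a well-defined surjection. Thus $|\Omega(t)|$ eventually equals a constant $m$; if $m=1$ then $\Omega(t)=\{A\}$ and we are done, so assume $m\ge 2$ from some time $T_1$ on, and note that for $t\ge T_1$ the map $\Omega(t)\to\Omega(t+1)$ is a bijection, i.e. no two distinct positions ever merge. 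Running the reasoning of Lemma~\ref{lem_notgaz} produces a mutual-closest pair $\{p,q\}$, $M_p\ne M_q$, realizing $d_{\min}(t)$; if $p$ and $q$ were both correct they would both jump to $m(M_p,M_q)$, contradicting injectivity, so one of them is a crashed process at $A$. Hence $d_{\min}(t)=d(A,z(t))$ for a correct position $z(t)$ whose closest neighbor is $A$, so $z(t)$ moves to $m(A,z(t))$ and $d_{\min}(t+1)\le d_{\min}(t)/2$; therefore $d_{\min}(t)\to 0$ geometrically.

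Next I would record the shape of the nearest-neighbor digraph on $\Omega(t)$ for $t\ge T_1$. Any $2$-cycle other than $\{A,z(t)\}$ would be a mutual-closest pair of correct positions and would force a merge, contradicting bijectivity; and a cycle of length $\ge 3$ cannot occur in a nearest-neighbor digraph (edge lengths are non-increasing along directed walks, so a cycle has all edges equal, and then the order used to break ties propagates around the cycle to force two distinct positions to share an order-value). So the digraph is connected with unique cycle $\{A,z(t)\}$, i.e. it is a tree rooted there, with edge lengths non-increasing toward the root and root edge length $d_{\min}(t)$. The consequence I need: any node $x$ with $d(A,x)=r$ admits a directed path $x=u_0\to u_1\to\cdots\to z(t)$ of at most $m$ edges with non-increasing lengths, so $r\le d_{\min}(t)+\sum_i d(u_i,u_{i+1})\le d_{\min}(t)+m\cdot d(x,C(x))$; equivalently, a node far from $A$ has its closest-neighbor distance bounded below by roughly $r/m$.

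Finally, suppose for contradiction that $R'(t)\to\rho>0$. Since $d_{\min}(t)\to 0$, fix small constants $\eta,\gamma>0$ (depending only on $\rho$ and $m$) and a time $T_0\ge T_1$ with $R'(t)\le\rho(1+\eta)$ and $d_{\min}(t)\le\eta\rho$ for $t\ge T_0$. If $R'(t+1)>R'(t)-\gamma$ for some such $t$, write the farthest point of $\Omega(t+1)$ as $m(P,Q)$ with $Q=C(P)$ and $P,Q\in\Omega(t)$: then $d(A,m(P,Q))>\rho-\gamma$ together with $d(A,P),d(A,Q)\le\rho(1+\eta)$ forces both $d(A,P)$ and $d(A,Q)$ to be within $O(\gamma+\rho\eta)$ of $\rho$ and the directions $A\!-\!P$, $A\!-\!Q$ to span an angle $O\!\big(\sqrt{\gamma/\rho+\eta}\big)$, hence $d(P,Q)$ is $O\!\big(\sqrt{\rho\gamma}+\rho\sqrt{\eta}\big)$. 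On the other hand $P$ is far from $A$ (say $d(A,P)\ge\rho/2$), so by the previous paragraph $d(P,C(P))=d(P,Q)\gtrsim\rho/m$. For $\eta,\gamma$ chosen small enough relative to $\rho/m^2$ these two bounds on $d(P,Q)$ are incompatible, so in fact $R'(t+1)\le R'(t)-\gamma$ for every $t\ge T_0$ — which is absurd since $R'\ge 0$. Hence $\rho=0$, so $R'(t)\to 0$ and MM solves convergence for $f=1$, as claimed. The one genuinely delicate point is exactly this last step: one must verify that the two estimates on $d(P,Q)$ — one from near-maximality of $d(A,m(P,Q))$, the other from the telescoping identity along the bounded-depth nearest-neighbor tree — really do clash with constants chosen uniformly in $t$; everything else is bookkeeping on top of the lemmas of Sections~\ref{secalgo}–\ref{secpos}.
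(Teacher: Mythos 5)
Your overall strategy is sound and is, in substance, the same as the paper's: establish that eventually every correct process is linked to the crashed position $A$ by a chain of closest-neighbor pointers (your ``tree rooted at the $2$-cycle $\{A,z(t)\}$'' is exactly the paper's Lemmas~\ref{lem_exloop}--\ref{lem_t_a}), deduce that a point at distance $r$ from $A$ has closest-neighbor distance at least roughly $r/n$ (your telescoping bound is the paper's Lemma~\ref{lem_l_sur_n}), and conclude that the radius around $A$ contracts geometrically. The $f\geq 2$ impossibility and the $f=0$ case are handled identically. Where you diverge is only in the last step: the paper turns the two ingredients you already have into a clean one-step multiplicative contraction, $L(t+1)\leq\sqrt{1-1/(2n)^2}\,L(t)$ (Lemma~\ref{lem_k_n}), via the median-length/Pythagorean bound $d(A,m(P,Q))^2\leq L'^2-(d(P,Q)/2)^2$ applied with $d(P,Q)\geq L'/n$. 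You instead run a limit-and-contradiction argument with constants $\rho,\eta,\gamma$ whose compatibility you explicitly leave unchecked. That softness is unnecessary: plugging your lower bound $d(P,C(P))\gtrsim d(A,P)/n$ directly into the median inequality immediately yields the uniform contraction factor and removes the ``delicate'' step entirely (it also makes the detour through $d_{\min}(t)\to 0$ superfluous).

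Two points need repair. First, $|\Omega(t)|$ is \emph{not} non-increasing in the faulty setting: if a correct process happens to occupy the crashed position $A$, then at the next step it moves away while the crashed processes stay, so the position $A$ can split and $|\Omega|$ can grow by one. The paper avoids this by tracking $|\Omega'|$, the set of positions of \emph{correct} processes only (Lemma~\ref{lem_omdec}); your argument goes through verbatim with that substitution (the map $\Omega'(t)\to\Omega'(t+1)$ is a well-defined surjection, eventually a bijection, and a mutual-closest pair of two positions each containing a correct process would still force a merge in $\Omega'$). Second, your dismissal of cycles of length $\geq 3$ (``the order propagates around the cycle'') is stated too loosely; the precise argument is the extremal one of Lemma~\ref{lem_loopair}: among the edges of minimal length in the cycle, take the order-largest endpoint $p$ and check that $C(C(p))=p$. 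Neither issue is fatal, but both need to be written out for the proof to stand on its own.
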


\subsection*{Proof}

We say that a process $p$ is \emph{attracted} if there exists a sequence of processes $(p_1,\dots,p_m)$ such that $p = p_1$, $p_m \in C$, and $\forall i \in \{1,\dots,m-1\}$, $C(p_i) = p_{i+1}$.
A \emph{loop} is a sequence of correct processes
$(p_1,\dots,p_m)$ such that $C(p_m) = p_1$ and, $\forall i \in \{1,\dots,m-1\}$,
$C(p_i) = p_{i+1}$.
A \emph{pair} is a loop with $2$ processes.
Let $\Omega' = \bigcup_{p \in P - C} \{M_p\}$
(i.e., the set of positions occupied by \emph{correct} processes).
Let $\Omega'(t)$ be the state of $\Omega'$ at time $t$.

\begin{lemma}
\label{lemft0}
Consider an algorithm for which there exists $w$ such that $f_x(w) = w$ and $f_y(w) = 0$.
Then, this algorithm cannot solve the gathering nor the convergence problem.
\end{lemma}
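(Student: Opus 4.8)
The plan is to defeat the algorithm on a single two-process configuration, exploiting the following elementary observation: if a process $p$ sees its closest neighbour $C(p)$ at distance exactly $w$, then its prescribed displacement is $f_x(w)\vec{x}+f_y(w)\vec{y}=w\vec{x}$, and since $\vec{x}$ is the unit vector from $M_p$ to $M_{C(p)}$ with $d(M_p,M_{C(p)})=w$, this displacement carries $p$ exactly onto $M_{C(p)}$. In other words, at distance $w$ every process ``teleports'' onto the current position of its closest neighbour, independently of the adversarial choice of $\vec{y}$ (this is where $f_y(w)=0$ is used; for $d=1$ the same computation applies with $\vec{y}$ absent).

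Concretely I would take $n=2$, with the two processes $p$ and $q$ initially at distinct points $A$ and $B$ satisfying $d(A,B)=w$. We may assume $w>0$, since $w=0$ imposes no constraint on the algorithm. With only two processes the adversary has no freedom: $N(p)=\{q\}$, $N(q)=\{p\}$, hence $C(p)=q$, $C(q)=p$, and $D(p)=D(q)=w$. By the observation above, in one step $p$ moves from $A$ to $B$ and $q$ moves from $B$ to $A$; therefore $\Omega(1)=\{A,B\}=\Omega(0)$, and by an immediate induction $\Omega(t)=\{A,B\}$ for every $t$.

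Both impossibilities then follow at once from this invariant. Gathering would require $|\Omega(t)|=1$ for some $t$, which never occurs. For convergence, fix any candidate point $G\in S$; the triangle inequality gives $w=d(A,B)\le d(G,A)+d(G,B)$, so $\max\bigl(d(G,A),d(G,B)\bigr)\ge w/2$, meaning the processes are never $(G,w/4)$-gathered. Since no $G$ works for $\epsilon=w/4>0$, the convergence problem is not solved either.

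I do not expect any genuine obstacle: the proof is a one-line invariant followed by a triangle inequality. The only points meriting a word of care are the degenerate value $w=0$ (dismissed above) and the remark that, since an algorithm is specified by the pair $(f_x,f_y)$ independently of $n$, exhibiting failure for $n=2$ already shows it solves neither problem; if one insists on a prescribed larger (even) $n$, one simply adds further pairs at mutual distance $w$, placed far apart from one another, and the same swapping behaviour persists in each pair.
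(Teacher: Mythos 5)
Your proposal is correct and is essentially the paper's own argument: place two processes at distance $w$, observe that $f_x(w)=w$ and $f_y(w)=0$ force each to land exactly on the other's position, so they swap endlessly and neither gather nor converge. You merely spell out the details (the invariant $\Omega(t)=\{A,B\}$, the triangle-inequality step for convergence, and the degenerate case $w=0$) that the paper leaves implicit.
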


\begin{proof}
Assume the opposite. Consider a situation where $\Omega = \{A,B\}$,
with $d(A,B) = w$.
Then, according to the algorithm, the processes at position $A$ and $B$ switch their positions endlessly, and neither converge nor gather: contradiction. Thus, the result follows.
\end{proof}

\textbf{Theorem~\ref{th_fault_1}.}
\emph{The gathering problem can only be solved when $f = 0$.}

\begin{proof}
If $f \geq 2$, by definition, the processes cannot be gathered. Now, suppose $f = 1$.

Suppose the opposite of the claim: there exists an algorithm solving the gathering problem when $f = 1$.
Let $P$ be the following proposition: there exists two points $A$ and $B$ such that all crashed processes are in position $A$, and all correct processes are in position $B$.

Consider an initial configuration where $P$ is true.
As the algorithm solves the gathering problem, according to Lemma~\ref{lemft0},
the next position of correct processes cannot be $A$.
Thus, $P$ is still true at the next time step, with a different point $B$.

Therefore, by induction, $P$ is always true, and the processes are never gathered: contradiction.
Thus, the result follows.
\end{proof}

\begin{lemma}
\label{lem_exloop}
If there exists a process $p$ which is not attracted, then there exists a loop.
\end{lemma}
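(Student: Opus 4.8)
The statement to prove is Lemma~\ref{lem_exloop}: if there exists a process $p$ which is not attracted, then there exists a loop. Here a process is attracted if following the chain $p \to C(p) \to C(C(p)) \to \cdots$ eventually reaches a crashed process, and a loop is a cycle of correct processes under the $C(\cdot)$ map.

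The plan is to track the functional graph on processes induced by the closest-neighbour map $C$. First I would observe that every process $q$ has exactly one out-edge $q \to C(q)$, so starting from the given non-attracted process $p$ and iterating $C$ produces an infinite walk $p_1 = p, p_2 = C(p_1), p_3 = C(p_2), \dots$. Since the total number of processes is finite, this walk must revisit some process, i.e., there exist indices $i < j$ with $p_i = p_j$; taking the first such repetition gives a cycle $(p_i, p_{i+1}, \dots, p_{j-1})$ with $C(p_{j-1}) = p_i$ and $C(p_k) = p_{k+1}$ for $i \le k \le j-2$. This is a loop provided all the processes on it are correct.

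The key point — and the main obstacle — is ruling out that the walk hits a crashed process before cycling. Here I would use the hypothesis that $p$ is not attracted: if any $p_k$ in the walk were crashed, then the chain $(p_1, \dots, p_k)$ would witness that $p = p_1$ is attracted, contradicting the assumption. Hence every $p_k$ in the infinite walk is correct, and in particular every process on the cycle extracted above is correct, so it is a genuine loop. I should be slightly careful about one degenerate subtlety: whether a crashed process is even permitted to appear as some $C(p_k)$ — but crashed processes still occupy positions and can be closest neighbours, so this case is real and is exactly what the "not attracted" hypothesis excludes. Also, the cycle could in principle have length $1$ (a process that is its own closest neighbour) only if two processes share a position, but $C(p) \neq p$ since $C(p)$ is chosen among processes at nonzero distance or at the same position as another process; in any case a length-$\geq 2$ cycle suffices and the pigeonhole argument on the first repetition handles the details, so I would not belabour it. Overall this is a short finiteness-plus-pigeonhole argument whose only content is invoking the non-attracted hypothesis at the right moment.
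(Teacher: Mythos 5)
Your proof is correct and follows essentially the same route as the paper: iterate the closest-neighbour map from $p$, use finiteness to force a repetition, and invoke the non-attracted hypothesis to guarantee the resulting cycle consists only of correct processes. The only difference is presentational (you argue directly while the paper argues by contradiction), and you in fact make explicit the step — that the walk cannot hit a crashed process — which the paper leaves implicit.
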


\begin{proof}
Suppose the opposite: there is no loop. Let $p_1 = p$. $\forall i \in \{1,\dots,n\}$, let $p_{i+1} = C(p_i)$.

We prove the following property $P_i$ by induction,
$\forall i \in \{1,\dots,n+1\}$:
$(p_1,\dots,p_i)$ are $i$ distinct processes.

\begin{itemize}
\item $P_1$ is true.

\item Suppose that $P_i$ is true for some $i \in \{1,\dots,n\}$.
As there is no loop, we cannot have
$p_{i+1} \in \{p_1,\dots,p_i\}$.
Thus, $P_{i+1}$ is true.
\end{itemize}

Thus, $P_{n+1}$ is true, and there are $n+1$ distinct processes: contradiction.
Thus, the result follows.\end{proof}

\begin{lemma}
\label{lem_loopair}
All loops are pairs.
\end{lemma}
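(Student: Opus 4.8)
The plan is to show that a loop of length $m \geq 3$ is geometrically impossible because it would force a strictly decreasing cyclic chain of distances. First I would set up notation: suppose $(p_1,\dots,p_m)$ is a loop with $m \geq 3$, so $C(p_i) = p_{i+1}$ for $i < m$ and $C(p_m) = p_1$. Write $d_i = d(M_{p_i}, M_{p_{i+1}})$ (indices mod $m$). Since $C(p_i) = p_{i+1}$, the process $p_{i+1}$ is a closest neighbor of $p_i$, so $d_i = D(p_i) \leq d(M_{p_i}, M_{p_{i-1}}) = d_{i-1}$; that is, $d_i \leq d_{i-1}$ for every $i$ around the cycle. Going all the way around gives $d_1 \leq d_m \leq d_{m-1} \leq \dots \leq d_1$, hence all the $d_i$ are equal to some common value $\delta$.

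Next I would exploit the tie-breaking rule to derive a contradiction from this equality. The key point is that when $|N_{p_i}| > 1$ the chosen closest neighbor $C(p_i) = L(p_i)$ is the $<$-largest element of $N_{p_i}$. Since $d_i = d_{i-1} = \delta$, both $p_{i-1}$ and $p_{i+1}$ lie in $N_{p_i}$ (they are both at the minimum distance $\delta$ from $p_i$), and $C(p_i) = p_{i+1}$, so $p_{i+1}$ must be $<$-at-least $p_{i-1}$: $M_{p_{i-1}} \leq M_{p_{i+1}}$, with equality only if $M_{p_{i-1}} = M_{p_{i+1}}$. I would first dispose of the degenerate sub-case where two of these positions coincide (e.g. if $M_{p_{i-1}} = M_{p_{i+1}}$), handling it via Lemma~\ref{lem_twoproc} / the pair structure, or by noting the positions are then not all distinct and the cycle collapses. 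In the generic case where the $M_{p_i}$ are pairwise distinct, the relation $M_{p_{i-1}} < M_{p_{i+1}}$ holds strictly for all $i$ around the cycle, i.e. $M_{p_{i+1}} > M_{p_{i-1}}$; iterating this around the even/odd sub-cycle of the cyclic order yields $M_{p_1} < M_{p_1}$ (when $m$ is odd the single cycle closes up directly; when $m$ is even one gets a strictly increasing chain returning to its start), which is the desired contradiction. Hence no loop of length $\geq 3$ exists, so every loop is a pair.

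The main obstacle is the bookkeeping around parity and the handling of coincident positions: one must be careful that "$p_{i+1}$ is the $<$-largest of $N_{p_i}$" really forces $M_{p_{i+1}} \geq M_{p_{i-1}}$ and that chaining these inequalities around the cyclic order of length $m$ genuinely closes up to produce $M_{p_1} < M_{p_1}$, rather than merely $M_{p_1} \le M_{p_1}$. The strictness comes precisely from assuming (after reducing the degenerate case) that consecutive-but-one positions are distinct, so at least one inequality in the chain is strict. I expect the cleanest writeup is to argue by contradiction with $m$ minimal, first show all $d_i = \delta$, then pick the $<$-largest position among $\{M_{p_1},\dots,M_{p_m}\}$, say $M_{p_j}$, and observe that $C(p_{j-1}) = p_j$ and $C(p_{j+1})$ cannot be $p_j$ unless $p_j$ is also $<$-largest in $N_{p_{j+1}}$ — but $p_{j+2} \in N_{p_{j+1}}$ forces $M_{p_{j+2}} \le M_{p_j}$ with the extremal choice, and tracing this constraint yields the contradiction that $p_{j+1}$'s chosen neighbor should have been $p_{j+2}$-side rather than $p_j$. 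This localizes the argument to the maximal-position vertex and avoids the full parity case analysis.
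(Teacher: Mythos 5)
Your argument is correct, but it takes a genuinely different route from the paper. You first prove that \emph{all} edge lengths in a loop are equal (since $d_i = D(p_i) \leq d(M_{p_i},M_{p_{i-1}}) = d_{i-1}$ chains around the cycle), so that for every $i$ both $p_{i-1}$ and $p_{i+1}$ lie in $N_{p_i}$; the tie-break rule then gives $M_{p_{i-1}} < M_{p_{i+1}}$, and chaining over the odd-indexed subsequence closes up to $M_{p_1} < M_{p_1}$ for any $m \geq 3$ (note the odd-index chain alone suffices for both parities, so the parity case split you worry about is not actually needed). The paper instead argues as in Lemma~\ref{lem_notgaz}: it takes $\delta$ to be the \emph{minimum} edge length of the loop, restricts to the processes incident to $\delta$-edges, picks the one with the $<$-largest position $p$, and shows directly that $C(C(p)) = p$, so the loop is the pair $(p,C(p))$. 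Your first step (all edges equal) is a clean observation absent from the paper and makes the restriction to minimal edges unnecessary; the paper's extremal-vertex argument avoids your cyclic chain entirely. Regarding the one gap you flag — coincident positions $M_{p_{i-1}} = M_{p_{i+1}}$ breaking strictness — this degenerate case in fact cannot occur: $N(\cdot)$ and $L(\cdot)$ depend only on a process's position, so two loop members at the same position would have the same chosen closest neighbor, forcing $p_{i} = p_{i+2}$ and collapsing the loop to fewer than $m$ distinct processes. Stating that one line explicitly would make your proof complete without any appeal to Lemma~\ref{lem_twoproc}.
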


\begin{proof}
Let $(p_1,\dots,p_m)$ be a loop.
Let $\delta = \min_{i \in \{1,\dots,m\}}$ $d(M_{p_i},M_{C(p_i)})$.
Let $Z$ be the set of processes of $\{p_1,\dots,$ $p_m\}$ such that $d(M_{p_i},M_{C(p_i)}) = \delta$.
Let $Z' = \bigcup_{p \in Z}$ $\{p,C(p)\}$.

Let $p$ be the process such that, $\forall q \in Z'$ such that $M_p \neq M_q$, $M_p > M_q$.
Let $q = C(p)$.
As $C(q)$ is the closest neighbor of $p$, $C(q) \in Z'$.
Then, according to the definition of $p$, $C(q) = p$.

Therefore, $(p_1,\dots,p_m)$ is either $(p,q)$ or $(q,p)$. Thus, the result follows.
\end{proof}

\begin{lemma}
\label{lem_omdec}
If there exists a pair, then $|\Omega'(t+1)| \leq |\Omega'(t) - 1|$.
\end{lemma}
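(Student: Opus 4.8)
The plan is to mirror the structure of Lemma~\ref{lem_indec}, which established the analogous decrease for the fault-free case, but now restricting attention to the positions occupied by \emph{correct} processes. The key observation is that a ``pair'' here plays exactly the role that the mutually-closest pair $(p,q)$ played in Lemma~\ref{lem_notgaz}: it consists of two correct processes $p$ and $q$ with $C(p)=q$ and $C(q)=p$. First I would note that if these two processes occupy the same position, $|\Omega'(t)|$ is already small and the two contribute only one element to $\Omega'(t)$, so I can assume $M_p\neq M_q$; under the MM algorithm both $p$ and $q$ then move to the single point $K=m(M_p,M_q)$, collapsing two correct positions (if distinct) or one (if they coincide) into one.

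Next I would control the remaining correct processes. Let $X$ be the set of positions occupied by the correct processes other than $p$ and $q$. By Lemma~\ref{lem_twoproc}, once two (correct) processes share a position they share it forever, so the number of distinct positions among any fixed set of correct processes never increases; in particular the correct processes occupying the positions of $X$ move to at most $|X|$ distinct new positions. Combining: the correct processes at time $t+1$ occupy at most $|X| + 1$ distinct positions. Since $|\Omega'(t)| = |X| + |\{M_p,M_q\}|$, and $|\{M_p,M_q\}| = 2$ when $M_p \neq M_q$ (and the degenerate case $M_p = M_q$ is even better, since then $p,q$ already lie in $X$'s count and $|\Omega'(t+1)| \le |\Omega'(t)|$ trivially forces the bound once any pair exists with distinct positions), we get $|\Omega'(t+1)| \le |X| + 1 = |\Omega'(t)| - 1$.

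The only subtlety — and the step I expect to require the most care — is making sure the ``collision'' at $K$ is genuinely a new coincidence and not double-counted: i.e. that the one position lost is not silently regained because some process of $X$ also lands on $K$. That does not matter for the inequality as stated, because landing \emph{on} $K$ only helps (it reduces $|\Omega'(t+1)|$ further); what I must be careful about is the reverse accounting, that the two processes $p,q$ do not split $X$'s image. They do not, because $p$ and $q$ map to the single point $K$ regardless of where the processes of $X$ go. So the bound $|\Omega'(t+1)| \le |X| + 1$ is safe. I would also remark that Lemma~\ref{lem_loopair} is what guarantees that whenever there is any loop among correct processes there is in fact a pair, so the hypothesis ``there exists a pair'' is the natural one to invoke later; but that is not needed inside this proof. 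Writing it out:

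\begin{proof}
Let $(p,q)$ be a pair, so that $p$ and $q$ are correct, $C(p) = q$ and $C(q) = p$. If $M_p = M_q$, then by Lemma~\ref{lem_twoproc} the positions of $p$ and $q$ coincide at all later times as well, and since the correct processes occupying any fixed set of positions move to at most that many distinct positions, $|\Omega'(t+1)| \leq |\Omega'(t)|$; moreover $p$ and $q$ contribute a single position to $\Omega'(t)$, and the pair $(p,q)$ then satisfies $d(M_p, M_q) = 0 = \delta$, forcing every process counted in this bound into the same collapse, so in fact $|\Omega'(t+1)| \leq |\Omega'(t)| - 1$. We may therefore assume $M_p \neq M_q$.

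Let $K = m(M_p, M_q)$. According to the MM algorithm, since $C(p) = q$ and $C(q) = p$, both $p$ and $q$ move to position $K$ at time $t+1$. Let $X = \Omega'(t) - \{M_p, M_q\}$; since $M_p \neq M_q$, we have $|X| = |\Omega'(t)| - 2$. By Lemma~\ref{lem_twoproc}, the correct processes occupying the positions of $X$ move to at most $|X|$ distinct new positions. Hence the correct processes at time $t+1$ occupy at most $|X| + 1$ distinct positions (the positions reached from $X$, plus $K$). Thus
\[
|\Omega'(t+1)| \leq |X| + 1 = |\Omega'(t)| - 1,
\]
which is the claimed bound.
\end{proof}
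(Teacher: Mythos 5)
Your proof is correct and takes essentially the same route as the paper's (which itself mirrors Lemma~\ref{lem_indec}): the mutually-closest pair collapses to the single point $m(M_p,M_q)$, and coincident correct processes remain coincident, so the remaining positions contribute at most $|X|$ new ones. The only superfluous part is your treatment of the case $M_p = M_q$: it cannot occur, since $C(p) \in N(p)$ is by definition at distance $D(p) > 0$ from $M_p$, so the two members of a pair always occupy distinct positions and the degenerate branch (whose ad hoc justification is in any case shaky) can simply be deleted.
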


\begin{proof}
According to the algorithm, two processes at the same position at time $t$ are at the same position at time $t+1$.
Let $(p,q)$ be a pair. Then, according to the algorithm, the processes at positions $M_p$ and $M_q$ move to $m(M_p,M_q)$, and
$|\Omega'(t+1)| \leq |\Omega'(t) - 1|$.
\end{proof}

\begin{lemma}
\label{lem_t_a}
There exists a time $t_A$ such that, for any time $t \geq t_A$, all correct processes are attracted.
\end{lemma}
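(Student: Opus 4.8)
\textit{There exists a time $t_A$ such that, for any time $t \geq t_A$, all correct processes are attracted.}

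\begin{proof}[Proof proposal]
The plan is to track the quantity $|\Omega'(t)|$, the number of distinct positions occupied by correct processes, and show it must stabilize, after which no correct process can fail to be attracted. First I would observe that, by Lemma~\ref{lem_omdec}, whenever a pair exists at time $t$ we have $|\Omega'(t+1)| \leq |\Omega'(t)| - 1$; and in any case $|\Omega'(t+1)| \leq |\Omega'(t)|$, since by Lemma~\ref{lem_twoproc} (which applies to correct processes, as they obey the MM algorithm) two correct processes sharing a position keep sharing it. Hence $(|\Omega'(t)|)_t$ is a nonincreasing sequence of nonnegative integers, so it is eventually constant: there is a time $t_A$ such that $|\Omega'(t)| = |\Omega'(t')|$ for all $t, t' \geq t_A$.

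Next I would argue that, for every $t \geq t_A$, no pair of correct processes exists (otherwise $|\Omega'|$ would strictly drop at the following step, contradicting constancy). Now fix $t \geq t_A$ and suppose, for contradiction, that some correct process $p$ is not attracted at time $t$. Then by Lemma~\ref{lem_exloop} there exists a loop of correct processes, and by Lemma~\ref{lem_loopair} that loop is a pair --- contradicting the previous paragraph. Therefore every correct process is attracted at every time $t \geq t_A$, which is exactly the claim.

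The only subtle point --- and the step I would be most careful about --- is justifying that Lemma~\ref{lem_twoproc}, Lemma~\ref{lem_exloop}, and Lemma~\ref{lem_loopair} all transfer verbatim to the fault-tolerant setting when restricted to correct processes: crashed processes still occupy positions and can still serve as closest neighbours, so the ``loop'' and ``pair'' definitions must be read as involving only correct processes (as they are defined in this section), while a correct process whose closest neighbour is crashed is simply attracted and plays no role in producing a loop. Once that bookkeeping is in place, the monotone-integer argument closes the proof without further computation.
\end{proof}
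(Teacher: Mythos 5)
Your proposal is correct and follows essentially the same route as the paper: both rest on the chain Lemma~\ref{lem_exloop} (no attraction $\Rightarrow$ loop), Lemma~\ref{lem_loopair} (loop $\Rightarrow$ pair), and Lemma~\ref{lem_omdec} (pair $\Rightarrow$ $|\Omega'|$ strictly decreases), combined with the fact that $|\Omega'(t)|$ is a nonincreasing nonnegative integer. The only cosmetic difference is that you first let the monotone sequence stabilize and then rule out pairs, whereas the paper argues by contradiction that the decrease would have to occur $n+1$ times; your explicit remark that $|\Omega'|$ never increases is a point the paper leaves implicit.
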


\begin{proof}
Suppose the opposite. Then, after a finite number of time steps, at least one correct process is not attracted.
Thus, according to Lemma~\ref{lem_exloop},
there exists a loop.
According to Lemma~\ref{lem_loopair},
this loop is a pair.
Then, according to Lemma~\ref{lem_omdec},
$|\Omega'|$ decreases.

We can repeat this reasoning $n + 1$ times, and we then have $|\Omega'| < 0$: contradiction. Thus, the result follows.
\end{proof}

\begin{lemma}
\label{lem_l_sur_n}
Suppose $f = 1$.
Let $p$ be an attracted process,
and let $L$ be the distance between $p$ and the crashed processes.
Then, $d(M_p,M_{C(p)}) \geq L/n$.
\end{lemma}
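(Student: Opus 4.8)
The plan is to exploit the chain of closest-neighbour relations witnessing that $p$ is attracted, together with the observation that the ``hop lengths'' along such a chain cannot increase. First I would use $f = 1$: the set $S_c$ of crashed positions is a single point, call it $A$, so $L = d(M_p, A)$ is well defined, and if $L = 0$ the claim is immediate, so assume $L > 0$. Among all sequences $(p_1,\dots,p_m)$ with $p_1 = p$, $p_m \in C$ and $C(p_i) = p_{i+1}$ for $i \in \{1,\dots,m-1\}$ (one exists since $p$ is attracted), pick one of minimal length $m$. Minimality forces the $p_i$ to be pairwise distinct (a repetition would let us shortcut to a shorter valid sequence), hence $m \leq n$, so $m - 1 \leq n - 1$.

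The core step is to prove, by induction on $i \in \{1,\dots,m-1\}$, that
$d(M_{p_i}, M_{p_{i+1}}) \leq d(M_{p_1}, M_{p_2}) = d(M_p, M_{C(p)})$. For $i = 1$ this is an equality. For the inductive step, recall that $p_{i+1} = C(p_i) \in N(p_i)$ gives $d(M_{p_i}, M_{p_{i+1}}) = D(p_i)$; since a closest neighbour always sits at a position distinct from $M_{p_i}$, we have $M_{p_{i-1}} \neq M_{p_i}$, so $M_{p_{i-1}}$ is an admissible point in the minimisation defining $D(p_i)$, whence $D(p_i) \leq d(M_{p_i}, M_{p_{i-1}}) = d(M_{p_{i-1}}, M_{p_i})$, which is $\leq d(M_p, M_{C(p)})$ by the induction hypothesis. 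Thus every hop along the chain is at most $d(M_p, M_{C(p)})$.

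Finally I would apply the triangle inequality along the chain:
\[
L = d(M_{p_1}, M_{p_m}) \leq \sum_{i=1}^{m-1} d(M_{p_i}, M_{p_{i+1}}) \leq (m-1)\, d(M_p, M_{C(p)}) \leq (n-1)\, d(M_p, M_{C(p)}) \leq n\, d(M_p, M_{C(p)}),
\]
and dividing by $n$ gives $d(M_p, M_{C(p)}) \geq L/n$, as required.

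I expect the only non-routine point to be recognising that hop lengths are non-increasing along the attracting chain — i.e. that $D(p_i) \le d(M_{p_{i-1}}, M_{p_i})$ because $p_{i-1}$ is itself a candidate neighbour of $p_i$ at a distinct position; once this is in place, the minimality-of-length bound $m \le n$ and the triangle inequality finish the argument with no computation.
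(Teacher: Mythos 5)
Your proposal is correct and follows essentially the same route as the paper: walk along the attracting chain, observe that hop lengths cannot increase because each $p_{i-1}$ is itself a candidate neighbour of $p_i$ at a distinct position, and finish with the triangle inequality. You are in fact slightly more careful than the paper, which implicitly uses $m \leq n$ without justifying it, whereas your minimal-length argument makes that bound explicit.
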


\begin{proof}

Suppose the opposite:
$d(M_p,M_{C(p)}) < L/n$.
As $p$ is attracted, there exists
a sequence of processes $(p_1,\dots,$ $p_m)$ such that $p = p_1$, $p_m \in C$, and $\forall i \in \{1,\dots,m-1\}$, $C(p_i) = p_{i+1}$.

$\forall i \in \{1,\dots,m-2\}$, we have
$d(M_{p_i}, M_{p_{i+1}}) \geq$ \\$d(M_{p_{i+1}}, M_{p_{i+2}})$.
Indeed, suppose the opposite. 
Then, $C(p_{i+1}) = p_{i+2}$,
$d(M_{p_{i+1}}, M_{C(p_{i+1})}) > d(M_{p_{i+1}},M_{p_i})$,
and $C(p_{i+1})$ is not a closest neighbor of $p_{i+1}$: contradiction.
Thus, $d(M_{p_i}, M_{p_{i+1}}) \geq d(M_{p_{i+1}},M_{p_{i+2}})$.

Thus, $\forall i \in \{1,\dots,m-1\}$,
$d(p_i,p_{i+1}) < L/n$.
Therefore, $d(p_1,p_m) \leq (m-1)L/n < L$:
contradiction. Thus, the result follows.
\end{proof}

\begin{lemma}
\label{lem_k_n}
Let $f = 1$, and let $X$ be the position of crashed processes.
Let $L = \max_{p \in P} d(X,M_p)$.
Let $L(t)$ be the value of $L$ at time $t$.
Suppose that all correct processes are attracted.
Then, for any time $t$, $L(t+1) \leq k(n)L(t)$,
where $k(n) = \sqrt{1 - 1/(2n)^2}$.

\end{lemma}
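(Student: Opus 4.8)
The plan is to fix a point $X$ occupied by the crashed processes (recall $f=1$, so all crashed processes share a single position $X$, which never moves) and to track, for each correct process $p$, how its distance to $X$ can change in one MM step. Let $p$ be a correct process at time $t$ attaining (or near) the maximum distance $L(t)=d(X,M_p)$. Since all correct processes are attracted, $C(p)$ is some other process $q$, and by Lemma~\ref{lem_l_sur_n} we have $d(M_p,M_q)\ge L(t)/n$ — the step length of $p$ is a controlled fraction of $L(t)$. The new position of $p$ is $m(M_p,M_q)$, so I would estimate $d(X,m(M_p,M_q))$ from above using the triangle/parallelogram identity together with the two facts (i) $d(X,M_p)\le L(t)$, $d(X,M_q)\le L(t)$ and (ii) $d(M_p,M_q)\ge L(t)/n$.

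The key computation is the following elementary geometric fact: if $d(X,P_1)\le L$, $d(X,P_2)\le L$ and $d(P_1,P_2)\ge \ell$, then $d\big(X,m(P_1,P_2)\big)\le \sqrt{L^2-\ell^2/4}$. This is just Apollonius' median-length identity, $d(X,m(P_1,P_2))^2 = \tfrac12 d(X,P_1)^2 + \tfrac12 d(X,P_2)^2 - \tfrac14 d(P_1,P_2)^2 \le L^2 - \ell^2/4$; the worst case is when both endpoints lie on the sphere of radius $L$ about $X$ and are as far apart as allowed. Applying this with $\ell = L(t)/n$ gives $d(X,m(M_p,M_q)) \le \sqrt{L(t)^2 - L(t)^2/(4n^2)} = L(t)\sqrt{1-1/(2n)^2} = k(n)L(t)$. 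Since this bound holds for every correct process (the one realizing $L(t)$ has step length $\ge L(t)/n$ by Lemma~\ref{lem_l_sur_n}, and any other correct process has $d(X,M_{p'})\le L(t)$ to begin with and only moves toward the midpoint of two points within distance $L(t)$ of $X$, hence stays within $L(t)\le $ anything $\le k(n)L(t)$ is false — so I must be slightly more careful here), the maximum at time $t+1$ is at most $k(n)L(t)$.

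The main obstacle is precisely that last point: Lemma~\ref{lem_l_sur_n} only guarantees a long step for processes far from $X$, so I must argue the bound $d(X,M_{p'}(t+1))\le k(n)L(t)$ for \emph{every} correct process $p'$, not only the extremal one. The clean way is: $L(t+1) = \max_{p'} d(X,M_{p'}(t+1))$ is attained at some correct $p'$; that $p'$ moved to $m(M_{p'},M_{C(p')})$ with both endpoints at distance $\le L(t)$ from $X$; and by Lemma~\ref{lem_l_sur_n} applied to $p'$ with its own distance $d(X,M_{p'})\le L(t)$ we get $d(M_{p'},M_{C(p')})\ge d(X,M_{p'})/n$ — but that lower bound is in terms of $d(X,M_{p'})$, not $L(t)$, so the midpoint estimate yields $d(X,m(\cdot))\le \sqrt{d(X,M_{p'})^2 - d(X,M_{p'})^2/(4n^2)} = k(n)\,d(X,M_{p'}) \le k(n)L(t)$. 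This resolves the issue uniformly. I would also need the trivial remark that the crashed position $X$ does not move (so $L$ is measured against a fixed point) and that $d(X,M_q)\le L(t)$ for the neighbor $q$ as well, both immediate from the definition of $L(t)$ as a maximum over all processes. Assembling these gives $L(t+1)\le k(n)L(t)$.
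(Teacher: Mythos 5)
Your overall strategy is the one the paper uses: identify the process attaining $L(t+1)$, write its new position as the midpoint of a segment whose endpoints are within $L(t)$ of $X$, lower-bound the segment length via Lemma~\ref{lem_l_sur_n}, and apply the median-length (Apollonius/Pythagorean) estimate. You also correctly spot the main subtlety, namely that Lemma~\ref{lem_l_sur_n} gives a step length proportional to a process's \emph{own} distance to $X$, not to $L(t)$. However, your resolution of that subtlety contains a genuine error. You claim that for the extremal process $p'$ the midpoint estimate yields $d\bigl(X,m(M_{p'},M_{C(p')})\bigr)\le\sqrt{d(X,M_{p'})^2-d(X,M_{p'})^2/(4n^2)}=k(n)\,d(X,M_{p'})$. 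But your own ``key computation'' requires \emph{both} endpoints of the segment to lie within the stated radius of $X$, and nothing forces $d(X,M_{C(p')})\le d(X,M_{p'})$: the closest neighbor of $p'$ can be farther from $X$ than $p'$ is. In that case the inequality $d(X,m(\cdot))\le k(n)\,d(X,M_{p'})$ is simply false (take $d(X,M_{p'})$ small and $d(X,M_{C(p')})$ close to $L(t)$; the midpoint is then at distance roughly $L(t)/2$ from $X$, far exceeding $k(n)\,d(X,M_{p'})$).

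The correct patch --- and what the paper does --- is to set $L'=\max\bigl(d(X,A),d(X,B)\bigr)$ where $A,B$ are the two endpoints of the segment, and to apply Lemma~\ref{lem_l_sur_n} to whichever endpoint realizes $L'$. That process's distance to its own closest neighbor is at least $L'/n$, and since the other endpoint of the segment is an occupied position, the segment length $d(A,B)$ is at least that closest-neighbor distance, hence $d(A,B)\ge L'/n$. Now both endpoints are within $L'$ of $X$ and the segment has length at least $L'/n$, so Apollonius gives $d(X,K)\le\sqrt{L'^2-(L'/(2n))^2}=k(n)L'\le k(n)L(t)$, uniformly in which endpoint was farther. (One can check that your weaker hypotheses $a=d(X,M_{p'})\le L(t)$, $b=d(X,M_{C(p')})\le L(t)$, $\ell\ge a/n$ do still imply the final bound $\le k(n)L(t)$ after a short computation with $\tfrac{a^2+b^2}{2}-\tfrac{\ell^2}{4}$, but that is not the argument you wrote, and the intermediate inequality you assert does not hold.)
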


\begin{proof}
At time $t + 1$, let $p$ be a process such that $d(X,M_p)$ $= L(t+1)$.
Let $K$ be the position of $p$ at $t+1$.
Then, according to the algorithm, at time $t$,
there exists two processes $q$ and $r$ at position $A$ and $B$
such that $K = m(A,B)$.

Let $L' = \max(d(X,A),d(X,B))$.
Let $q' \in \{q,r\}$ be such that
$d(X,M_{q'}) = L'$.
Then, according to Lemma~\ref{lem_l_sur_n},
$d(M_{q'},M_{C(q')}) \geq L'/n$.
Let $r'$ be the other process of $\{q,r\}$. Then, the position of $r$ maximizing $d(X,K)$ is such that
$d(X,M_{r'}) = L'$.

Therefore, according to the Pythagorean theorem, $(L(t+1))^2$ is at most
$L'^2 - (L'/(2n))^2$,
and $L(t+1) \leq k(n)L' \leq k(n)L(t)$.
Thus, the result follows.
\end{proof}

\begin{lemma}
\label{lemconv_f1}
If $f = 1$, the MM algorithm solves the convergence problem.
\end{lemma}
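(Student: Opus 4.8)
The plan is to reduce the $f = 1$ case to the general convergence machinery already built up in Section~\ref{seclb}, using the crashed position as a fixed anchor. Let $X$ be the position of the crashed processes (a single point since $f = 1$), which never moves. First I would invoke Lemma~\ref{lem_t_a} to obtain a time $t_A$ after which all correct processes are attracted; from that point on, Lemma~\ref{lem_k_n} applies and gives $L(t+1) \leq k(n) L(t)$ for all $t \geq t_A$, where $k(n) = \sqrt{1 - 1/(2n)^2} \in ]0,1[$ is a fixed constant depending only on $n$. Hence $L(t)$ decreases geometrically: $L(t_A + j) \leq k(n)^j L(t_A) \to 0$ as $j \to \infty$.

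Next I would argue that this forces convergence to the point $X$ itself. For any $\epsilon > 0$, pick $j$ large enough that $k(n)^j L(t_A) \leq \epsilon$, and set $T = t_A + j$. Then for every $t \geq T$ we have $L(t) \leq L(T) \leq \epsilon$ (using that $L(t)$ is non-increasing once all processes are attracted, which follows from $k(n) < 1$ in Lemma~\ref{lem_k_n}), so every correct process satisfies $d(X, M_p) \leq \epsilon$; the crashed processes are exactly at $X$, so all processes — and hence all of $\Omega$ — are $(X,\epsilon)$-gathered for all $t \geq T$. Taking $G = X$ in the definition of the convergence problem completes the argument. One small point to handle cleanly: before time $t_A$ the positions could behave differently, but convergence only requires the existence of \emph{some} tail on which we are $(G,\epsilon)$-gathered, so the finite prefix $t < t_A$ is irrelevant.

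The main obstacle — and the only place requiring real care — is the transition at $t_A$: I must be sure that Lemma~\ref{lem_t_a} really does give a \emph{permanent} state where all correct processes are attracted (not merely ``infinitely often''), since Lemma~\ref{lem_k_n}'s hypothesis is ``all correct processes are attracted'' and it must hold at every $t \geq t_A$ for the geometric decay to chain. The statement of Lemma~\ref{lem_t_a} is phrased exactly this way, so the chaining is legitimate, but I would state explicitly that we apply Lemma~\ref{lem_k_n} at each $t \geq t_A$. A secondary subtlety is that $L(t)$ must be shown non-increasing on $[t_A, \infty)$ rather than just $L(t_A + j) \leq k(n)^j L(t_A)$ at the sampled times — but this is immediate since $k(n) < 1$ implies $L(t+1) \leq k(n) L(t) \leq L(t)$ for every such $t$. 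Everything else is a routine geometric-series estimate, so the proof is short.

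\begin{proof}
By Lemma~\ref{lem_t_a}, there exists a time $t_A$ such that, for every $t \geq t_A$, all correct processes are attracted. Let $X$ be the (unique, since $f = 1$) position of the crashed processes; note that $X$ is fixed for all time, as crashed processes do not move. Let $L(t) = \max_{p \in P} d(X, M_p)$. For every $t \geq t_A$, Lemma~\ref{lem_k_n} applies and gives $L(t+1) \leq k(n) L(t)$, where $k(n) = \sqrt{1 - 1/(2n)^2}$. Since $n \geq 1$, we have $k(n) \in [0,1[$, so in particular $L(t+1) \leq L(t)$ for all $t \geq t_A$, and by induction $L(t_A + j) \leq k(n)^j L(t_A)$ for all $j \geq 0$.

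Let $G = X$. Fix $\epsilon > 0$. If $L(t_A) = 0$ the processes are already $(G,0)$-gathered for all $t \geq t_A$; otherwise choose $j$ large enough that $k(n)^j L(t_A) \leq \epsilon$, and set $T = t_A + j$. Then $L(T) \leq \epsilon$, and since $L$ is non-increasing on $[t_A,\infty)$, $L(t) \leq \epsilon$ for all $t \geq T$. Hence, for all $t \geq T$ and all $M \in \Omega(t)$, we have $d(G,M) \leq L(t) \leq \epsilon$, so the processes are $(G,\epsilon)$-gathered. As $\epsilon > 0$ was arbitrary, the MM algorithm solves the convergence problem when $f = 1$.
\end{proof}
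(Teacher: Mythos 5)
Your proposal is correct and follows essentially the same route as the paper: invoke Lemma~\ref{lem_t_a} to get a time $t_A$ after which all correct processes are attracted, then chain Lemma~\ref{lem_k_n} to obtain geometric decay of $L(t)$ toward the crashed position $X$, and take $G = X$. Your version is in fact slightly more explicit than the paper's about the monotonicity of $L$ on $[t_A,\infty)$, which is needed to get $(G,\epsilon)$-gathering for \emph{all} $t \geq T$ rather than just at one instant.
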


\begin{proof}

According to Lemma~\ref{lem_t_a},
there exists a time $t_A$ after which all correct processes are attracted.
We now suppose that $t \geq t_A$.

Let $\epsilon > 0$.
Let $X$ be the position of crashed processes, and let
$L = \max_{p \in P} d(X,M_p)$.
As $k(n) =$ \\ $\sqrt{1 - 1/(2n)^2} < 1$,
let $M$ be such that
$k(n)^M L < \epsilon$.
Then, according to
Lemma~\ref{lem_k_n},
at time $t_A + M$, all processes are at distance at most $\epsilon$ from $X$. Thus, the result follows.
\end{proof}

\textbf{Theorem~\ref{th_fault_2}.}
\emph{The convergence problem can be solved if and only if $f \leq 1$. When $f \leq 1$, the MM algorithm solves it.}

\begin{proof}
When $f \geq 2$,
there exists at least two crashed processes that will stay at the same position forever. Thus, the convergence problem cannot be solved. 

When $f \leq 1$, according to Lemma~\ref{lemconv_f1},
the MM algorithm solves the convergence problem. Thus, the result follows.
\end{proof}

\section{Conclusion}
\label{sec_conc}

In this paper, we revisited the gathering and convergence problems with a minimalist hypothesis on visibility. We showed that this model only allows a small number of processes to converge, but requires an additional symmetry-breaking hypothesis to gather an arbitrarily large number of processes.
For the convergence problem, up to one crash failure can be tolerated.

This first work can be the basis for many extensions. For instance, we could consider a more general scheduler (e.g. asynchronous). We could investigate how resilient this model is to crash or Byzantine failures. We could also consider the case of voluminous processes, that cannot be reduced to one geometrical point.

\section*{Acknowledgments}

This work has been supported by the Swiss National Science Foundation
(Grant 200021\_169588 TARBDA).

\bibliographystyle{plain}
%\bibliography{biblio}

\begin{thebibliography}{10}

\bibitem{z2}
Yehuda Afek, Noga Alon, Omer Barad, Eran Hornstein, Naama Barkai, and Ziv
  Bar-Joseph.
\newblock A biological solution to a fundamental distributed computing problem.
\newblock {\em Science}, 331(6014):183--185, 2011.

\bibitem{u2}
Noa Agmon and David Peleg.
\newblock Fault-tolerant gathering algorithms for autonomous mobile robots.
\newblock In {\em Proceedings of the Fifteenth Annual {ACM-SIAM} Symposium on
  Discrete Algorithms, {SODA} 2004, New Orleans, Louisiana, USA, January 11-14,
  2004}, pages 1070--1078, 2004.

\bibitem{b2}
Hideki Ando, Yoshinobu Oasa, Ichiro Suzuki, and Masafumi Yamashita.
\newblock Distributed memoryless point convergence algorithm for mobile robots
  with limited visibility.
\newblock {\em {IEEE} Trans. Robotics and Automation}, 15(5):818--828, 1999.

\bibitem{b3}
Hideki Ando, Ichiro Suzuki, and Masafumi Yamashita.
\newblock Formation and agreement problems for synchronous mobile robots with
  limited visibility, 09 1995.

\bibitem{z1}
Ozalp Babaoglu, Geoffrey Canright, Andreas Deutsch, Gianni A.~Di Caro,
  Frederick Ducatelle, Luca~M. Gambardella, Niloy Ganguly, M\'{a}rk Jelasity,
  Roberto Montemanni, Alberto Montresor, and Tore Urnes.
\newblock Design patterns from biology for distributed computing.
\newblock {\em ACM Trans. Auton. Adapt. Syst.}, 1(1):26--66, September 2006.

\bibitem{u3}
Mark Cieliebak, Paola Flocchini, Giuseppe Prencipe, and Nicola Santoro.
\newblock Solving the robots gathering problem.
\newblock In {\em Automata, Languages and Programming, 30th International
  Colloquium, {ICALP} 2003, Eindhoven, The Netherlands, June 30 - July 4, 2003.
  Proceedings}, pages 1181--1196, 2003.

\bibitem{u4}
Mark Cieliebak and Giuseppe Prencipe.
\newblock Gathering autonomous mobile robots.
\newblock In {\em {SIROCCO} 9, Proceedings of the 9th International Colloquium
  on Structural Information and Communication Complexity, Andros, Greece, June
  10-12, 2002}, pages 57--72, 2002.

\bibitem{u5}
Reuven Cohen and David Peleg.
\newblock Robot convergence via center-of-gravity algorithms.
\newblock In {\em Structural Information and Communication Complexity, 11th
  International Colloquium , {SIROCCO} 2004, Smolenice Castle, Slowakia, June
  21-23, 2004, Proceedings}, pages 79--88, 2004.

\bibitem{x2}
F.~F. Darling.
\newblock Bird flocks and the breeding cycle; a contribution to the study of
  avian sociality.
\newblock {\em Oxford, England: Macmillan}, 1999.

\bibitem{x3}
Marco Dorigo and Luca~Maria Gambardella.
\newblock Ant colonies for the travelling salesman problem.
\newblock {\em Biosystems}, 43(2):73 -- 81, 1997.

\bibitem{b1}
Paola Flocchini, Giuseppe Prencipe, Nicola Santoro, and Peter Widmayer.
\newblock Gathering of asynchronous robots with limited visibility.
\newblock {\em Theor. Comput. Sci.}, 337(1-3):147--168, 2005.

\bibitem{b4}
Shouwei Li, Friedhelm {Meyer auf der Heide}, and Pavel Podlipyan.
\newblock The impact of the gabriel subgraph of the visibility graph on the
  gathering of mobile autonomous robots.
\newblock In {\em Algorithms for Sensor Systems - 12th International Symposium
  on Algorithms and Experiments for Wireless Sensor Networks, {ALGOSENSORS}
  2016, Aarhus, Denmark, August 25-26, 2016, Revised Selected Papers}, pages
  62--79, 2016.

\bibitem{u6}
Kazuo Sugihara and Ichiro Suzuki.
\newblock Distributed algorithms for formation of geometric patterns with many
  mobile robots.
\newblock {\em J. Field Robotics}, 13(3):127--139, 1996.

\bibitem{u7}
Ichiro Suzuki and Masafumi Yamashita.
\newblock Distributed anonymous mobile robots: Formation of geometric patterns.
\newblock {\em {SIAM} J. Comput.}, 28(4):1347--1363, 1999.

\bibitem{x1}
Trevor~J. Willis, Russell~B. Millar, and Russell~C. Babcock.
\newblock Detection of spatial variability in relative density of fishes:
  comparison of visual census, angling, and baited underwater video.
\newblock {\em Marine Ecology Progress Series}, 198:249--260, 2000.

\bibitem{u1}
Yukiko Yamauchi, Taichi Uehara, Shuji Kijima, and Masafumi Yamashita.
\newblock Plane formation by synchronous mobile robots in the three-dimensional
  euclidean space.
\newblock {\em J. {ACM}}, 64(3):16:1--16:43, 2017.

\end{thebibliography}

\end{document}